\documentclass[review,hidelinks,onefignum,onetabnum]{siamart220329}
\nolinenumbers
\usepackage{lipsum}
\usepackage{amsfonts}
\usepackage{graphicx}
\usepackage{epstopdf}
\usepackage{algorithmic}
\usepackage{subcaption}
\usepackage{wrapfig}
\usepackage{bm, mathtools, nicematrix, blkarray, tikz}
\usepackage{xcolor, cancel, booktabs}    
 \usepackage[most]{tcolorbox}
\usepackage[capitalize,nameinlink]{cleveref}
\usepackage[acronyms, shortcuts]{glossaries}
\usepackage{siunitx}
\usepackage{amssymb, multicol, empheq, bm, bbm}
\usepackage{enumitem}
\usetikzlibrary{matrix,positioning,arrows.meta,calc,fit,quotes, shapes, patterns, decorations.pathreplacing, shadows}

\usepackage{multirow}
 \usepackage{siunitx} 
\ifpdf
  \DeclareGraphicsExtensions{.eps,.pdf,.png,.jpg}
\else
  \DeclareGraphicsExtensions{.eps}
\fi

\newsiamremark{remark}{Remark}
\newsiamremark{assumption}{Assumption}
\newsiamremark{hypothesis}{Hypothesis}
\crefname{hypothesis}{Hypothesis}{Hypotheses}
\newsiamthm{claim}{Claim}

\headers{Breaking Exponential Complexity in GOOP}{
D. Lee, J. Li, L. Peters, G. Bakirtzis, and D. Fridovich-Keil
}

\title{Breaking Exponential Complexity in Games of Ordered Preference: A Tractable Reformulation\thanks{
This is a preprint manuscript. 
\funding{D.H. Lee, J. Li, and D. Fridovich-Keil are partially supported by the Army Research Laboratory under Cooperative Agreements W911NF-23-2-0011 and W911NF-25-2-0021, and by the National Science Foundation under Grants 2211548 and 2336840. G. Bakirtzis is  partially supported by the academic and research chair
\emph{Architecture des Systèmes Complexes} through the following partners: Dassault Aviation, Naval
Group, Dassault Systèmes, KNDS France, Agence de l'Innovation de Défense, and
Institut Polytechnique de Paris.}}}

\author{
Dong Ho Lee\footnotemark[2]\thanks{Aerospace Engineering and Engineering Mechanics, University of Texas at Austin, Austin, TX
78712-1221 USA 
(\email{leedh0124@utexas.edu},
\email{dfk@utexas.edu}.
}
\and
Jingqi Li\footnotemark[3]\thanks{Oden Institute for Computational Engineering and Sciences, University of Texas at Austin, Austin, TX 78712-1221 USA 
(\email{jingqi.li@austin.utexas.edu}.}
\and
Lasse Peters\footnotemark[4]\thanks{Department of Mechanical Engineering, UC Berkeley, Berkeley, CA 94720-1740 USA 
(\email{lasse.peters@mailbox.org)}.}
\and 
Georgios Bakirtzis\footnotemark[5]\thanks{LTCI, Télécom Paris, Institut Polytechnique de Paris, France 
(\email{bakirtzis@telecom-paris.fr)}.}
\and 
David~Fridovich-Keil\footnotemark[2] \footnotemark[3]
}

\usepackage{amsopn}

\newcommand{\argmin}{\mathop{\rm arg\,min}}

\newcommand{\st}{\mathrm{s.t.}~}

\newcommand{\var}{z}

\newcommand{\numplayers}{N}
\newcommand{\numlevels}{K}

\newcommand{\cost}[1]{J^{#1}}

\newcommand{\equality}{h}
\newcommand{\equaldim}{m_{\mathcal{E}}}
\newcommand{\inequality}{g}
\newcommand{\inequaldim}{m_{\mathcal{I}}}

\newcommand{\Lagrangian}{\mathcal{L}}
\newcommand{\R}{\mathbb{R}}
\newcommand{\X}{\mathcal{Z}}

\newcommand{\klevel}[1]{#1^{\mathrm{th}}\text{-level}}

\newacronym[longplural=open-loop Nash equilibria,plural=OLNE]{olne}{OLNE}{open-loop Nash equilibrium}
\newacronym[longplural=games of ordered preference,plural=GOOPs]{goop}{GOOP}{game of ordered preference}
\newacronym{micp}{MiCP}{mixed complementarity problem}
\newacronym{kkt}{KKT}{Karush-Kuhn-Tucker}
\newacronym{mpec}{MPEC}{mathematical program with equilibrium constraints}
\newacronym{mpcc}{MPCC}{mathematical program with complementarity constraints}
\newacronym{gnep}{GNEP}{generalized Nash equilibrium problem}
\newacronym[longplural=generalized Nash equilibria,plural=GNEs]{gne}{GNE}{generalized Nash equilibrium}
\newacronym{nlp}{NLP}{nonlinear programming}
\newacronym{mfcq}{MFCQ}{Mangasarian-Fromowitz constraint qualification}
\newacronym{licq}{LICQ}{linear independence constraint qualification}
\newacronym{cq}{CQ}{constraint qualification}

\crefformat{equation}{(#2#1#3)}
\Crefname{algocfline}{Algorithm}{Algorithms}
\Crefname{algocf}{line}{lines}
\Crefname{assumption}{Assumption}{Assumptions}
\crefrangeformat{assumption}{Assumptions~#3#1#4-#5#2#6}

\newif\ifdissertation
\dissertationfalse % default: journal
\ifpdf
\hypersetup{
  pdftitle={Breaking Exponential Complexity in Games of Ordered Preference: A Tractable Reformulation},
  pdfauthor={Dong Ho Lee, Jingqi Li, Lasse Peters, Georgios Bakirtzis, David Fridovich-Keil}
}
\fi

\begin{document}

\maketitle
% REQUIRED
\begin{abstract}
Games of ordered preference (GOOPs) model multi-player equilibrium problems in which each player maintains a distinct hierarchy of strictly prioritized objectives. % over a shared \david{is it shared, really? they may each have their own strategy space, no?} strategy space. 
Existing approaches solve GOOPs %\david{no need to redefine so soon! also plz rm the link} 
by deriving and enforcing the necessary optimality conditions that characterize lexicographically constrained Nash equilibria through a single-level reformulation.
However, the number of primal and dual variables in the resulting  \ac{kkt} system grows exponentially with the number of preference levels, leading to severe scalability challenges.
%and conditioning \david{why mention conditioning?} challenges. 
We derive a compact reformulation of these necessary conditions that preserves the essential primal stationarity structure across hierarchy levels, yielding a ``reduced'' KKT system whose size grows polynomially with both the number of players and the number of preference levels. 
The reduced system constitutes a relaxation of the complete KKT system, yet it remains a valid necessary condition for local GOOP equilibria. 
For GOOPs with quadratic objectives and linear constraints, we prove that the primal solution sets of the reduced and complete KKT systems coincide.
% KKT system is solution-set equivalence with the full KKT system. 
More generally, for GOOPs with arbitrary (but smooth) nonlinear objectives and constraints, the reduced KKT conditions recover all local GOOP equilibria but may admit spurious non-equilibrium solutions. 
% To eliminate such solutions, we introduce a second-order sufficient condition that restores primal solution exactness.
We introduce a second-order sufficient condition to certify when a candidate point corresponds to a local GOOP equilibrium.
We also develop a primal-dual interior-point method for computing a local GOOP equilibrium with local quadratic convergence. The resulting framework enables scalable and efficient computation of GOOP equilibria beyond the tractable range of existing exponentially complex formulations.
\end{abstract}

% REQUIRED
\begin{keywords}
Lexicographic preferences, hierarchical games, mathematical programming
% example, \LaTeX
\end{keywords}

% REQUIRED
\begin{MSCcodes}
49K99, 91A65, 90C99 % necessary and sufficient condition for optimality, hierarchical games,   mathematical programming
\end{MSCcodes}

% \input{siam_tex/introduction.tex}
% \input{siam_tex/related_works.tex}
% \input{siam_tex/feedback_stackelberg_equilibrium}
% \input{siam_tex/kkt_conditions.tex}
% \input{siam_tex/LQ_games.tex}
% \input{siam_tex/nonlinear_games.tex}
% \input{siam_tex/experiments.tex}
% \input{siam_tex/conclusions.tex}

% \appendix
% \input{siam_tex/appendix}

\section{Introduction}
% \begin{itemize}
%     \item Motivation: hierarchical multi-objective decision making in optimization and games.
%     \item Limitations of classical/naive/original reformulations: exponential growth and computational intractability (due to high condition number)
%     \item Contribution 1: propose a reduced (compact) formulation of GOOP that is provably equivalent (with respect to primal variables) to the original but exhibits polynomial growth in system size. 
%     \item Contribution 2: equivalence proof for (i) quadratic games and (ii) their nonlinear generalizations.
%     \item Contribution 3: develop primal-dual interior-point based solver and performed convergence analysis 
% \end{itemize}
% stewart2021survey
Making decisions under strictly ordered, potentially conflicting, objectives is a central problem in optimization.
In many applications---ranging from autonomous driving \cite{zanardi2021urban} and power systems \cite{camargo2019lexicographic} to supply chains \cite{YUE201781}---higher-priority objectives must be satisfied before lower-priority ones are considered.
Lexicographic optimization \cite{isermann1982linear,camargo2019lexicographic,cococcioni2018lexicographic} provides a natural framework for such problems by imposing a strict priority ordering among multiple objectives for a single decision-maker.
In contrast, classical multi-objective optimization \cite{hwang2012multiple,gunantara2018review,ehrgott2005multicriteria,miettinen1999nonlinear} studies competing objectives for a single decision-maker through Pareto-optimality \cite{ngatchou2005pareto,jones2022multi} and methods such as weighted sums \cite{fishburn1967additive,gunantara2018review}, $\epsilon$-constraint formulations \cite{haimes1971bicriterion}, and goal programming \cite{charnes1955optimal,jones2022multi}. 
In these approaches, improving one objective may degrade another, and no intrinsic priority structure is imposed.

We consider the multi-agent extension of the lexicographic setting, namely \emph{games of ordered preference} (GOOPs) \cite{RAL-GOOP,11423775}, in which each player solves a lexicographically constrained, coupled optimization problem.
% Many practical systems, however, involve multiple strategically coupled decision-makers whose objectives are interdependent.
GOOPs inherit aspects of noncooperative game theory \cite{bacsar1998dynamic,facchinei2003finite,debreu1952social}, but differ fundamentally from standard generalized Nash games~\cite{di2022newton,laine2023computation,li2024computation}: each player faces a hierarchy of ordered objectives rather than a single objective. 
This hierarchical structure makes the computation of equilibrium solutions substantially more challenging, since a GOOP equilibrium must encode both coupling across players and lexicographic optimality within each player’s problem.

A natural route for modeling and solving GOOPs is through multilevel optimization \cite{colson2007overview,lu2016multilevel,sato2021gradient,shafiei2024trilevel}, where lower-level (high-priority) optimality conditions define upper-level (low-priority) feasibility. 
The existing solution strategy \cite{RAL-GOOP} adopts this perspective and recursively replaces lower-level problems with their KKT conditions, thereby obtaining a single-level \ac{mpcc} \cite{scheel2000mathematical,luo1996mathematical,outrata2013nonsmooth}. 
However, the \ac{mpcc} reformulation introduces lower-level dual variables that become primal variables at the upper levels. 
The expanding set of primal variables causes the number of variables and conditions to grow exponentially with the number of preference levels, and severely limits scalability.
By contrast, lexicographic optimization enforces higher-priority optimality by treating objectives sequentially and constraining each subproblem so as not to degrade higher-priority, objective values. 
However, the principle in lexicographic optimization does not extend readily to noncooperative equilibrium settings, where each player’s feasible set and objectives depend on the decisions of the others.

In this paper, we show that the exponential growth of the existing GOOP reformulation is not intrinsic to the equilibrium conditions, but rather a consequence of recursively flattening the problem hierarchy. We derive a \emph{reduced} KKT system that preserves the nested primal stationarity structure while avoiding redundant dual variable propagation. The proposed system grows polynomially in the number of players and preference levels.
% In this paper, we show that this exponential growth is not inherent to GOOP optimality, but rather an artifact of recursively flattening the hierarchy.
% We proopose a compact \emph{reduced} KKT system that preserves the essential nested primal stationarity structure while avoiding redundant dual propagation. 
% The resulting formulation grows polynomially with the number of players and preference levels.
Our contributions are threefold. 
We first show that the reduced system is a relaxation of the complete KKT system.
Second, for quadratic GOOPs with linear constraints, we prove primal solution equivalence between the reduced and complete KKT systems. For GOOPs with arbitrary (but smooth) nonlinear objectives and constraints, we show that the reduced KKT system may potentially admit additional solutions. 
Accordingly, we establish a second-order sufficient condition that certifies local GOOP equilibria.
Finally, we develop a primal-dual interior-point method for the reduced KKT system and prove its local quadratic convergence.

\section{Games of Ordered Preference}
% \begin{itemize}
%     \item Notation and Preliminaries
%     \item Definition of the Game of Ordered Preference: nested optimization structure, discuss the hierarchical decision interpretation and comparison to lexicographic optimization
%     \item GOOP Equilibrium Concept
%     \item First-Order Conditions for Local GOOP Equilibria
%     \begin{itemize}
%         \item Assumptions: differentiability, LICQ and strict complementarity
%         \item Derive KKT conditions for each level $k$ and highlight how dual variables propagate upward. 
%         \item Present the Original GOOP KKT System explicitly. 
%     \end{itemize}
% \end{itemize}

In this section, we formalize \acp{goop}, define the local generalized Nash solution concept, and derive corresponding first-order necessary conditions.
% introduce the mathematical formulation of \acp{goop}. We first introduce the concept of local generalized Nash equilibrium of \ac{goop} and then present the necessary conditions for such equilibria. 
% \yorgos{you don't need to define goop every time, just in the abstract and introduction, then refer to the abbreviation}.

% \paragraph{Notation.}
% Let $[\numplayers]$ denote the set $\{1, ..., \numplayers\}$ and $\horizon$ be a positive integer that represents the time horizon over which the game is played.
% We use boldface to denote a time-indexed vector of length $\horizon$ and superscript $i$ to denote the player $i \in [\numplayers]$.
% To this end, we let $\var^i \coloneqq [z^i_1, z^i_2, ..., z^i_\horizon]$ to be a vector of length $n_i \coloneqq n^i_z \times T$ that represents player $i$'s variables. 
% With $\numplayers$ players, the dimension of the game is $n \coloneqq \sum_{i=1}^\numplayers n_i$.
% We denote $\var \coloneqq [\var^1, ..., \var^2]$ (of length $n$) as the concatenation of all players' variables.  
% We use negation to denote the inclusion of all agents except itself, \ie, $\var^{\neg i} \coloneqq \var \setminus \var^i$, which is of length $n - n_i$.

\subsection{Local Generalized Nash Equilibrium for GOOP}
% \subsection{Mathematical formulation of \acp{goop}}
We consider a non-cooperative \ac{goop} involving $\numplayers$ players. 
Let $[\numplayers]:=\{1, 2, ..., \numplayers\}$.
For a player $i \in [\numplayers]$, we denote their decision variables as a real-valued vector $\var^i \in \mathbb{R}^{n^i}$. 
We define the collection of all players' variables as $\var \coloneqq \{\var^1,\dots,\var^\numplayers\}$, and the collection of all players' variables except those of player $i$ as $\var^{\neg i} \coloneqq \{\var^j\}_{j\neq i}$.
When convenient, we identify $\var$ and $\var^{\neg i}$ as vectors, i.e., \(\var = [(\var^1)^\top, \dots, (\var^\numplayers)^\top]^\top \in \mathbb{R}^{n}\), where \(n \coloneqq \sum_{i=1}^\numplayers n^i\), and $\var^{\neg i} = [(\var^j)^\top]^\top_{j\neq i}\in \mathbb{R}^{n-n^i}$.
% For notational convenience, we use the following shorthand for a collection of players' variables: 
% \begin{align}
%     \var &\coloneqq \{\var^1,\dots,\var^\numplayers\} = \{\var^i, \var^{\neg i}\}\\
%     \var^{\neg i} &\coloneqq \{\var^j\}_{j\in[\numplayers]\setminus\{i\}} = \{\var^1,\dots,\var^{i-1},\var^{i+1},\dots,\var^\numplayers\}
% \end{align}
% We interpret these either as collections of player variables or, when convenient, as vectors: e.g., \(\var \coloneqq [(\var^1)^\top, \dots, (\var^\numplayers)^\top]^\top \in \mathbb{R}^{n}\), where \(n \coloneqq \sum_{i=1}^\numplayers n^i\).
Each player has a hierarchy of strictly prioritized objectives, which we refer to as \emph{preferences}.
% \footnote{A prioritized constraint in the form of $f^i_k(\var^i_{k}, \var^{\neg i}_1) \geq 0$ can be encoded as an objective of the form $\cost{i}_{k}(\var^i_{k}, \var^{\neg i}_1) \coloneqq \max\big(0, -f^i_k(\var^i_{k}, \var^{\neg i}_1)\big)$, which can be made smooth via a standard slack variable transformation; cf. \cite{RAL-GOOP,nocedal2006numerical}.}  
We denote the number of preference levels for player $i$ as $\numlevels^i \in \mathbb{N}$.
Unless stated otherwise, players may have different numbers of preference levels, i.e., $\numlevels^i \neq \numlevels^j$ for $i\neq j$.
% We index levels by $k \in [\numlevels^i]$ and adopt the convention that larger $k$ corresponds to higher priority (i.e., lower in the hierarchy). 
% For each player $i$, we index the hierarchy by $k \in \{1, \dots, K_i\}$, with increasing $k$ corresponding to inner levels of the hierarchy and hence higher priority.
% For a given player $i$, we write $\var^i_k\in\mathbb{R}^{n^i}$ for player $i$'s variable at any preference level $k$. 
% Likewise, we denote the other player's variables at level $k$ by \(\var_k^{\neg i}\coloneqq\{\var_k^j\}_{j\in[\numplayers]\setminus\{i\}}\). 
% Each player $i$’s variables at level $k$ are common among all levels.
% Thus, we often drop the level index at the top level and write $\var^i \coloneqq \var^i_1$. 
% Furthermore, each player's preference at level $k$ is characterized by a real-valued function $\cost{i}_k(\var_k^i,\var^{\neg i}),$ where \(\var^{\neg i}\) collects the other players’ top-level variables.
% Throughout the remainder of this paper, when focusing on player $i$’s problem, we implicitly treat the other players’ variables \(\var^{\neg i}\) as fixed and regard them as parameters.
We denote the primal variable of player $i$ at preference level $k$ as $\var_k^i \in \mathbb{R}^{n^i}$.
Player $i$'s decision variable is common to all preference levels; throughout, we may suppress the level index for the top level ($k=1$) and define $\var^i \coloneqq \var_1^i$ to denote player $i$'s top-level decision variable. Each preference level $k$ is associated with a real-valued objective function
$
\cost{i}_k(\var_k^i, \var^{\neg i}),
$
where $\var^{\neg i}$ denotes the collection of the other players’ top-level variables. 
% \david{can axe this sentence} When analyzing player $i$’s problem, we hold $\var^{\neg i}$ fixed.
% % and treat it as a parameter.
% \jingqi{Define $\arg \min$ as local minimum }
Using $\arg \min$ to denote the argument of \emph{local} minimizers, we state player $i$’s problem in the \ac{goop} framework:
\begin{subequations}
\label{eqn:goop-K-level}
\begin{align}
    \label{eqn:goop-K-level-1}
    \quad \min_{\var^i_1} \quad & \cost{i}_1(\var^i_1, \var^{\neg i}) \\
    \label{eqn:goop-K-level-2}
    \st \quad & \var^i_1 \in \argmin_{\var^i_{2}} \: \cost{i}_{2}(\var^i_{2}, \var^{\neg i}) \\
    % \label{eqn:goop-K-level-3}
    & \phantom{\st\quad\qquad} \ddots \nonumber \\
    \label{eqn:goop-K-level-3}
    & \phantom{\st\qquad\,\,} \st \quad \var^i_{\numlevels^i - 1} \in \argmin_{\var^i_{\numlevels^i}} \: \cost{i}_{\numlevels^i}(\var^i_{\numlevels^i}, \var^{\neg i}) \\
    \label{eqn:goop-K-level-4}
    & \qquad\phantom{\st\qquad\st\qquad\qquad\quad} \st \quad \var^i_{\numlevels^i} \in \X_{\numlevels^i}^i(\var^{\neg i}).
\end{align}
\end{subequations}
The innermost feasible set (at level $\numlevels^i$) is given by 
\begin{align}
\label{eqn:Z^i-feasible-set}
\X_{\numlevels^i}^i(\var^{\neg i}) \coloneqq \big\{\,\var^i \;\big|\; \equality^i(\var^i,\var^{\neg i})=\bm{0},\ \inequality^i(\var^i,\var^{\neg i})\ge \bm{0} \big\},
\end{align}
where $\bm{0}$ denotes a vector (or matrix) of all zeros of appropriate dimension, \(\equality^i(\var^i,\var^{\neg i})\in \R^{\equaldim^i}\) and \(\inequality^i(\var^i,\var^{\neg i})\in \R^{\inequaldim^i}\) are equality and inequality constraints, respectively.
% The notation $\argmin$ in \cref{eqn:goop-K-level} is interpreted as the set of \emph{local} minimizers \david{axe. we totally consider global minimizers in the quadratic case. the place to talk about locality is in the (local) equilibrium definition}.

In this paper, we consider the following assumption: 
% david{unclear why you are citing here} \cite{nocedal2006numerical,chinchilla2024newton}:
\begin{assumption}
\label{assum:cost-feasible-set}
% For each player $i \in [\numplayers]$ and each level $k \in [\numlevels^i]$, 
% the objective function $\cost{i}_k(\var^i, \var^{\neg i})$ is continuously differentiable \david{we need $C^{K^i}$ (right?) which is stronger than continuous differentiability or $C^1$. i guess technically we only need high-order differentiability wrt $z^i$ not $z^{\neg i}$}, 
For each player $i \in [\numplayers]$ and each level $k \in [\numlevels^i]$, we assume the objective function $\cost{i}_k(\var^i,\var^{\neg i})$ is sufficiently differentiable with respect to $\var^i$, but not necessarily convex in $\var^i$.
The innermost feasible set $\X_{\numlevels^i}^i(\var^{\neg i})$
% , defined by the nonlinear constraints $\equality^i(\var^i, \var^{\neg i}) = \bm{0}$ and $\inequality^i(\var^i, \var^{\neg i}) \ge \bm{0}$, 
is compact, and an \ac{mpcc}-tailored constraint qualification, such as \ac{mpcc}-\ac{licq} \cite{scheel2000mathematical}, holds.
% Slater's condition \cite{nocedal2006numerical}~holds \david{why do we need Slater specifically? Also cite a specific definition in the book}.
\end{assumption}

Under the above assumption, we formally define the \emph{local generalized Nash equilibrium} for \acp{goop}. 
This definition extends the local generalized Nash equilibrium concept in \cite{facchinei2003finite,palafox2024smooth} to the lexicographic setting.

\begin{definition}[Local Generalized Nash Equilibrium for \acp{goop}]
\label{def:goop-gnep}
% A \ac{goop} is defined by \(\numplayers\) coupled optimization problems
% \jingqi{replace $J^i$ with $J^i_1$ for GOOP nash equilbrium} 
% \david{I feel like you don't need (3.5) and you can write (3.6) directly in reference to (3.3). It could help to define $\X_k^i(z^{\neg i})$ inline in (3.3b,c)}
% \begin{equation}
% \label{eqn:goop-gnep}
% \forall i \in [\numplayers]\quad
% \left\{
% \begin{alignedat}{2}
% &\min_{\var^i}   && \cost{i}_1(\var^i,\var^{\neg i}) \\
% &\text{s.t.}\quad&& \var^i \in \X_1^i(\var^{\neg i}),
% \end{alignedat}
% \right.
% \end{equation}
% where $\X_1^i(\var^{\neg i})$ represents the feasible set of player $i$'s top-level problem and is characterized by the lower-level subproblems \cref{eqn:goop-K-level-2,eqn:goop-K-level-3,eqn:goop-K-level-4}. 
A vector $\var^*$ is a local generalized Nash equilibrium for \ac{goop} \cref{eqn:goop-K-level}, if for each player $i$, the strategy $\var^{i*}$ is feasible for player $i$'s nested problem
% where $\X_k^i(\var^{\neg i})$ represents the feasible set of player $i$'s $\klevel{k}$ problem \cref{eqn:goop-K-level-2,eqn:goop-K-level-3,eqn:goop-K-level-4} 
and there exists an $\epsilon>0$ such that 
\begin{equation}
\cost{i}_1(\var^{i*},\var^{\neg i*})
\;\le\;
\cost{i}_1(\var^i,\var^{\neg i*}),
\quad \forall \var^i \in\X_1^i(\var^{\neg i*})\ \text{s.t.} \  \| \var^i - \var^{i*} \|_2\le \epsilon,\ \forall i\in[\numplayers], 
\end{equation}
where $\X_1^i(\var^{\neg i*})$ denotes the top-level feasible set that encodes the lexicographic preferences of player $i$. Equivalently, no player can unilaterally locally deviate from $\var^{i*}$ and achieve a better top-level objective given $\var^{\neg i*}$ while remaining consistent with their hierarchy of preferences.
% \dongho{Do we need per level version of (3.5)?}\jingqi{Check Proposition A.1. We don't need and it complicates the story}
\end{definition}

In the rest of this work, we refer to a local generalized Nash equilibrium of a GOOP simply as a local GOOP equilibrium.
In the next subsection, building on \cite{RAL-GOOP}, we review the complete KKT conditions for a local GOOP equilibrium and show that its size grows exponentially with hierarchy depth.
% their exponential growth in the number of variables and constraints. 
% Although this notion is well-defined, the formulation in \cref{eqn:goop-K-level} is not directly amenable to computation.
% \david{I feel like it would be good to swap the previous two sentences, logically speaking.} 
This exponential complexity motivates the construction of the reduced KKT system, presented in \Cref{sec:reduced-kkt-formulation}, whose size grows polynomially in the number of players and preference levels.

\subsection{Complete Necessary Conditions for GOOP Equilibria}
\label{sec:complete-kkt-system}
% \todo{Mention the complexity/size growth, in the experiment section, quote the actual number of variables, etc. in the reduced and original KKT system}
The reformulation in \cite{RAL-GOOP} replaces each lower-level problem in \cref{eqn:goop-K-level} with its \emph{complete} \ac{kkt} optimality conditions; we refer to this reformulation as the complete KKT system.
% The resulting reformulation is referred to as the complete KKT system---in contrast to the ``reduced'' KKT system
% whose introduction in \Cref{sec:reduced-kkt-formulation} is a key contribution of this work.
Construction proceeds by backward induction for each player, starting from the innermost level (highest priority) and recursively deriving the KKT systems for outer levels.
For clarity, we denote quantities of the complete KKT system with an overbar to distinguish them from terms in the reduced KKT system introduced in \Cref{sec:reduced-kkt-formulation}.
% , and we identify its associated variables with an overbar, $\bar{\square}$.

% \paragraph{Notation reference for dual variables}
% At each level $k$, the complete KKT reformulation introduces dual variables
% \[
% \bar\lambda_k^i,\ \bar\gamma_k^i,\ \bar\psi_k^i,\ \bar\phi_k^i,
% \]
% with the following roles. 
% The multiplier $\bar\lambda_k^i$ collects \emph{equality-type} multipliers, including those for the player constraints $\equality^i(\bar\var^i,\bar\var^{\neg i})=0$ and the stationarity conditions with respect to lower-level multipliers (i.e., the constraints $\nabla_{\bar\eta_{k+1}^i}\bar{\Lagrangian}_k^i=0$ that arise when $\bar\eta_{k+1}^i$ is treated as an induced primal). 
% The multiplier $\bar\gamma_k^i$ is associated with the (shared/private) inequality constraints $\inequality^i(\bar\var^i,\bar\var^{\neg i})\ge 0$ and appears in the complementarity equations $\inequality^i(\bar\var^i,\bar\var^{\neg i})\odot \bar\gamma_k^i=0$ together with the nonnegativity conditions $\bar\gamma_k^i\ge 0$. 
% Finally, $\bar\psi_k^i$ and $\bar\phi_k^i$ are the multipliers associated with the lower-level KKT equalities $\bar{\mathcal{F}}_{k+1}^i=0$: specifically, $\bar\psi_k^i$ multiplies the lower-level Lagrangian stationarity/optimality conditions, while $\bar\phi_k^i$ multiplies the lower-level complementarity equations.
% Equivalently, if $\bar{\mathcal{F}}_{k+1}^i$ is stacked as (stationarity; equalities; complementarity), then $(\bar\psi_k^i,\bar\lambda_k^i,\bar\phi_k^i)$ are the corresponding multipliers for these three blocks.

\paragraph{Innermost level $k=\numlevels^i$ for player $i$}
Consider the bottom-level problem in \cref{eqn:goop-K-level-3,eqn:goop-K-level-4}.
Let $\bar\lambda_{\numlevels^i}^i$ and $\bar\gamma_{\numlevels^i}^i$ represent the dual variables for the equality and inequality constraints, respectively. 
To simplify the notation for player $i$'s Lagrangian, we collect the dual variables $\bar\lambda_{\numlevels^i}^i$ and $\bar\gamma_{\numlevels^i}^i$ as $ \bar\eta_{\numlevels^i}^i$, i.e.,
\begin{align}
\label{eqn:complete-bottom-level-eta}
\bar\eta_{\numlevels^i}^i \coloneqq [(\bar\lambda_{\numlevels^i}^i)^\top,\ (\bar\gamma_{\numlevels^i}^i)^\top]^\top \in \mathbb{R}^{\equaldim^i+\inequaldim^i}.
\end{align}
The bottom-level Lagrangian is defined in terms of the primal variables $\bar\var^i$, the dual variables $\bar\eta_{\numlevels^i}^i$, and the other players’ primal variables $\bar\var^{\neg i}$:
\begin{align}
\bar\Lagrangian_{\numlevels^i}^i\left(\bar\var^i, \bar\var^{\neg i}, \bar\eta_{\numlevels^i}^i\right) 
    = \cost{i}_{\numlevels^i}\left( \bar\var^i, \bar\var^{\neg i}\right) - \bar\lambda_{\numlevels^i}^{i\,\top} \equality^i\left( \bar\var^i, \bar\var^{\neg i}\right) - \bar\gamma_{\numlevels^i}^{i\,\top} \inequality^i\left( \bar\var^i, \bar\var^{\neg i}\right).
\end{align}
Let $\odot$ denote the elementwise product.
The bottom-level KKT system is given by:
\begin{align}
\label{eqn:complete-kkt-bottom-level-F}
    \bar{\mathcal{F}}_{\numlevels^i}^i\left(\bar\var^i, \bar\var^{\neg i}, \bar\eta_{\numlevels^i}^i\right) 
    = 
    \begin{bNiceMatrix}[margin,cell-space-limits=1.5pt]
        \nabla_{\bar\var^i} \bar{\Lagrangian}_{\numlevels^i}^i\left(\bar\var^i, \bar \var^{\neg i}, \bar\eta_{\numlevels^i}^i\right) \\
        \equality^i( \bar\var^i, \bar\var^{\neg i})\\
        \inequality^i( \bar\var^i, \bar\var^{\neg i}) \odot \bar\gamma_{\numlevels^i}^i
    \end{bNiceMatrix} = \bm{0},
\end{align}
% and the inequality constraints
\begin{align}
\label{eqn:complete-kkt-bottom-level-G}
\bar{\mathcal{G}}^i_{\numlevels^i}\left( \bar\var^i, \bar\var^{\neg i}, \bar\gamma_{\numlevels^i}^i\right)
&= 
    \begin{bNiceMatrix}
        \inequality^i( \bar\var^i, \bar\var^{\neg i}) \\
        \bar\gamma_{\numlevels^i}^i
    \end{bNiceMatrix} \geq \bm{0}. 
\end{align}

% At any sublevel $k$, where $1\leq k\leq\numlevels^i-1$, the nested optimization problem for player $i$ can be expressed recursively as 
%     \begin{align}
%         \label{eqn:opt-problem-at-k-level}
%         \quad \min_{\var^i_k} \quad & \cost{i}_k(\var_k^i, \var^{\neg i}) \\
%     \st \quad & 
%     \begin{bNiceMatrix}[margin,cell-space-limits=1.5pt]
%     \nabla_{\var^i} \Lagrangian_{k+1}^i(\var_k^i, \var^{\neg i}, \eta_{k+1}^i) \\
%     \vdots \\
%     \nabla_{\var^i} \Lagrangian_{\numlevels^i}^i(\var_k^i, \var^{\neg i}, \eta_{\numlevels^i}^i) \\
%     \equality^i(\var_k^i, z^{\neg i})\\
%     % \inequality^i(\var^i, z^{\neg i}) \odot \gamma_{\numlevels^i:(k+1)}^i
%     \end{bNiceMatrix} = 0, 
%     % \quad  \inequality^i(\var^i, z^{\neg i}) \geq 0,
%     \end{align}
%     where the $k$-level Lagrangian function $\Lagrangian_k$ is given as \cref{eqn:k-level-Lagrangian}.
\paragraph{Upper level $k\le \numlevels^i-1$}
% At any level $k\le\numlevels^i-1$, the subproblem inherits all equality and inequality constraints from the lower levels $k+1,\dots,\numlevels^i$ 
% \david{phrasing needs to explicitly say that each level $k$ inherits both (3.9) and (3.10) as explicit constraints. the issue is that the current phrasing could be interpreted as k inherits only the constraints from lower level problems, not the entire KKT system derived from the objective and constraints of lower level problems}.

At level $k = \numlevels^i-1$, the subproblem inherits the equality constraints \cref{eqn:complete-kkt-bottom-level-F} and inequality constraints \cref{eqn:complete-kkt-bottom-level-G} from the lower level $\numlevels^i$.
Under this reformulation, dual variables $\bar\eta^i_{\numlevels^i}$ introduced at level $\numlevels^i$ reappear in the level-$(\numlevels^i-1)$ problem as additional \emph{primal} variables; we refer to these as \emph{induced primals}.
Consequently, the Lagrangian stationarity conditions are imposed not only with respect to the original primal variables $\bar\var^i$, but also with respect to the 
induced primals $\bar\eta^i_{\numlevels^i}$.
This procedure is then applied recursively at each upper level. 
In what follows, we formalize this construction of the complete KKT system for any level $k$. 

We first introduce shorthand notation for level-indexed variables used in the complete KKT system. 
At each level $k$, the complete KKT system introduces dual variables comprising: (i) $\bar\psi_k^i$, the multipliers associated with lower-level Lagrangian stationarity; (ii) $\bar\phi_k^i$, the multipliers associated with complementarity constraints; (iii) $\bar\lambda_k^i$, the multipliers for innermost-level equality constraints $\equality^i$; and (iv) $\bar\gamma_k^i$, the multipliers for innermost-level inequality constraints $\inequality^i$ and lower-level dual feasibility conditions.
We collect these dual variables into
$\bar{\eta}_k^i \coloneqq [(\bar{\psi}_k^i)^\top, (\bar{\phi}_k^i)^\top, (\bar{\lambda}_k^i)^\top, (\bar{\gamma}_k^i)^\top]^\top$,
and define
$\bar\eta_{k:\numlevels^i}^i \coloneqq [(\bar\eta_k^i)^\top, \ldots, (\bar\eta_{K^i}^i)^\top]^\top$ and 
similarly,
$\bar\gamma_{k:\numlevels^i}^i \coloneqq [(\bar\gamma_k^i)^\top, \ldots, (\bar\gamma_{K^i}^i)^\top]^\top$.
% denotes the aggregated dual variables from the lowest level $K^i$ up to level $k$.
At level $k$, $\bar{\eta}_{k+1:\numlevels^i}^i$ functions as induced primal variables.
The Lagrangian function at level $k$ is given as 
\begin{align}
\label{eqn:complete-Lagrangian-k}
    &\bar \Lagrangian_{k}^i(\bar\var^i, \bar\var^{\neg i}, \bar \eta^i_{k:\numlevels^i}) = \cost{i}_{k}(\bar\var^i, \bar\var^{\neg i}) - \bar\lambda_{k}^{i\top}h^i(\bar\var^i, \bar\var^{\neg i}) - \bar\gamma_{k,1}^{i\top}g^i(\bar\var^i, \bar\var^{\neg i}) \\ 
    &- \bar\psi_{k}^{i\top} \begin{bmatrix}
        \nabla_{\bar\var^i} \bar\Lagrangian^i_{k+1} \\[0.5ex]
        \nabla_{\bar\eta^i_{k+2:\numlevels^i}}\bar\Lagrangian^i_{k+1} \\[0.5ex]
        % \nabla_{\bar\var^i} \bar\Lagrangian^i_{k+2} \\[0.5ex]
        % \nabla_{\bar\eta^i_{k+3:\numlevels^i}}\bar\Lagrangian^i_{k+2} \\
        \vdots \\
        \nabla_{\bar\var^i} \bar\Lagrangian^i_{\numlevels^i}
    \end{bmatrix}
    - \bar\phi_{k}^{i\top} \begin{bmatrix}
        g^i(\bar\var^i, \bar\var^{\neg i}) \odot \bar\gamma^i_{k+1,1} \\[0.5ex]
        \bar\gamma^i_{k+2:\numlevels^i} \odot \bar\gamma^i_{k+1,2} \\[0.5ex]
        % g^i(\bar\var^i, \bar\var^{\neg i}) \odot \bar\gamma^i_{k+2,1} \\[0.5ex]
        % \bar\gamma^i_{k+3:\numlevels^i} \odot \bar\gamma^i_{k+2,\cdot} \\
        \vdots \\
        % g^i(\bar\var^i, \bar\var^{\neg i}) \odot \bar\gamma^i_{\numlevels^i-1,1} \\[0.5ex]
        % \bar\gamma^i_{\numlevels^i} \odot \bar\gamma^i_{\numlevels^i-1,\cdot} \\
        g^i(\bar\var^i, \bar\var^{\neg i}) \odot \bar\gamma^i_{\numlevels^i}
    \end{bmatrix} - \bar\gamma_{k,\cdot}^{i\top} \begin{bmatrix}
        \bar \gamma^i_{k+1} \\
        \vdots \\
        \bar \gamma^i_{\numlevels^i} \notag
    \end{bmatrix}. \notag
\end{align}
Using \cref{eqn:complete-Lagrangian-k}, the complete KKT system at level $k$ is 
\begin{align}
\label{eqn:k-level-complete-kkt-F}
\bar{\mathcal{F}}_{k}^i\bigl(\bar\var^{i}, \bar\var^{\neg i}, \bar \eta_{k:\numlevels^i}^i\bigr) &=  \begin{bNiceMatrix}[margin,cell-space-limits=1.5pt]
        \nabla_{\bar\var^{i}} \bar{\Lagrangian}_{k}^i\left(\bar\var^{i}, \bar \var^{\neg i}, \bar \eta_{k:\numlevels^i}^i\right) \\
        \nabla_{\bar \eta_{k+1:\numlevels^i}^i} \bar{\Lagrangian}_{k}^i\left(\bar\var^{i}, \bar \var^{\neg i}, \bar \eta_{k:\numlevels^i}^i\right) \\
        \bar{\mathcal{G}}_{k+1}^i( \bar\var^{i}, \bar\var^{\neg i}, \bar \gamma^i_{k+1:\numlevels^i}) \odot \bar\gamma_{k}^i \\
        \bar{\mathcal{F}}_{k+1}^i\left(\bar\var^{i}, \bar\var^{\neg i}, \bar \eta_{k+1:\numlevels^i}^i\right) 
    \end{bNiceMatrix} = \bm{0}, \\
\label{eqn:k-level-complete-kkt-G}
\bar{\mathcal{G}}_{k}^i\bigl(\bar\var^{i}, \bar\var^{\neg i}, \bar\gamma_{k:\numlevels^i}^i\bigr) &=  \begin{bNiceMatrix}[margin,cell-space-limits=1.5pt]
        \inequality^i( \bar\var^{i}, \bar\var^{\neg i})  \\
        \bar\gamma_{k:\numlevels^i}^i
    \end{bNiceMatrix} \geq \bm{0}. 
\end{align}
Observe that at any level $k\le\numlevels^i-1$, the subproblem inherits all equality \cref{eqn:k-level-complete-kkt-F} and inequality \cref{eqn:k-level-complete-kkt-G} constraints from the lower level $k+1$.
By this nesting property, the complete KKT system for player $i$ is achieved at the top level, $(\bar{\mathcal F_1^i},\bar{\mathcal G_1^i})$, which we denote by $(\bar{\mathcal F}^i,\bar{\mathcal G}^i)$ for convenience.
Aggregating $(\bar{\mathcal F}^i,\bar{\mathcal G}^i)$ over players yields the complete KKT system $(\bar{\mathcal F},\bar{\mathcal G})$, which provides necessary conditions for a local \ac{goop} equilibrium under \cref{assum:cost-feasible-set}.
We present the complete KKT system next. 
% \david{All proofs are deferred to the appendix?}

\begin{theorem}[Complete necessary conditions, cf. \cite{RAL-GOOP}]
\label{thm:complete-kkt-system-necessary-conditions}
% Suppose that $\bar\var^{i*}$ is a local solution to player $i$'s problem \cref{eqn:goop-K-level} and that the functions $\cost{i}_1,\dots,\cost{i}_{\numlevels^i}$, $\equality^i$ and $\inequality^i$ in \cref{eqn:goop-K-level} are continuously differentiable. 
Suppose that $\bar\var^{i*}$ is a local solution to player $i$'s GOOP problem \cref{eqn:goop-K-level} under \cref{assum:cost-feasible-set}. 
% and a constraint qualification such as the \ac{mpcc}-\ac{licq} holds at \(\bar\var^{i*}\) for each level $k \in [\numlevels^i]$.
Then, for each level $k \in [\numlevels^i]$, there exist (i) induced primal variables from lower levels $ \bar \eta_{k+1:\numlevels^i}^i$, 
and (ii) dual variables $ \bar \eta_k^i$
such that the conditions in \cref{eqn:k-level-complete-kkt-F,eqn:k-level-complete-kkt-G} are satisfied at $\big(\bar\var^{i*}, \bar\var^{\neg i*}, \bar \eta^i_{k:\numlevels^i}\big)$. 
\end{theorem}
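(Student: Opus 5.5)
The plan is a backward induction on the level index $k$, running from $k=\numlevels^i$ down to $k=1$. Throughout, player $i$'s problem is treated with $\bar\var^{\neg i}=\bar\var^{\neg i*}$ held fixed as a parameter. The inductive claim is the following.

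\emph{Claim:} if $\bar\var^{i*}$ is feasible and locally optimal for the level-$k$ subproblem of \cref{eqn:goop-K-level}, then there exist $\bar\eta^i_{k:\numlevels^i}$ satisfying \cref{eqn:k-level-complete-kkt-F,eqn:k-level-complete-kkt-G}.

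\textbf{Base case $k=\numlevels^i$.} Here the subproblem is the standard nonlinear program of minimizing $\cost{i}_{\numlevels^i}$ over $\X^i_{\numlevels^i}(\bar\var^{\neg i*})$. Under \cref{assum:cost-feasible-set}, the MPCC-LICQ restricted to the innermost constraints reduces to the ordinary LICQ for $(\equality^i,\inequality^i)$. The classical KKT theorem then yields $\bar\lambda^i_{\numlevels^i}$ and $\bar\gamma^i_{\numlevels^i}\ge\bm 0$ such that \cref{eqn:complete-kkt-bottom-level-F,eqn:complete-kkt-bottom-level-G} hold.

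\textbf{Inductive step.} Fix $k<\numlevels^i$. The key observation is that a point feasible for the level-$k$ problem lies in the local argmin of level $k+1$. By the induction hypothesis, the set-valued constraint $\bar\var^i\in\argmin(\cdots)$ can therefore be replaced, as a necessary condition, by the explicit system
\[
\bar{\mathcal F}^i_{k+1}(\bar\var^i,\bar\var^{\neg i*},\bar\eta^i_{k+1:\numlevels^i})=\bm 0,\qquad \bar{\mathcal G}^i_{k+1}\ge\bm 0,
\]
in the lifted variables $(\bar\var^i,\bar\eta^i_{k+1:\numlevels^i})$. This gives the lifted program
\[
\min_{\bar\var^i,\ \bar\eta^i_{k+1:\numlevels^i}} \cost{i}_k \quad \text{s.t.} \quad \bar{\mathcal F}^i_{k+1}=\bm 0,\ \bar{\mathcal G}^i_{k+1}\ge\bm 0.
\]
The argument then has two parts.
\begin{enumerate}
\item[(a)] Show that $(\bar\var^{i*},\bar\eta^{i*}_{k+1:\numlevels^i})$ is a local minimizer of this lifted program for some choice of multipliers $\bar\eta^{i*}_{k+1:\numlevels^i}$ furnished by the induction hypothesis.
\item[(b)] Apply the MPCC first-order (strong-stationarity) theorem under MPCC-LICQ \cite{scheel2000mathematical} to the lifted program. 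Its Lagrangian is exactly $\bar\Lagrangian^i_k$ in \cref{eqn:complete-Lagrangian-k}. The multipliers match the blocks as follows: $\bar\psi^i_k$ pairs with the stacked stationarity rows, $\bar\phi^i_k$ with the complementarity products, $\bar\lambda^i_k$ with $\equality^i$, and $\bar\gamma^i_k$ with the sign constraints $\bar{\mathcal G}^i_{k+1}\ge\bm 0$.
\end{enumerate}
Stationarity with respect to both $\bar\var^i$ and the induced primals $\bar\eta^i_{k+1:\numlevels^i}$ yields the first two blocks of \cref{eqn:k-level-complete-kkt-F}. Complementarity gives the third block, $\bar{\mathcal G}^i_{k+1}\odot\bar\gamma^i_k=\bm 0$. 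The inherited feasibility gives the fourth block, $\bar{\mathcal F}^i_{k+1}=\bm 0$. Finally, $\bar\gamma^i_k\ge\bm 0$ together with $\bar{\mathcal G}^i_{k+1}\ge\bm 0$ gives \cref{eqn:k-level-complete-kkt-G}.

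Once the induction reaches $k=1$, applying the claim to the top-level problem, where $\bar\var^{i*}$ is a local solution by hypothesis, gives the theorem for every level simultaneously.

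\textbf{Main obstacle: step (a).} The lifted program only relaxes the argmin constraint, since KKT points of level $k+1$ need not be local minimizers. Local optimality of $\bar\var^{i*}$ over the true feasible set therefore does not automatically transfer to the larger lifted feasible set. Likewise, a nonunique multiplier set could break local optimality in the joint variables.

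I would address this with two ingredients. First, MPCC-LICQ at each level makes the lower-level multipliers unique and locally continuous, by an implicit-function-type argument on the active rows. Second, I would follow the convention of \cite{RAL-GOOP}: lower-level minimizers are identified with their KKT points, and local solutions of \cref{eqn:goop-K-level} are understood for the reformulated program.

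If a rigorous transfer cannot be established in full generality, the fallback is to state the theorem as holding for local solutions of the lifted (MPCC) reformulation. This is precisely the sense in which \cite{RAL-GOOP} derives these conditions, and under that reading step (a) holds by definition.
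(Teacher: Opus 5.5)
Your route is the one the paper itself relies on. The paper gives no proof of this theorem beyond the recursive construction of \Cref{sec:complete-kkt-system} and the citation, and that construction is your backward induction through lifted programs. As written, however, your argument is incomplete at step (a), and neither proposed remedy closes it.
\begin{itemize}
\item Uniqueness and continuity of the lower-level multipliers say nothing about whether points of the lifted feasible set near $(\bar\var^{i*},\bar\eta^{i*}_{k+1:\numlevels^i})$ are genuine level-$(k+1)$ local minimizers.
\item Identifying minimizers with KKT points assumes exactly what (a) needs.
\item The fallback proves a different theorem, about local solutions of the MPCC rather than of \cref{eqn:goop-K-level}.
\end{itemize}

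The concern is real. Take one player with $n^i=1$, $\numlevels^i=2$, $\X^i_2=[-10,10]$, $\cost{i}_1(\var)=\var$, and the $C^\infty$ innermost objective defined by $(\cost{i}_2)'(\var)=\mathrm{sgn}(\var)\,e^{-1/\var^2}\sin^2(1/\var)$. The point $\var=0$ is a strict local minimizer of $\cost{i}_2$ and the only one nearby: the points $\pm1/(m\pi)$ are stationary inflection points. Hence $0$ is a local GOOP solution. LICQ holds on $\X^i_2$. The lifted feasible set even satisfies an Abadie-type MPCC CQ at the point, since its tangent cone and MPCC-linearized cone both equal $\R\times\{0\}^2$; only the lifted MPCC-LICQ fails, because $(\cost{i}_2)''(0)=0$. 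Yet the inactive $\inequality^i$ forces $\bar\gamma^i_2=\bm 0$ and $\bar\gamma^i_{1,1}=\bm 0$. Every multiplier choice then gives $\nabla_{\bar\var^i}\bar\Lagrangian^i_1=1-\bar\psi^i_1(\cost{i}_2)''(0)=1\neq 0$. So $0$ is not a local minimizer of the lifted program, because the spurious KKT points $-1/(m\pi)$ have smaller $\cost{i}_1$.

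The missing idea is that the MPCC-LICQ you already invoke in (b) settles (a).
\begin{itemize}
\item At $k=\numlevels^i-1$ the lifted program has $n^i+\equaldim^i+\inequaldim^i$ variables $(\var^i,\bar\lambda^i_{\numlevels^i},\bar\gamma^i_{\numlevels^i})$, $n^i+\equaldim^i$ equations, and $\inequaldim^i$ disjoint pairs $(\inequality^i_j,\bar\gamma^i_{\numlevels^i,j})$, each with at least one active member.
\item MPCC-LICQ therefore requires at least $n^i+\equaldim^i+\inequaldim^i$ linearly independent gradients in a space of that same dimension. This forces strict complementarity and a nonsingular square Jacobian.
\item The inactive member of each pair stays inactive nearby, so the lifted feasible set is locally the zero set of that square map.
\item By the inverse function theorem, $(\bar\var^{i*},\bar\eta^{i*}_{\numlevels^i})$ is then an isolated point of $\{\bar{\mathcal F}^i_{\numlevels^i}=\bm 0,\ \bar{\mathcal G}^i_{\numlevels^i}\ge\bm 0\}$.
\item Every $\bar{\mathcal F}^i_{k+1}$ and $\bar{\mathcal G}^i_{k+1}$ contains $\bar{\mathcal F}^i_{\numlevels^i}$ and $\bar{\mathcal G}^i_{\numlevels^i}$. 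Hence the $\var^i$-component equals $\bar\var^{i*}$ on every lifted feasible set near the point, and $\cost{i}_k$ is locally constant there.
\end{itemize}
So (a) holds for any choice of the non-unique upper-level multipliers. As in \Cref{rem:strict-minimizer-propagates}, the hierarchy is locally trivial under this CQ.

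One smaller point concerns (b). For $k\le\numlevels^i-2$ the complementarity products share factors; for example, $\inequality^i$ is paired with several $\bar\gamma$ blocks. Step (b) therefore also needs the routine extension of the Scheel--Scholtes branch argument. It further needs the standard conversion of strong stationarity into multipliers $(\bar\phi^i_k,\bar\gamma^i_k)$ of the product formulation.
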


We now show that the number of variables and conditions in the complete KKT system scales exponentially in the number of preference levels.

\begin{proposition}[Exponential growth of the complete KKT system]
\label{thm:exponential-growth-complete-kkt}
Consider player $i$'s \ac{goop} problem in \cref{eqn:goop-K-level}.
The complete KKT system $(\bar{\mathcal{F}}^i,\bar{\mathcal{G}}^i)$ has 
% \david{not sure i'd have `` = 0'' and ``>= 0'' since it makes more sense to say rows of F and G alone}
\begin{align*}
    N_{\bar\var^i,\bar\eta^i}
    &= 2^{\numlevels^i - 1}(n^i+\equaldim^i+\numlevels^i\inequaldim^i)
       &&\text{variables},\\
    N_{\bar{\mathcal{F}}^i}
    &= 2^{\numlevels^i - 1}(n^i+\equaldim^i+\numlevels^i\inequaldim^i)
       &&\text{equations in } \bar{\mathcal{F}}^i(\cdot), \ \textrm{and}\\
    N_{\bar{\mathcal{G}}^i}
    &= 2^{\numlevels^i}\inequaldim^i
       &&\text{inequalities in } \bar{\mathcal{G}}^i(\cdot).
\end{align*}
% \david{did i not already suggest renaming ``rows of'' to ``conditions in'' or similar?}
Thus, the number of variables and conditions in the complete KKT system grows exponentially with the number of preference levels.
\end{proposition}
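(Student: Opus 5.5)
Induction on the depth of the hierarchy is the natural route. For each level $k$ we track three counts: the number of variables $V_k$, the number of equations $E_k$ in $\bar{\mathcal F}_k^i$, and the number of inequalities $I_k$ in $\bar{\mathcal G}_k^i$, where the variables are $(\bar\var^i,\bar\eta^i_{k:\numlevels^i})$. We then prove closed forms with $m\coloneqq \numlevels^i-k+1$ levels in play. We expect $V_k=E_k=2^{m-1}(n^i+\equaldim^i+m\,\inequaldim^i)$ and $I_k=2^{m}\inequaldim^i$. Setting $k=1$ (so $m=\numlevels^i$) gives the statement.

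\emph{Base case $k=\numlevels^i$.} From \cref{eqn:complete-bottom-level-eta,eqn:complete-kkt-bottom-level-F,eqn:complete-kkt-bottom-level-G} the counts are immediate. There are $n^i+\equaldim^i+\inequaldim^i$ variables and $n^i+\equaldim^i+\inequaldim^i$ equations (stationarity, equality, complementarity). There are $2\inequaldim^i$ inequalities. These match the formulas at $m=1$.

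\emph{Inductive step.} From \cref{eqn:k-level-complete-kkt-G}, $\bar{\mathcal G}_k^i$ consists of $g^i$ together with $\bar\gamma^i_{k:\numlevels^i}$. Since $\bar\gamma_k^i$ multiplies $\bar{\mathcal G}_{k+1}^i$ in the third block of \cref{eqn:k-level-complete-kkt-F}, we have $\dim\bar\gamma_k^i=I_{k+1}$. Hence $I_k=I_{k+1}+I_{k+1}=2I_{k+1}$, which gives $2^m\inequaldim^i$.

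For the variables, the new dual block at level $k$ is $\bar\eta_k^i=(\bar\psi_k^i,\bar\phi_k^i,\bar\lambda_k^i,\bar\gamma_k^i)$. Its components have the following sizes. First, $\bar\psi_k^i$ multiplies the stacked lower-level stationarity rows. Second, $\bar\phi_k^i$ multiplies the lower-level complementarity rows. Third, $\bar\lambda_k^i$ is associated with the equality rows. Fourth, $\bar\gamma_k^i$ has size $I_{k+1}$. Reading the Lagrangian \cref{eqn:complete-Lagrangian-k}, one sees that $(\bar\psi_k^i,\bar\phi_k^i,\bar\lambda_k^i)$ is exactly one multiplier per row of $\bar{\mathcal F}_{k+1}^i$. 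We will phrase this as: $\bar{\mathcal F}_{k+1}^i$ is stacked as (stationarity; equality; complementarity), and the three multiplier blocks correspond to these. So $\dim\bar\eta_k^i=E_{k+1}+I_{k+1}$, and therefore
\[
V_k=V_{k+1}+E_{k+1}+I_{k+1}.
\]

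For the equations, the four blocks of \cref{eqn:k-level-complete-kkt-F} contribute as follows. Stationarity in $\bar\var^i$ and in $\bar\eta^i_{k+1:\numlevels^i}$ gives $V_{k+1}$ rows. The complementarity block gives $I_{k+1}$ rows. The inherited $\bar{\mathcal F}_{k+1}^i$ gives $E_{k+1}$ rows. Hence $E_k=V_{k+1}+I_{k+1}+E_{k+1}=V_k$, so the system is square at every level.

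\emph{Solving the recursion.} With $V_{k+1}=E_{k+1}$ we get $V_k=2V_{k+1}+2^{m-1}\inequaldim^i$. Substituting the inductive hypothesis gives
\[
V_k=2^{m-1}(n^i+\equaldim^i+(m-1)\inequaldim^i)+2^{m-1}\inequaldim^i=2^{m-1}(n^i+\equaldim^i+m\inequaldim^i),
\]
which is the claimed form.

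The main obstacle is the bookkeeping in the inductive step for the variables. We must justify from \cref{eqn:complete-Lagrangian-k} that the new multipliers $(\bar\psi_k^i,\bar\phi_k^i,\bar\lambda_k^i)$ are in one-to-one correspondence with the rows of $\bar{\mathcal F}_{k+1}^i$. The displayed Lagrangian is somewhat schematic: the $\bar\lambda_k^i$ term must absorb all inherited equality rows, not only $h^i$. We will resolve this by adopting the interpretation stated in the text, namely that each inherited equality row receives exactly one multiplier and each inherited inequality row of $\bar{\mathcal G}_{k+1}^i$ receives exactly one entry of $\bar\gamma_k^i$. Once that identification is fixed, the rest of the argument is the linear recursion above.
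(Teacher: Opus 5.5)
Your proposal is correct and follows essentially the same route as the paper, which uses the same identifications $\dim\bar\eta_k^i=\bar f_{k+1}^i+\bar g_{k+1}^i$ and $\bar f_k^i=\bar v_k^i$ (your $E_k=V_k$), arrives at the same recursion $\bar f_k^i=2\bar f_{k+1}^i+\bar g_{k+1}^i$, and unrolls it as a sum instead of inducting on the closed form. The bookkeeping you flag as the main obstacle is the same one-multiplier-per-row correspondence the paper assumes without comment, and it is consistent with the displayed Lagrangian: $h^i$ appears only once in $\bar{\mathcal F}_{k+1}^i$, and all other rows are covered by $\bar\psi_k^i$ (stationarity, including in the induced primals) and $\bar\phi_k^i$ (complementarity); your derivation of $I_k=2I_{k+1}$ also fills in a step the paper simply asserts as $\bar g_k^i=2^{K^i-k+1}m_{\mathcal I}^i$.
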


\begin{proof}
    The proof can be found in the Appendix.
\end{proof}

The preceding result shows that the complete KKT system  $(\bar{\mathcal F},\bar{\mathcal G})$ grows linearly in the number of players but exponentially in the number of preference levels in each player's hierarchy depth. 
\Cref{sec:reduced-kkt-formulation} derives a reduced KKT system whose size grows polynomially in \emph{both} the number of players \emph{and} the number of preference levels.
% thereby substantially improving computational scalability.

\section{A Reduced Set of Necessary Conditions for GOOP Equilibria}
\label{sec:reduced-kkt-formulation}
% \begin{itemize}
%     \item Notation and Structural Setup: recursion pattern, clarify the role of $\psi_k$.
%     \item KKT System for Reduced GOOP: with equality and inequality constraints
%     \item Complexity analysis and Dimensional Growth
%     \item Structural Comparison with the Original Formulation
%     \item Note: current reduced formulation is based on QP and quadratic games 
%     \begin{itemize}
%         \item Discuss reduction in redundant variables and constraints
%     \end{itemize}
% \end{itemize}

% In this section, we propose a reduced formulation of the KKT system for the \ac{goop} problem \cref{eqn:goop-K-level}.
% Unless stated otherwise, we use the same symbols for dual variables in the reduced system as in the complete system---dropping the overbar (e.g., \(\bar\lambda \mapsto \lambda\))---to indicate that they carry out the same semantic role (e.g., equality multipliers), but are associated with the reduced system's constraints.

We derive a new formulation of the KKT system for the \ac{goop} problem \cref{eqn:goop-K-level}, which avoids the exponential scaling in \Cref{thm:complete-kkt-system-necessary-conditions}. 
Unless otherwise noted, we adopt the
notation of \Cref{sec:complete-kkt-system}, omitting overbars to denote quantities for the reduced KKT system.
% (e.g., we write $\lambda$ in place of $\bar{\lambda}$). 
% The symbols \david{what symbols? this sentence needs help} denote multipliers associated with the reduced system’s constraints but retain the same interpretive meaning.
% In the remainder of this section, we present how the reduced KKT system is constructed. 
% We begin by defining the optimality constraint $\pi_k^i = 0$, which represents the hierarchical coupling across levels: feasibility at level $k$ is constrained to decisions that are consistent with optimality of the lower levels $k+1,\dots,\numlevels^i$ \david{sentence should be streamlined}. 
We construct the reduced KKT system recursively from the innermost level.
% We begin by defining the function \(\pi_k^i(\var^i,\var^{\neg i},\eta^i_{k:\numlevels^i})=\bm{0}\), which consists of the $i^\mathrm{th}$ player's \(\klevel{k}\) stationarity condition \(\nabla_{\var^i}\Lagrangian_k^i(\var^i,\var^{\neg i},\eta^i_{k:\numlevels^i})=\bm{0}\), together with the lower-level stationarity conditions \(\nabla_{\var^i}\Lagrangian_\ell^i(\var^i,\var^{\neg i},\eta^i_{\ell:\numlevels^i})=\bm{0}\) for \(\ell=k+1,\dots,\numlevels^i\).
% We subsequently derive the reduced KKT system for player $i$ by proceeding from the innermost level (highest priority preference) to the outermost level.

% \todo{explain what $\pi(\cdot)$, $\mathcal{F} = 0$ and $\mathcal{G} \geq 0$ mean, what the dual variables denote, can we drop subscript on $\var$ and $\var^{\neg i}$?}
% \dongho{Reminder to self: 1. subsequent proofs will now proceed backwards, i.e., starting with $k = K^i$. 2. drop level subscript for $k=1$ for $\eta_1 \coloneqq \eta$? }
% \david{Turn the following into a theorem REFERS TO OLD KKT AND THAT NEW KKT IS THE RELAXED, DROPPED VERSION OF OLD KKT SYSTEM: Suppose there exists an optimal primal solution $x^*$ to \ac{goop}. Then, there exists a dual variable vector $\eta^*$ such that the following conditions hold.}
\paragraph{Innermost level $k=\numlevels^i$ for player $i$} For the bottom-level \cref{eqn:goop-K-level-3,eqn:goop-K-level-4}, we define the dual variables $\eta^i_{\numlevels^i}\coloneqq [(\lambda_{\numlevels^i}^i)^\top,\ (\gamma_{\numlevels^i}^i)^\top]^\top \in \mathbb{R}^{\equaldim^i+\inequaldim^i}.$
The corresponding Lagrangian $\Lagrangian^i_{\numlevels^i}$ and KKT system $(\mathcal F^i_{\numlevels^i},\mathcal G^i_{\numlevels^i})$ are
% \jingqi{remove $s^i$ here and mention in the algorithm - done}
% \begin{equation}
%     \eta_{\numlevels^i}^i \coloneqq \left(\lambda_{\numlevels^i}^i, \gamma_{\numlevels^i}^i\right),
% \end{equation}
% \begin{align}
%     \Lagrangian_{\numlevels^i}^i\left(\var^i, \var^{\neg i}, \eta_{\numlevels^i}^i\right) 
% &= \cost{i}_{\numlevels^i}\left(\var_{\numlevels^i}^i, z^{\neg i}\right) - \lambda_{\numlevels^i}^{i\,\top} \equality^i\left(\var_{\numlevels^i}^i, z^{\neg i}\right) - \gamma_{\numlevels^i}^{i\,\top} \inequality^i\left(\var_{\numlevels^i}^i, z^{\neg i}\right),\\
% \pi_{\numlevels^i}^i\left(\var^i, \var^{\neg i}, \eta_{\numlevels^i}^i\right)
% &= \nabla_{\var^i} \Lagrangian_{\numlevels^i}^i\left(\var^i, \var^{\neg i}, \eta_{\numlevels^i}^i\right),\\
% \label{eqn:bottom-level-reduced-kkt}
% \mathcal{F}_{\numlevels^i}^i\left(\var^i, \var^{\neg i}, \eta_{\numlevels^i}^i, s^i\right) 
% &= 
% \begin{bNiceMatrix}[margin,cell-space-limits=1.5pt]
%     \nabla_{\var^i} \Lagrangian_{\numlevels^i}^i(\eta_{\numlevels^i}^i) \\
%     \equality^i(\var_{\numlevels^i}^i, z^{\neg i})\\
%     \inequality^i(\var_{\numlevels^i}^i, z^{\neg i}) - s^i \\
%     s^i \odot \gamma_{\numlevels^i}^i
% \end{bNiceMatrix} = 0, \ 
% \mathcal{G}^i_{\numlevels^i}\left( \eta_{\numlevels^i}^i,s^i\right)
% = 
% \begin{bNiceMatrix}
%     \gamma_{\numlevels^i}^i \\
%     s^i
% \end{bNiceMatrix} \geq 0.
% \end{align}
\begin{align}
    \Lagrangian_{\numlevels^i}^i\left(\var^{i}, \var^{\neg i}, \eta_{\numlevels^i}^i\right) 
    &= \cost{i}_{\numlevels^i}\left( \var^{i}, \var^{\neg i}\right) - \lambda_{\numlevels^i}^{i\,\top} \equality^i\left( \var^{i}, \var^{\neg i}\right) - \gamma_{\numlevels^i}^{i\,\top} \inequality^i\left( \var^{i}, \var^{\neg i}\right),\\
    \pi_{\numlevels^i}^i\left(\var^{i}, \var^{\neg i}, \eta_{\numlevels^i}^i\right)
    &= \nabla_{\var^{i}} \Lagrangian_{\numlevels^i}^i\left(\var^{i}, \var^{\neg i}, \eta_{\numlevels^i}^i\right), \\
    \label{eqn:bottom-level-reduced-kkt-F}
    \mathcal{F}_{\numlevels^i}^i\left(\var^{i}, \var^{\neg i}, \eta_{\numlevels^i}^i\right) 
    &=
    \begin{bNiceMatrix}[margin,cell-space-limits=1.5pt]
        \nabla_{\var^{i}} \Lagrangian_{\numlevels^i}^i\left(\var^{i}, \var^{\neg i}, \eta_{\numlevels^i}^i\right) \\
        \equality^i( \var^{i}, \var^{\neg i})\\
        \inequality^i( \var^{i}, \var^{\neg i}) \odot \gamma_{\numlevels^i}^i
    \end{bNiceMatrix} = \bm{0}, \\
    \label{eqn:bottom-level-reduced-kkt-G}
    \mathcal{G}^i_{\numlevels^i}\left( \var^{i}, \var^{\neg i}, \gamma_{\numlevels^i}^i\right)
    &= 
    \begin{bNiceMatrix}[margin,cell-space-limits=1.5pt]
        \inequality^i( \var^{i}, \var^{\neg i}) \\
        \gamma_{\numlevels^i}^i
    \end{bNiceMatrix} \geq \bm{0}.
\end{align}
At the innermost level, the reduced KKT system \cref{eqn:bottom-level-reduced-kkt-F,eqn:bottom-level-reduced-kkt-G} coincides with the complete KKT system \cref{eqn:complete-kkt-bottom-level-F,eqn:complete-kkt-bottom-level-G}. 
% \david{worth having a remark that this is identical to the base case before?}

\paragraph{Upper level $k\le \numlevels^i-1$}
Let the function \(\pi_{k+1}^i(z^i,z^{\neg i},\eta_{k+1:K^i}^i)\) collect the stationarity conditions of levels \(k+1,\dots,K^i\). 
We associate the dual variable $\psi_k^i$ with the function $\pi_{k+1}^i$ and the dual variable $\phi_k^i$ only with the complementarity constraints involving the innermost inequality constraint $g^i$. 
% The latter construction avoids assigning dual variables for lower-level dual non-negativity constraints at every level. 
Unlike the complete KKT system, we impose stationarity of the Lagrangian function only with respect to the primal variable $\var^i$.
% not with respect to induced primals from lower levels.
Defining the dual variables $\eta^i_k \coloneqq [(\psi_k^i)^\top, ( \phi_k^i)^\top,(\lambda_k^i)^\top, (\gamma_k^i)^\top]^\top$, we write the corresponding Lagrangian $\Lagrangian^i_k$ and KKT system $(\mathcal F^i_{k},\mathcal G^i_{k})$ as
% \david{consider moving the definition of $\eta$ inline here and in (4.1), so the rest of the formatting looks prettier. also try to fix (4.10) so it is not misaligned}
\begin{align}
    % \eta_{k}^i &= \left(\lambda_{\numlevels^i:k}^i, \gamma_{\numlevels^i:k}^i, \psi_{(\numlevels^i-1):k}^i, \phi_{(\numlevels^i-1):k}^i\right), \\
\label{eqn:k-level-Lagrangian}
    \Lagrangian_{k}^i \bigl(\var^{i}, \var^{\neg i}, \eta_{k:\numlevels^i}^i\bigr) 
    &= \cost{i}_{k}\left( \var^{i}, \var^{\neg i}\right) - \lambda_{k}^{i\,\top} \equality^i\left( \var^{i}, \var^{\neg i}\right) - \gamma_{k}^{i\,\top} \inequality^i\left( \var^{i}, \var^{\neg i}\right) \\
    \phantom{=}- \psi_{k}^{i\,\top}&\pi_{k+1}^i\left(\var^{i}, \var^{\neg i},\eta_{k+1:\numlevels^i}^i\right) - \sum_{\ell=1}^{K^i-k} \phi_{k,\ell}^{i\,\top}\left[ \inequality^i\left( \var^{i}, \var^{\neg i} \right) \odot \gamma^i_{K^i-\ell+1}\right], \nonumber \\
    \pi_{k}^i\left(\var^{i}, \var^{\neg i}, \eta_{k:\numlevels^i}^i\right)
    &= \begin{bNiceMatrix}[margin,cell-space-limits=1.5pt]
        \nabla_{\var^i} \Lagrangian_{k}^i\left(\var^{i}, \var^{\neg i}, \eta_{k:\numlevels^i}^i\right) \\
        \pi_{k+1}^i\left(\var^{i}, \var^{\neg i}, \eta_{k+1:\numlevels^i}^i\right)
    \end{bNiceMatrix}, \\
\label{eqn:k-level-reduced-kkt-F}
    \mathcal{F}_{k}^i\left(\var^{i}, \var^{\neg i}, \eta_{k:\numlevels^i}^i\right)
    &= 
    \begin{bNiceMatrix}[margin,cell-space-limits=1.5pt]
        \nabla_{\var^i} \Lagrangian_{k}^i(\var^{i}, \var^{\neg i}, \eta_{k:\numlevels^i}^i) \\
         \inequality^i\left( \var^{i}, \var^{\neg i} \right) \odot \gamma_{k}^i\\
        \mathcal{F}_{k+1}^i\left(\var^{i}, \var^{\neg i}, \eta_{k+1:\numlevels^i}^i,\right)
    \end{bNiceMatrix} = \bm{0}, \\
\label{eqn:k-level-reduced-kkt-G}
    \mathcal{G}_{k}^i\left(\var^{i}, \var^{\neg i}, \gamma_{k:\numlevels^i}^i\right)
    &= 
    \begin{bNiceMatrix}[margin,cell-space-limits=1.5pt]
        \inequality^i\left( \var^{i}, \var^{\neg i} \right) \\
        \gamma_{k:\numlevels^i}^i
    \end{bNiceMatrix} \geq \bm{0}.
\end{align}
After reordering, the reduced KKT system for player $i$ $({\mathcal F}^i,{\mathcal G}^i)$ can be written as 
\begin{align}
\label{eqn:reduced-kkt-top-level}
\mathcal{F}^i\left(\var^{i}, \var^{\neg i}, \eta_{1:\numlevels^i}^i\right) 
&= 
\begin{bNiceMatrix}[margin,cell-space-limits=1.5pt]
    \nabla_{\var^i} \Lagrangian_{1}^i(\var^{i}, \var^{\neg i}, \eta_{1:\numlevels^i}^i) \\[-1.5ex]
    \vdots \\
    \nabla_{\var^{i}} \Lagrangian_{\numlevels^i}^i(\var^{i}, \var^{\neg i}, \eta_{\numlevels^i}^i) \\
    \equality^i(\var^{i}, \var^{\neg i})\\
    \inequality^i(\var^{i}, \var^{\neg i}) \odot \gamma_{1}^i \\[-1.5ex]
    \vdots \\
    \inequality^i(\var^{i}, \var^{\neg i}) \odot \gamma_{\numlevels^i}^i
\end{bNiceMatrix} = \bm{0}, \\
\label{eqn:reduced-kkt-top-level-G}
\mathcal{G}^i\left( \var^{i}, \var^{\neg i}, \gamma_{1:\numlevels^i}^i\right)
&= 
\begin{bNiceMatrix}[margin]
\inequality^i(\var^{i}, \var^{\neg i}) \\
    \gamma_{1:\numlevels^i}^i
\end{bNiceMatrix} \geq \bm{0}.
\end{align}

% \david{consider: would it be useful to color code the rows that are dropped when writing the original KKT system?}
Aggregating $({\mathcal F}^i,{\mathcal G}^i)$ over players yields the reduced KKT system $(\mathcal F,\mathcal G)$.
% The reduced KKT system can be obtained from the complete KKT system by neglecting (i) the dual variables associated with the dropped equalities from lower levels, and (ii) the stationarity equations with respect to induced primal variables. 
We characterize the relationship between the solution sets of the complete KKT system $(\bar{\mathcal F},\bar{\mathcal G})$ and reduced KKT system $(\mathcal F,\mathcal G)$ in the following result. 

\begin{theorem}[Reduced KKT system is a relaxation of the complete KKT system]
\label{thm:reduced-is-relaxation-of-complete}
Suppose \cref{assum:cost-feasible-set} holds.
Let $(\bar\var^{*}, \bar\eta^{1*}_{1:\numlevels^1},\dots, \bar\eta^{N*}_{1:\numlevels^\numplayers})$ 
% \david{why break out i and not i?} 
be a solution to 
the complete KKT system $(\bar{\mathcal{F}},\bar{\mathcal{G}})$ in \cref{eqn:k-level-complete-kkt-F,eqn:k-level-complete-kkt-G}. 
Then there exists reduced-system dual variables $\big(\eta^{i*}_{1:\numlevels^i}\big)_{i=1}^{\numplayers}$ such that $(\bar\var^{*}, \eta^{1*}_{1:\numlevels^1},\dots, \eta^{N*}_{1:\numlevels^\numplayers})$ satisfies the reduced KKT system $(\mathcal{F},\mathcal{G})$ in \cref{eqn:k-level-reduced-kkt-F,eqn:k-level-reduced-kkt-G}.
Consequently, every primal solution $\bar\var^{*}$ of the complete KKT system is a solution of the reduced KKT system and the reduced system is a relaxed set of necessary conditions that must hold at any local GOOP equilibrium.
% \david{you need to be careful about the duals not being in the same space}.
% \dongho{this is done}
\end{theorem}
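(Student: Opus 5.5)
The argument is constructive. Fix a player $i$ and a solution of the complete system. From the complete duals $\bar\eta^{i*}_{1:\numlevels^i}$ we build reduced duals $\eta^{i*}_{1:\numlevels^i}$, level by level, by backward induction from $k=\numlevels^i$. Everything is then aggregated over players. Since each player's reduced system involves only $\var^i$ and its own duals, with $\var^{\neg i}$ held fixed at $\bar\var^{\neg i*}$, the players decouple and it suffices to treat one $i$.

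\emph{Base case, $k=\numlevels^i$.} The innermost systems coincide, as noted after \cref{eqn:bottom-level-reduced-kkt-G}. So we set $\lambda^i_{\numlevels^i}=\bar\lambda^i_{\numlevels^i}$ and $\gamma^i_{\numlevels^i}=\bar\gamma^i_{\numlevels^i}$. Then $\mathcal F^i_{\numlevels^i}=\bar{\mathcal F}^i_{\numlevels^i}=\bm 0$ and $\mathcal G^i_{\numlevels^i}=\bar{\mathcal G}^i_{\numlevels^i}\ge\bm 0$.

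\emph{Inductive step.} The key observation is that the complete level-$k$ Lagrangian \cref{eqn:complete-Lagrangian-k}, viewed as a function of $\bar\var^i$ alone, has the same structure as the reduced Lagrangian \cref{eqn:k-level-Lagrangian}. The only differences are:
\begin{itemize}
\item the complete $\bar\psi_k^i$ also multiplies the stationarity rows with respect to induced primals, $\nabla_{\bar\eta^i_{\ell+1:\numlevels^i}}\bar\Lagrangian^i_\ell$;
\item the complete $\bar\phi_k^i$ also multiplies the products $\bar\gamma^i_{\ell+1:\numlevels^i}\odot\bar\gamma^i_{\ell,2}$;
\item the term $\bar\gamma^{i\top}_{k,\cdot}\bar\gamma^i_{k+1:\numlevels^i}$ is present.
\end{itemize}

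Rather than taking gradients of these extra rows, I would define the reduced duals directly. Set $\lambda_k^i=\bar\lambda_k^i$ and $\gamma_k^i=\bar\gamma_{k,1}^i$. Let $\psi_k^i$ be the subvector of $\bar\psi_k^i$ paired with the $\nabla_{\bar\var^i}\bar\Lagrangian^i_\ell$ rows, and $\phi^i_{k,\ell}$ the components of $\bar\phi_k^i$ paired with the rows $g^i\odot\bar\gamma^i_{\cdot,1}$.

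The troublesome part is that $\nabla_{\bar\var^i}$ of the extra rows is nonzero. The induced-primal stationarity rows $\nabla_{\bar\eta}\bar\Lagrangian^i_\ell$ contain, for example, $-\nabla_{\bar\var^i}h^i$, $-\nabla_{\bar\var^i}g^i$ and $-\nabla\pi$-type terms, which all depend on $\bar\var^i$. The products of lower-level $\gamma$'s do not depend on $\bar\var^i$, so they contribute nothing, and neither does the pure-dual term $\bar\gamma^{i\top}_{k,\cdot}\bar\gamma^i_{k+1:\numlevels^i}$.

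So the main obstacle is the leftover term $\nabla_{\bar\var^i}\bigl(\bar\psi^{i\top}_{k,\eta}\nabla_{\bar\eta}\bar\Lagrangian^i_{\ell}\bigr)$. My plan is to rewrite each induced-primal stationarity row $\nabla_{\bar\eta^i_{\ell+1:\numlevels^i}}\bar\Lagrangian^i_\ell$ as an affine combination, with coefficients depending only on $\bar\var^i$, of the quantities $h^i$, $g^i$, $\nabla_{\bar\var^i}\bar\Lagrangian^i_m$ and $g^i\odot(\cdot)$. This works because the Lagrangians are affine in the duals. Differentiating with respect to $\bar\var^i$ then produces exactly terms of the form $\nabla h^{i\top}(\cdot)$, $\nabla g^{i\top}(\cdot)$, $\nabla(\nabla_{\bar\var^i}\bar\Lagrangian_m)^{\top}(\cdot)$ and $\nabla(g^i\odot\gamma)^{\top}(\cdot)$. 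These can be absorbed by redefining $\lambda_k^i,\gamma_k^i,\psi_k^i,\phi_k^i$ as $\bar\lambda_k^i,\bar\gamma_{k,1}^i,\ldots$ plus correction terms, and the corrections are linear in $\bar\psi_k^i$, so $\nabla_{\var^i}\Lagrangian^i_k=\nabla_{\bar\var^i}\bar\Lagrangian^i_k=\bm 0$.

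The delicate point is that the absorbed $\gamma_k^i$ must still satisfy $\gamma_k^i\ge\bm 0$ and $g^i\odot\gamma_k^i=\bm 0$. If a correction threatens this, I would fall back on placing the correction into $\phi_k^i$, which is sign-free, on the active components. On the inactive components of $g^i$ we have $\bar\gamma^i_{\ell,1}=\bm 0$ for every lower level $\ell$ by complementarity, and this should make the problematic contributions vanish there.

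\emph{Conclusion.} The remaining rows of $\mathcal F_k^i$ are $\mathcal F^i_{k+1}$, which holds by induction, and $g^i\odot\gamma_k^i=\bm 0$, which is inherited from the complete complementarity row $\bar{\mathcal G}^i_{k+1}\odot\bar\gamma_k^i$. The inequalities $\mathcal G_k^i$ are a subset of $\bar{\mathcal G}_k^i$. The final claim then follows by combining this with \Cref{thm:complete-kkt-system-necessary-conditions}: every local GOOP equilibrium admits complete duals, hence reduced duals.
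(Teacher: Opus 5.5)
Your dual identification is the paper's own. You run backward induction, keep $\bar\lambda^i_k$, $\bar\gamma^i_{k,1}$ and the blocks of $\bar\psi^i_k,\bar\phi^i_k$ paired with the rows $\nabla_{z^i}\bar{\mathcal L}^i_m$ and $g^i\odot\bar\gamma^i_{m,1}$, discard the rest, and inherit $\mathcal G^i_k$ and the complementarity rows. The paper stops there. Its display cancels the extra rows inside $\bar{\mathcal L}^i_{k-1}$ and concludes $\nabla_{z^i}\mathcal L^i_{k-1}=\bm 0$, which tacitly treats the $z^i$-gradients of the deleted rows as zero. That is correct at $k=K^i-1$, where the only deleted term $\bar\gamma^{i\top}_{K^i-1,2}\bar\gamma^i_{K^i}$ is $z$-free. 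It is also correct for quadratic objectives with linear equalities, where the deleted rows are constant in $z$. It is not correct in general. So you have located the real obstacle. The gap is that your absorption step does not remove it.

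\emph{(i) The claimed rewriting fails.} Write $\bar\psi^i_{K^i-2,1}$, $\bar\psi^i_{K^i-2,\eta,\lambda}$, $\bar\psi^i_{K^i-2,\eta,\gamma}$ for the blocks of $\bar\psi^i_{K^i-2}$ paired with $\nabla_{z^i}\bar{\mathcal L}^i_{K^i-1}$, $\nabla_{\bar\lambda^i_{K^i}}\bar{\mathcal L}^i_{K^i-1}$ and $\nabla_{\bar\gamma^i_{K^i}}\bar{\mathcal L}^i_{K^i-1}$. Already $\nabla_{\bar\lambda^i_{K^i}}\bar{\mathcal L}^i_{K^i-1}=(\partial h^i/\partial z^i)\,\bar\psi^i_{K^i-1}$. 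So at level $K^i-2$ the leftover is $\sum_j(\bar\psi^i_{K^i-2,\eta,\lambda})_j\,\nabla^2_{z^i}h^i_j(\bar z)\,\bar\psi^i_{K^i-1}$. At level $K^i-3$, nonquadratic objectives add $\nabla^3 J^i_{K^i}$ contracted with two fixed dual vectors. Corrections to $\lambda^i_k,\gamma^i_k,\phi^i_k,\psi^i_k$ only produce vectors in the range of $[(\partial h^i/\partial z^i)^\top,\,(\partial g^i/\partial z^i)^\top,\,\nabla^2_{z^i}\mathcal L^i_{k+1},\dots,\nabla^2_{z^i}\mathcal L^i_{K^i}]$, and the Hessian blocks there are already fixed by the lower levels. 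Nothing you state puts the leftover in this range.

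\emph{(ii) The mismatch compounds upward.} Even after correction, $\nabla_{z^i}\mathcal L^i_k(\cdot,\eta^i_{k:K^i})$ and $\nabla_{z^i}\bar{\mathcal L}^i_k(\cdot,\bar\eta^i_{k:K^i})$ agree at $\bar z$ but not as functions of $z^i$. Their Jacobians, which are the coefficients of $\psi^i_{k-1}$ and $\bar\psi^i_{k-1}$ one level up, therefore differ. This is the mechanism described in \Cref{sec:nonquadratic-goop}, and your induction does not track it.

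\emph{(iii) The sign fallback fails at biactive indices.} Take $j$ with $g^i_j(\bar z)=0$ and the $g^i$-multipliers of every level $m>k$ vanishing in component $j$. Then no $\phi$-term acts in direction $\nabla g^i_j$, so that part of the leftover must go into $\gamma^i_{k,j}\ge 0$. You give no sign control. The complete system treats $g^i\odot\bar\gamma^i=\bm 0$ as smooth equalities with sign-free multipliers, so it does not obviously supply one either.

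What does work is narrower. The leftover $(\partial g^i/\partial z^i)^\top(\bar\psi^i_{K^i-2,\eta,\gamma}\odot\bar\phi^i_{K^i-1})$ vanishes on inactive indices. The reason is not that lower-level $\gamma$'s vanish. It is the complete system's stationarity in the induced primal $\bar\phi^i_{K^i-1}$, namely $\bigl((\partial g^i/\partial z^i)\bar\psi^i_{K^i-2,1}\bigr)\odot\bar\gamma^i_{K^i}+\bar\psi^i_{K^i-2,\eta,\gamma}\odot g^i=\bm 0$. Under strict complementarity that leftover is then absorbed by $\phi^i_{K^i-2,1}$. A complete proof needs one of two things. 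Either find extra structure under which these leftovers vanish or are absorbable at every level (e.g., quadratic objectives, linear constraints, strict complementarity). Or find a different idea for the higher-derivative terms.
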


\begin{proof}
    The proof can be found in the Appendix.
\end{proof}

In \cref{thm:quadratic-growth-reduced-kkt}, we prove that the number of variables and conditions in the reduced KKT system $(\mathcal F,\mathcal G)$ scales polynomially with the number of levels, in contrast to the exponential scaling of the complete KKT system. 

\begin{proposition}[Polynomial growth of reduced KKT system]
\label{thm:quadratic-growth-reduced-kkt}
Consider player $i$'s \ac{goop} problem in \cref{eqn:goop-K-level}.
The corresponding reduced KKT system $(\mathcal{F}^i, \mathcal{G}^i)$ has
\begin{align*}
    N_{\var^i,\eta^i}
    &= \Bigl(1 + \frac{\numlevels^i(\numlevels^i-1)}{2}\Bigr)n^i
        + \numlevels^i\,\equaldim^i + \frac{\numlevels^i(\numlevels^i+1)}{2}\,\inequaldim^i
       &&\text{variables},\\
    N_{\mathcal{F}^i}
    &= \numlevels^i\,n^i   + \equaldim^i + \numlevels^i\,\inequaldim^i
       &&\text{equations in } \mathcal{F}^i(\cdot), \ \textrm{and}\\
    N_{\mathcal{G}^i}
    &= (\numlevels^i+1)\,\inequaldim^i
       &&\text{inequalities in } \mathcal{G}^i(\cdot).
\end{align*}
% In particular,
% \[
%     N_{\mathrm{\var^i}}(k)
%       = \mathcal{O}\bigl(k^2 n + k(\equaldim^i + \inequaldim^i)\bigr),
%     \qquad
%     N_{\mathcal{F}}(k)
%       = \mathcal{O}\bigl(k(n+\inequaldim^i)\bigr),
%     \qquad
%     N_{\mathcal{G}}(k)
%       = \mathcal{O}\bigl(k\,\inequaldim^i\bigr),
% \]
Thus, the number of variables grows quadratically, and the number of equations and inequalities grows linearly with the number of preference levels for the reduced system.
\end{proposition}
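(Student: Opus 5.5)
The plan is to count dimensions directly from the recursive definitions \cref{eqn:k-level-Lagrangian,eqn:k-level-reduced-kkt-F,eqn:k-level-reduced-kkt-G}. I will argue by backward induction on the level $k$, starting from the innermost level $\numlevels^i$, and then sum the per-level contributions. Throughout, write $K=\numlevels^i$.

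The key auxiliary fact, and the only real step, is the dimension of the stacked stationarity map: $\pi_k^i \in \R^{(K-k+1)n^i}$. At the base level, $\pi_K^i=\nabla_{\var^i}\Lagrangian_K^i\in\R^{n^i}$. For $k<K$, $\pi_k^i$ stacks $\nabla_{\var^i}\Lagrangian_k^i\in\R^{n^i}$ on top of $\pi_{k+1}^i$. Induction therefore gives $(K-k+1)n^i$. I expect this to be the main point needing care: one must check that $\nabla_{\var^i}\Lagrangian_k^i$ is always $n^i$-dimensional. This holds because, unlike the complete system, stationarity is taken only with respect to $\var^i$ and never with respect to the induced primals $\eta^i_{k+1:K}$.

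Next I count the dual variables level by level.
\begin{itemize}
\item At level $K$, $\eta_K^i=(\lambda_K^i,\gamma_K^i)$ contributes $\equaldim^i+\inequaldim^i$.
\item At each level $k\le K-1$, the multiplier $\psi_k^i$ pairs with $\pi_{k+1}^i$, so it has $(K-k)n^i$ entries.
\item Also at $k\le K-1$, the multiplier $\phi_k^i=(\phi_{k,\ell}^i)_{\ell=1}^{K-k}$ has $(K-k)\inequaldim^i$ entries, one $\inequaldim^i$-block for each complementarity term $\inequality^i\odot\gamma^i_{K-\ell+1}$.
\item Finally, $\lambda_k^i$ and $\gamma_k^i$ contribute $\equaldim^i$ and $\inequaldim^i$ respectively.
\end{itemize}
Adding $n^i$ for $\var^i$ and summing over $k$ with $\sum_{k=1}^{K-1}(K-k)=K(K-1)/2$ gives the variable count
\[
n^i + \tfrac{K(K-1)}{2}n^i + K\equaldim^i + \Bigl(\tfrac{K(K-1)}{2}+K\Bigr)\inequaldim^i .
\]
Since $\tfrac{K(K-1)}{2}+K=\tfrac{K(K+1)}{2}$, this is exactly the claimed $N_{\var^i,\eta^i}$.

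For the equations, I unroll \cref{eqn:k-level-reduced-kkt-F} into the reordered form \cref{eqn:reduced-kkt-top-level}. Each level adds one stationarity block of size $n^i$ and one complementarity block $\inequality^i\odot\gamma_k^i$ of size $\inequaldim^i$. The equality constraint $\equality^i$ appears only once, from the base case \cref{eqn:bottom-level-reduced-kkt-F}, because the recursion does not repeat it. This gives $N_{\mathcal F^i}=Kn^i+\equaldim^i+K\inequaldim^i$.

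For the inequalities, \cref{eqn:reduced-kkt-top-level-G} lists $\inequality^i$ once together with $\gamma_{1:K}^i$, for a total of $(K+1)\inequaldim^i$. The stated growth rates then follow by reading off the degree in $K$: the variable count is quadratic in $K$, and the equation and inequality counts are linear.
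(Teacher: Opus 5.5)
Your proposal is correct and matches the paper's proof. The paper likewise computes $\dim(\eta^i_k) = (\numlevels^i-k)n^i + (\numlevels^i-k+1)\inequaldim^i + \equaldim^i$ level by level and sums over $k$, adding the $n^i$ primal entries. Your count of equations and inequalities, read off from \cref{eqn:reduced-kkt-top-level,eqn:reduced-kkt-top-level-G}, also supplies the part that the paper's appendix proof leaves implicit.
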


\begin{proof}
    The proof can be found in the Appendix.
\end{proof}

The reduction in complexity naturally raises the question of whether the reduced KKT system preserves the set of primal solutions.
\Cref{thm:reduced-is-relaxation-of-complete} establishes one direction: every primal solution of the complete KKT system satisfies the reduced KKT system via reconstruction of reduced-system dual variables.
We now study the converse, namely whether a reduced-system solution can be lifted to a complete-system solution. 
We prove this converse for quadratic GOOPs. For general GOOPs with nonquadratic objectives and nonlinear constraints, establishing the same equivalence is substantially more challenging, because newly arising higher-order terms break the term-by-term correspondence between the two systems. 
\subsection{Equivalence Between Complete and Reduced KKT Systems in Quadratic GOOPs}
\label{subsec:equivalence-quadratic-goop}
% \jingqi{A Corollary for nonquadratic goop, with linear constraints. The old KKT and new KKT systems have the same primal solution. Add a theorem on equivalence in general setting for linear constraints and locally convex objective functions}
% \dongho{For now, we consider the case where all players have the same preference levels. Distinct preference levels is treated later. }
We now establish primal solution equivalence between the complete and reduced KKT systems for quadratic GOOPs. We first treat the equality-constrained case and then extend the result to the inequality-constrained case.
\subsubsection{Equality-Constrained Quadratic GOOPs}
\label{sec:equality-constained-quadratic-goop}
% Assume that each player $i$ in \cref{eqn:goop-K-level} has partially convex quadratic objective functions (in $\var^i$) and is subject only to linear equality constraints.
At any level $k$, assume that each player $i$'s objective is given by: 
\begin{equation}
\label{eq:quadratic-cost}
    \cost{i}_k(\var) \coloneqq \frac{1}{2}\var^\top Q_k^i\var + q^{i \top}_k \var,
\end{equation}
where $\var \coloneqq [(\var^i)^\top, (\var^{\neg i})^\top]^\top \in \mathbb{R}^n$, $Q_k^i \in \mathbb{R}^{n \times n}$, and $q_k^i \in \mathbb{R}^{n}$.
We partition the matrix $Q_k^i$ and vector $q^{i}_k$ into blocks $Q_k^{i,j,k} \in \mathbb{R}^{n^j \times n^k}$ and $q_k^{i,j} \in \mathbb{R}^{n^j}$ for all $i,j,k\in[\numplayers]$: 

\begin{equation}
\label{eqn:player-quadratic-cost}
Q_k^i \coloneqq
    \begin{bNiceArray}{ccc}
    Q_k^{i,1,1} & \cdots & Q_k^{i,1,\numplayers} \\
    \vdots & \ddots & \vdots \\
    Q_k^{i,\numplayers,1} & \cdots & Q_k^{i,\numplayers,\numplayers}
    \end{bNiceArray}, \quad
    q_k^i \coloneqq
    \begin{bNiceArray}{c}
    q_k^{i,1} \\
    \vdots \\
    q_k^{i,\numplayers}
    \end{bNiceArray}.
\end{equation}

We impose the following regularity assumption: 
\begin{equation}
\label{eq:quadratic-goop-regularity-assumption}
    Q_k^{i,j,j} \succeq \bm{0}, \;\; \forall i,j \in [\numplayers],
\end{equation}
and define the following matrices, which combine components in \cref{eqn:player-quadratic-cost}:
\begin{equation}
\label{eq:Q-mat-quadratic-goop}
     Q_{k}\hspace{-0.1em} :=\hspace{-0.1em} 
    \begin{bNiceArray}{cccc}
    Q_k^{1,1,1} & Q_k^{1,1,2} & \cdots &  Q_k^{1,1,N} \\[1ex]
    Q_k^{2,2,1} & Q_k^{2,2,2}& \cdots &  Q_k^{2,2,N} \\
    \vdots & \vdots & \ddots & \vdots \\
    Q_k^{\numplayers,\numplayers,1} & Q_k^{\numplayers,\numplayers,2} & \cdots & Q_k^{\numplayers,\numplayers,\numplayers}
    \end{bNiceArray}, \hspace{-0.1em}
    \hat Q_{k} \hspace{-0.1em}\coloneqq
    \begin{bNiceArray}{ccc}
    Q_k^{1,1,1} &  &  \\
     & \ddots &  \\
     &  & Q_k^{
    \numplayers,\numplayers,\numplayers}
    \end{bNiceArray},  
    q_k \coloneqq \begin{bmatrix}
        q^1_k \\ 
        \vdots\\
        q^\numplayers_k
    \end{bmatrix}.
\end{equation}
The matrix $Q_k$ consists of $i^\mathrm{th}$-block rows of $Q^i_k$.
By \cref{eq:quadratic-goop-regularity-assumption}, the block-diagonal matrix $\hat{Q}_k$ is positive semidefinite, and, thus, player $i$'s objective is partially convex in $\var^i$, i.e., $\nabla_{\var^i}^2 J^i_k $ is a positive semidefinite matrix for each player $i\in [N]$ at each level $k\in [K]$.
 
The linear equality constraints imposed on each player $i$ take the form of: 
\begin{equation}
\label{eq:linear-equality-constraints}
    H^i\var = {h}^i, \quad H^i \coloneqq
    \begin{bmatrix}
    H^i_{1} &\cdots & H^i_{\numplayers}
    \end{bmatrix}, 
\end{equation}
for matrix $H^i \in \mathbb{R}^{\equaldim^i \times n}$, vector $h^i \in \mathbb{R}^{\equaldim^i}$, and submatrices $H_j^i \in \mathbb{R}^{\equaldim^i \times n^j}$.
Define
\begin{equation}
    \label{eq:H-br-H-bd-quadratic-goop}
    H \coloneqq
    \begin{bNiceArray}{ccc}
    H^1_{1} &\cdots & H^1_{\numplayers} \\
    \vdots &\ddots & \vdots \\
    H^\numplayers_{1} & \cdots & H^\numplayers_{\numplayers} \\
    \end{bNiceArray}, \quad 
    \hat H \coloneqq
    \begin{bNiceArray}{ccc}
    H^1_{1} &  & \\
      &\ddots &   \\
      &  & H^\numplayers_{\numplayers} \\
    \end{bNiceArray}, \quad h\coloneqq \begin{bmatrix}
        h^1\\ \vdots \\ h^N
    \end{bmatrix}.
\end{equation}
The matrix $H$ is the block-row concatenation of $\{H^i\}_{i=1}^{N}$, and the matrix $\hat H$ is a block-diagonal matrix that retains only the diagonal blocks $H_{i}^i$ from $H$. 
\begin{assumption}[Full row rank of $\hat H$]
\label{ass:hatH-full-row-rank}
We assume that the block-diagonal matrix \(\hat H\) has full row rank. Equivalently, each diagonal block \(H_i^i\) has full row rank.
\end{assumption}

\paragraph{Different numbers of preference levels}
Players need not share a common number of preference levels. 
For analytical convenience, we regard each player as having levels $k\in\{1,\dots,\numlevels\}$ where $\numlevels \coloneqq \max\{K^i : i\in[\numplayers]\}.$
If player $i$ has no level $k$ (i.e., $\numlevels^i < k \leq \numlevels$), we set the player's objective to $ Q_k^i = \bm{0},\; q_k^i = \bm{0}$,
% so that this level constitutes no preference. 
which retains the regularity assumption \cref{eq:quadratic-goop-regularity-assumption} and positive semidefiniteness of $\hat Q_k$. 
% In what follows, we use the common index range $k=1,\dots,\numlevels$.

% \todo{add dimensions?, appendix an example of 2-player bilevel, trilevel quadratic goop}
Using the objective and constraint terms defined above, we develop the recursive structure of the complete and reduced KKT systems. 
% The key idea in the recursion is that at any given level $k$, the coefficient matrix $\bar M_k$ \david{unclear what this is the coefficient of, and which KKT system it refers to} (respectively, $M_k$) contains the coefficient matrix of the immediate one lower level $\bar M_{k+1}$ (respectively, $M_{k+1}$).  
The key idea in the recursion is that, at each level $k$, the complete KKT system matrix $\bar M_k$ (resp., the reduced KKT system matrix $M_k$) is \emph{nested}: it contains, as a principal block, the level-$(k+1)$ KKT system matrix $\bar M_{k+1}$ (resp., $M_{k+1}$).
This enforces the inter-level optimality chain where the upper-level's decision space is constrained by lower-level optimality. 
Furthermore, we will repeatedly use the following matrix for the recursive structure:
\begin{equation}
\label{eq:bar-R-k-quadratic-goop}
    \bar R_{k} \coloneqq
\begin{bNiceArray}[margin]{cc|c}
\hat Q_k & \bm{0} & \Block{2-1}{\bar R_{k+1}}  \\
       \bm{0} & \bm{0} &                   \\
\hline
\Block{2-2}{\bar R_{k+1}} &  & \Block{2-1}{\bm{0}}   \\
&  &  
\end{bNiceArray}, \quad \forall k \in [\numlevels-1], \quad \bar R_\numlevels \coloneqq \begin{bmatrix}
    \hat Q_{\numlevels} & \hat{H}^\top \\
    \hat H & \bm{0}
\end{bmatrix}.
\end{equation}
By construction, $\bar R_k$ is symmetric for all $k$.
% By \cref{thm:quadratic-growth-reduced-kkt}, the reduced KKT system has fewer variables and constraints than the complete KKT system.
To align dimensions between the complete and reduced systems, 
% we pad the reduced KKT system with zero rows and treat the corresponding variables as placeholders (see \cref{eq:reduced-kkt-quadratic-goop}).
we use the matrix
% define
\begin{equation}
\label{eqn:R_{k+1}}
    R_{k+1} \coloneqq \begin{bNiceArray}[margin]{c}
    \bar R_{k+1,[1:n,:]} \\[0.5ex]
    \hline
    \bm{0}
\end{bNiceArray},
\end{equation} which takes the first $n$ rows of the matrix $\bar R_{k+1}$ and appends zeros below (see \cref{eq:reduced-kkt-quadratic-goop}). 
These $n$ rows correspond to the stationarity condition with respect to the primal variable $z$ in \cref{eqn:k-level-reduced-kkt-F}.
Moreover, we augment playerwise lower-level dual variables (e.g., $\bar\lambda_\numlevels = [(\bar\lambda^1_\numlevels)^\top,\dots,(\bar\lambda^\numplayers_\numlevels)^\top]^\top$) and use the notation $\bar\eta_{\numlevels:k+1} = [(\bar\eta_\numlevels)^\top,\dots,(\bar\eta_{k+1})^\top]^\top$. 
We present the resulting complete and reduced KKT systems next.

\textbf{Recursive structure of the complete KKT system.}
The conditions \cref{eqn:k-level-complete-kkt-F} for a quadratic \ac{goop} at any level $k$ admit the following linear representation: 
% \david{bit awkward formatting with the for all k qualifier and the commas}
\begin{equation}
    \label{eq:complete-kkt-quadratic-goop}
    \begin{bNiceArray}[margin]{cc|c}
    Q_k & \bm{0} & \Block{2-1}{\bar R_{k+1}}  \\
           \bm{0} & \bm{0} &                   \\
    \hline
    \Block{2-2}{\bar M_{k+1}} &  & \Block{2-1}{\bm{0}}   \\
    &  &    
    \CodeAfter
    %     \OverBrace[ , yshift=3pt]{1-1}{1-1}{c_k}
    \UnderBrace[ , yshift=4pt]{4-1}{4-3}{\bar M_k}
    \end{bNiceArray} 
        \begin{bNiceArray}[margin]{c}
        \bar \var \\
        % \begin{bmatrix}\bar \lambda_\numlevels \\
        % \bar \psi_{\numlevels-1} \\[-1ex]
        % \vdots \\[-1ex]
        % \bar \lambda_{k+1}\end{bmatrix} \\
        \bar \eta_{\numlevels:k+1} \\[1ex]
        \hline\\[-2ex]
        \bar \eta_{k}
        % \begin{bmatrix}\bar \psi_{k} \\
        % \bar \lambda_{k}\end{bmatrix}
        % \bar \psi_{k,1}\\
        % \bar \lambda_{k,1}\\
        % \vdots
        % \bar \psi_{k,2^{\numlevels-k}}\\
        % \bar \lambda_{k,2^{\numlevels-k}}
    \CodeAfter
    \UnderBrace[ , yshift=10pt]{3-1}{3-1}{\bar v_k}
    \end{bNiceArray} =         
    \begin{bNiceArray}[margin]{c}
        q_k \\
        \bm{0} \\
        \hline \\[-2ex]
        q_{k+1}  
    \CodeAfter
    \UnderBrace[ , yshift=4pt]{4-1}{4-1}{\bar p_k}
    \end{bNiceArray}, \hspace{0.4em} \forall k\in[\numlevels-1].\vspace{1.2em}
% \bigskip
\end{equation}

\textbf{Recursive structure of the reduced KKT system.}
The conditions \cref{eqn:k-level-reduced-kkt-F} for a quadratic \ac{goop} at any level $k$ can be expressed as:
\begin{equation}
    \label{eq:reduced-kkt-quadratic-goop}
    \begin{bNiceArray}[margin]{cc|c}
    Q_k & \bm{0} & \Block{2-1}{R_{k+1}}  \\
           \bm{0} & \bm{0} &                   \\
    \hline
    \Block{2-2}{ M_{k+1}} &  & \Block{2-1}{\bm{0}}   \\
    &  &    
    \CodeAfter
    %     \OverBrace[ , yshift=3pt]{1-1}{1-1}{c_k}
    \UnderBrace[ , yshift=4pt]{4-1}{4-3}{M_k}
    \end{bNiceArray} 
        \begin{bNiceArray}[margin]{c}
        \var \\
        % \begin{bmatrix}\bar \lambda_\numlevels \\
        % \bar \psi_{\numlevels-1} \\[-1ex]
        % \vdots \\[-1ex]
        % \bar \lambda_{k+1}\end{bmatrix} \\
        \eta_{\numlevels:k+1} \\[1ex]
        \hline\\[-2ex]
        \eta_k
        % \begin{bmatrix}\bar \psi_{k} \\
        % \bar \lambda_{k}\end{bmatrix}
        % \bar \psi_{k,1}\\
        % \bar \lambda_{k,1}\\
        % \vdots
        % \bar \psi_{k,2^{\numlevels-k}}\\
        % \bar \lambda_{k,2^{\numlevels-k}}
    \CodeAfter
    \UnderBrace[ , yshift=10pt]{3-1}{3-1}{v_k}
    \end{bNiceArray} =         
    \begin{bNiceArray}[margin]{c}
        q_k \\
        \bm{0} \\
        \hline \\[-2ex]
         q_{k+1}  
    \CodeAfter
    \UnderBrace[ , yshift=4pt]{4-1}{4-1}{p_k}
    \end{bNiceArray}, \hspace{0.4em} \forall k\in[\numlevels-1].\vspace{1.20em}
% \bigskip
\end{equation}

For the base level $k=\numlevels$, the complete and reduced KKT systems coincide: 
% i.e., $  M_\numlevels \coloneqq 
%     \bar M_\numlevels, v_\numlevels \coloneqq \bar v_\numlevels$ and $p_\numlevels \coloneqq \bar p_\numlevels$.
\begin{equation}
        \underbrace{\begin{bmatrix}
    Q_{\numlevels} & \hat{H}^\top \\
    H & \bm{0}
\end{bmatrix}}_{M_\numlevels = \bar M_\numlevels} \underbrace{\begin{bmatrix}
     z \\
    \lambda_\numlevels
\end{bmatrix}}_{v_\numlevels = \bar v_\numlevels}\hspace{-0.1em} =\hspace{-0.1em} \underbrace{\begin{bmatrix}
    q_\numlevels \\
    h
\end{bmatrix}}_{p_\numlevels = \bar p_\numlevels}.
\end{equation}
For any $k\in[\numlevels]$, the top-left blocks of $\bar M_k$ in \cref{eq:complete-kkt-quadratic-goop} and $M_k$ in \cref{eq:reduced-kkt-quadratic-goop} contain the same $ Q_k$ term on the leading diagonal; thus, the two matrices share the first $n$ columns.
Furthermore, because the right-hand-side vector is identical, $p_k = \bar p_k$, it follows that $\bar M_k \bar v_k = M_k v_k$. 
This fact is the basis for the next result, where we establish equivalence of the primal solution sets via a column-space relationship between $\bar M_k$ and $M_k$.

\begin{theorem}[Primal solution equivalence in quadratic \acp{goop}]
\label{thm:ng-og-primal-equivalence}
Consider a GOOP problem \cref{eqn:goop-K-level} with partially convex quadratic objectives \cref{eq:quadratic-cost} and linear equality constraints \cref{eq:linear-equality-constraints} under \cref{ass:hatH-full-row-rank}.
% Let $k$ denote the preference level. 
Then the following hold:
\begin{itemize}
  \item[(a)] 
    $\mathrm{Col}\big(R_{k}\big) \subseteq \mathrm{Col}\big(\bar R_{k}\big), ~\forall k \in [\numlevels].$
  \item[(b)] Denote the first $n$ columns shared by both matrices $\bar M_k$ and $M_k$ as $C_k$, 
  and consider the partition: $\bar M_k = \big[ C_k ~|~ \bar M_k^c\big]$ and $M_k = \big[ C_k ~|~ M_k^c\big]$. 
Then we have 
$\mathrm{Col}\big(M_{k}^c\big) \subseteq \mathrm{Col}\big(\bar M_{k}^c\big), ~\forall k\in[\numlevels-1]$. 
  \item[(c)] The primal solution sets of the complete KKT system \cref{eq:complete-kkt-quadratic-goop} and reduced KKT system \cref{eq:reduced-kkt-quadratic-goop} coincide for all $k\in[\numlevels]$.
\end{itemize}
\end{theorem}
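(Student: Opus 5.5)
Write $\mathrm{Col}(\cdot)$ for column space. The plan is to prove (a) by downward induction on $k$, deduce (b) from (a) by comparing block columns, and obtain (c) from (b) together with \Cref{thm:reduced-is-relaxation-of-complete}.

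\textbf{Part (a).} The base case $k=\numlevels$ rests on symmetry of $\bar R_\numlevels$. Here $R_\numlevels$ consists of the first $n$ rows $[\hat Q_\numlevels \;\; \hat H^\top]$ followed by zeros. Its column space is therefore spanned by vectors $(\hat Q_\numlevels x + \hat H^\top y, 0)$.

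To show these lie in $\mathrm{Col}(\bar R_\numlevels)$, I need, for given $(x,y)$, some $(u,w)$ with
\[
\hat Q_\numlevels u + \hat H^\top w = \hat Q_\numlevels x + \hat H^\top y
\quad\text{and}\quad
\hat H u = 0 .
\]
I would argue via orthogonal complements. Since $\bar R_\numlevels$ is symmetric, $\mathrm{Col}(\bar R_\numlevels)=\mathrm{Null}(\bar R_\numlevels)^\perp$. So it suffices to show that every $(a,b)\in\mathrm{Null}(\bar R_\numlevels)$ is orthogonal to $(\hat Q_\numlevels x+\hat H^\top y,0)$, i.e.\ that $a^\top(\hat Q_\numlevels x+\hat H^\top y)=0$.

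Take such a null vector, so $\hat Q_\numlevels a+\hat H^\top b=0$ and $\hat H a=0$. Then $a^\top\hat Q_\numlevels a=-a^\top \hat H^\top b=0$. Because $\hat Q_\numlevels\succeq 0$, this forces $\hat Q_\numlevels a=0$. Then $\hat H^\top b=0$, and full row rank of $\hat H$ gives $b=0$. Consequently $a^\top\hat Q_\numlevels x=0$ and $a^\top\hat H^\top y=(\hat Ha)^\top y=0$, which is what we need. This base step is exactly where partial convexity and \Cref{ass:hatH-full-row-rank} are used.

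For the inductive step, assume the claim for $k+1$. I would repeat the same null-space argument on the recursive block form \cref{eq:bar-R-k-quadratic-goop}. A null vector $(a,c,d)$ of $\bar R_k$ satisfies:
\begin{itemize}
\item[(i)] $\hat Q_k a + (\bar R_{k+1}d)_{\text{top}}=0$ and $(\bar R_{k+1}d)_{\text{bottom}}=0$;
\item[(ii)] $\bar R_{k+1}(a,c)=0$.
\end{itemize}
Pairing (i) with $(a,c)$ and using symmetry of $\bar R_{k+1}$ together with (ii) should give $a^\top\hat Q_k a=0$, hence $\hat Q_k a=0$. Then $\bar R_{k+1}d=0$.

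The remaining requirement is orthogonality to the first-$n$-row vectors $\big(\hat Q_k x+(\bar R_{k+1}y)_{[1:n]},0\big)$. This follows from $\hat Q_k a=0$ together with $(a,c)\in\mathrm{Null}(\bar R_{k+1})$ and symmetry. I expect the bookkeeping of which rows count as ``first $n$'' to be the main obstacle, and would check it carefully against \cref{eqn:R_{k+1}}.

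\textbf{Part (b).} By construction, $M_k^c$ and $\bar M_k^c$ share the zero columns arising from the $\eta_{\numlevels:k+1}$ block in the top rows. Their remaining columns are $\begin{bmatrix}R_{k+1}\\ 0\end{bmatrix}$ stacked with $M_{k+1}$, respectively $\begin{bmatrix}\bar R_{k+1}\\ 0\end{bmatrix}$ stacked with $\bar M_{k+1}$.

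I would prove $\mathrm{Col}(M_k^c)\subseteq\mathrm{Col}(\bar M_k^c)$ by a simultaneous induction with (a). The $R$-part is supplied by (a). For the $M_{k+1}$-part, I use that $M_{k+1}=[C_{k+1}\,|\,M_{k+1}^c]$ shares $C_{k+1}$ with $\bar M_{k+1}$, and that the induction hypothesis gives $\mathrm{Col}(M_{k+1}^c)\subseteq\mathrm{Col}(\bar M_{k+1}^c)$.

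\textbf{Part (c).} One inclusion of the primal solution sets is \Cref{thm:reduced-is-relaxation-of-complete}. For the converse, take a reduced solution $(z,\eta)$, so $C_k z + M_k^c\eta = p_k=\bar p_k$. By (b) there is $\bar\eta$ with $\bar M_k^c\bar\eta=M_k^c\eta$, and therefore $\bar M_k(z,\bar\eta)=\bar p_k$. Hence $z$ is also a primal solution of the complete system, and the two primal solution sets coincide.
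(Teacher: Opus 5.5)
Your plan follows the paper's proof closely: the null-space/positive-semidefiniteness argument for (a), the block recursion $M_k^c=\begin{bmatrix}\bm{0} & R_{k+1}\\ M_{k+1}^c & \bm{0}\end{bmatrix}$ for (b), and lifting a reduced solution through (b) for (c). However, the last step of your inductive step in (a) does not follow from what you cite.

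You need $a^\top(\bar R_{k+1}y)_{[1:n]}=0$ for all $y$, i.e.\ $\bar R_{k+1}(a,\bm{0})=\bm{0}$. Symmetry together with $(a,c)\in\mathrm{Null}(\bar R_{k+1})$ gives only $(a,c)^\top\bar R_{k+1}y=0$. That leaves the term $(\bm{0},c)^\top\bar R_{k+1}y$, which need not vanish for a general symmetric matrix with $(a,c)$ in its null space.

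What closes the step is the inductive hypothesis, which you assumed but never used. The inclusion $\mathrm{Col}(R_{k+1})\subseteq\mathrm{Col}(\bar R_{k+1})=\mathrm{Null}(\bar R_{k+1})^\perp$ says exactly that every null vector $(a,c)$ of $\bar R_{k+1}$ is orthogonal to every $\big((\bar R_{k+1}y)_{[1:n]},\bm{0}\big)$, which is the identity you need. Equivalently, (a) at level $j$ is the statement ``$(a,c)\in\mathrm{Null}(\bar R_j)\Rightarrow(a,\bm{0})\in\mathrm{Null}(\bar R_j)$,'' and your computation shows it propagates from $j=k+1$ to $j=k$. With that sentence added, (a) is complete, and your base case at $k=\numlevels$ starts the induction. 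Note that $b=\bm{0}$ is never used in the base case, so full row rank of $\hat H$ is not actually needed for (a).

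For comparison, the paper's proof of (a) is not inductive. It asserts the same vanishing directly ``from $\bar R_{k+1}[u;v]=\bm{0}$ and the symmetry of $\bar R_{k+1}$,'' and it never treats $k=\numlevels$ separately, even though the base case of (b) invokes it. Your inductive framing, once the hypothesis is actually invoked, is the cleaner route.

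The other difference is in (c). You obtain complete $\subseteq$ reduced by citing \cref{thm:reduced-is-relaxation-of-complete}. The paper instead argues directly: in the padded form, $M_k$ is $\bar M_k$ with some zero-right-hand-side rows replaced by zero rows, so the same vector $\bar v_k$ solves both systems. Both routes are valid. The direct argument stays inside the matrix formulation and avoids importing a theorem stated under \cref{assum:cost-feasible-set}, whose compactness requirement an affine feasible set generally violates (although that theorem's proof is purely algebraic and never uses it).
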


\begin{proof}
    The proof can be found in the Appendix.
\end{proof}

\cref{thm:ng-og-primal-equivalence} formalizes the structural relationships that ensure coincidence of the primal solution sets (even though the corresponding dual variables may differ).

\subsubsection{Inequality-Constrained Quadratic GOOP}
\label{sec:inequality-constained-quadratic-goop} 
We now extend the preceding result in \Cref{sec:equality-constained-quadratic-goop} to quadratic GOOPs with linear inequality constraints.
Consider a \ac{goop} problem \cref{eqn:goop-K-level} with quadratic objectives \cref{eq:quadratic-cost} and linear equality constraints \cref{eq:linear-equality-constraints}.
Assume that each player $i$ also has inequality constraints: 
\begin{equation}
    \label{eq:linear-inequality-constraints}
    G^i \var \geq g^i, \quad
    G^i \coloneqq \begin{bmatrix}
    G^i_{1} &\cdots & G^i_{\numplayers}
    \end{bmatrix},
\end{equation}
for matrix $G^i \in \mathbb{R}^{\inequaldim^i \times n}$, vector $ g^i \in \mathbb{R}^{\inequaldim^i}$ and submatrices $G^i_j \in \mathbb{R}^{\inequaldim^j \times n^j}$.
% The inequality-constrained \ac{goop} takes the form of 
% \begin{equation}
%     \label{eqn:goop-K-level-inequality}
% \begin{aligned}
%     \min_{\var^i_1} \quad & \frac{1}{2}\var_1^{\top} Q^i_1 \var_1 + r_1^{\top} \var_1\\
%     \st \quad & \var_1 \in \argmin_{\var_2} \: \frac{1}{2}\var_2^{\top} Q^i_2 \var_1^2 + r_2^{\top} \var_2 \\
%     & \phantom{\st\quad\qquad} \rotatebox[origin=c]{150}{$\ddots$}\\[-1em]
%     & \phantom{\st\qquad\,} \st \quad \var_{K-1} \in \argmin_{\var_K} \: \frac{1}{2}\var_K^{\top} Q^i_K \var_K + r_K^{\top} \var_K \\
%     & \qquad\phantom{\st\qquad\st\qquad\qquad} \st \quad H^i \var_K - \tilde h^i = 0, \; G^i \var_K - \tilde g^i \geq 0.
% \end{aligned}
% \end{equation}
Define the active set at a solution $\var^*$ as
\begin{equation}
    \mathcal A^i(\var^*) \coloneqq \{\, j \mid (G^i \var^* - g^i)_j = 0\,\}. 
\end{equation}
Let matrix $G^i_{\mathcal{A}^i}$ be the submatrix of $G^i$ with rows indexed by $\mathcal A^i(\var^*)$, and likewise index the entries of the vector $g^i$ corresponding to active inequality constraints as $g^i_{\mathcal{A}^i}$. 
Define the vector $\tilde h^i \coloneqq [h^i;g^i_{\mathcal{A}^i}]$.
We present the primal solution equivalence result for the inequality-constrained case in \cref{thm:ng-og-primal-equivalence-inequality}.
\begin{theorem}[Primal solution equivalence in inequality-constrained quadratic \acp{goop}]
\label{thm:ng-og-primal-equivalence-inequality}
Consider a \ac{goop} problem \cref{eqn:goop-K-level} with partially convex quadratic objectives \cref{eq:quadratic-cost}, linear equality constraints \cref{eq:linear-equality-constraints}, and inequality constraints \cref{eq:linear-inequality-constraints} for each player.
Assume that at every solution $\var^{*}$, the following hold for player $i$:

\textbf{1. Strict complementarity.}
    At the innermost level $\numlevels^i$, 
    % where $\myhl{\numlevels \coloneqq \max\{K^i : i\in[\numplayers]\}}$, 
    every active inequality constraint has a positive multiplier, i.e., 
    $ (G^i\var^{*}-g^i)_j = 0$ and
    $\gamma^{i*}_{\numlevels^i,j} > 0, \;\; \forall j \in \mathcal{A}^i(z^{*})$.

\textbf{2. Regularity.} The matrix  
    \begin{equation}
    \label{eq:inequality-licq}
         \tilde H \coloneqq \begin{bNiceArray}{ccc}
    \begin{bmatrix}
        H^1_1 \\
        G^1_{1,\mathcal{A}^1}
    \end{bmatrix} &  &   \\
      &\ddots &   \\
      &   & \begin{bmatrix}
        H^\numplayers_\numplayers \\
        G^\numplayers_{\numplayers,\mathcal{A}^\numplayers}
    \end{bmatrix} \\
    \end{bNiceArray}
    \end{equation} has full row rank. Equivalently, each diagonal block $\begin{bmatrix}H_i^i\\G_{i,\mathcal{A}^i}^i\end{bmatrix}$ has full row rank.
    
Then, the set of primal solutions of the reduced KKT system $({\mathcal{F}},{\mathcal{G}})$ coincides with that of the complete KKT system $(\bar{\mathcal{F}},\bar{\mathcal{G}})$. 
% \david{I agree with Jingqi's margin comment.}
\end{theorem}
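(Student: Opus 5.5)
One direction is already available. By \cref{thm:reduced-is-relaxation-of-complete}, every primal solution of $(\bar{\mathcal F},\bar{\mathcal G})$ is a primal solution of $(\mathcal F,\mathcal G)$. It therefore remains to show the converse: if $(\var^*,\eta^*)$ solves the reduced KKT system, then some complete-system multipliers $\bar\eta^*$ make $(\var^*,\bar\eta^*)$ solve $(\bar{\mathcal F},\bar{\mathcal G})$. The plan is to reduce this to the equality-constrained case of \cref{thm:ng-og-primal-equivalence}. We freeze the active set $\mathcal A^i(\var^*)$, treat the active inequalities as additional linear equalities $G^i_{\mathcal A^i}\var = g^i_{\mathcal A^i}$, and replace $\hat H$ and $h$ by $\tilde H$ and $\tilde h$. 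Regularity (full row rank of $\tilde H$) is exactly \cref{ass:hatH-full-row-rank} for this augmented system.

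\textbf{Step 1 (multipliers of inactive constraints vanish).} For $j\notin\mathcal A^i(\var^*)$ we have $(G^i\var^*-g^i)_j>0$. The complementarity rows $\inequality^i\odot\gamma_k^i=\bm 0$ in \cref{eqn:reduced-kkt-top-level} then force $\gamma^{i*}_{k,j}=0$ for every level $k$. The $\phi$-terms in \cref{eqn:k-level-Lagrangian} contribute $G^{i\top}\mathrm{diag}(\gamma_{K^i-\ell+1})\phi_{k,\ell}$ to the stationarity rows, and on inactive indices this is zero. On active indices $\inequality^i$ vanishes, so the $\phi$ multipliers enter only through $G^i_{\mathcal A^i}$ times a free vector. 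In particular $\phi$ can be absorbed into the $\gamma$ multiplier of an active row. After deleting inactive rows and variables, the reduced system at $\var^*$ therefore becomes precisely the linear system \cref{eq:reduced-kkt-quadratic-goop} with $(\hat H,H,h)$ replaced by $(\tilde H,[H;G_{\mathcal A}],\tilde h)$. The one remaining requirement is the sign condition $\gamma_{k,\mathcal A}\ge 0$.

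\textbf{Step 2 (apply the equality result and lift).} Part (c) of \cref{thm:ng-og-primal-equivalence}, applied to the augmented equality-constrained quadratic GOOP, produces complete-system multipliers for the equality version solving \cref{eq:complete-kkt-quadratic-goop} with the same $\var^*$. We embed them into $(\bar{\mathcal F},\bar{\mathcal G})$ as follows. Inactive $\bar\gamma$ components are set to zero. Each active equality multiplier becomes a $\bar\gamma$ multiplier. The complete-system multipliers attached to lower-level dual-feasibility and complementarity rows (the $\bar\gamma_{k,\cdot}$ and $\bar\phi$ blocks of \cref{eqn:complete-Lagrangian-k}) are chosen to absorb the stationarity rows with respect to the induced primals $\bar\gamma_{k+1:\numlevels^i}$. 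This is feasible because at active indices the product $g\odot\bar\gamma$ has zero derivative in $\bar\gamma$, while $\bar\gamma_{k+1}>0$ makes those rows' complementarity partners free.

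\textbf{Step 3 (signs, the main obstacle).} The equality-constrained lift gives no control over the signs of upper-level active multipliers. So we must show either that these multipliers can be chosen nonnegative, or that their sign is unconstrained in the complete system. Strict complementarity at the innermost level, $\gamma^{i*}_{\numlevels^i,\mathcal A}>0$, is the lever here. Since $\bar\gamma_{\numlevels^i,\mathcal A}$ appears as an induced primal at level $\numlevels^i-1$ and is strictly positive, the upper-level multiplier for its nonnegativity constraint must be zero. The level-$(\numlevels^i-1)$ problem thus locally sees the active set as equality constraints, that is, locally $\inequality^i_{\mathcal A}=0$ is enforced by lower-level optimality. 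Hence the upper-level $\bar\gamma_{k,\mathcal A}$ multiplies constraints that are effectively equalities, and its sign can be fixed by a shift along the $\bar\phi$ direction (the product-term multiplier $\bar\phi$ has free sign). First I would verify this shift on the two-level case and then induct downward in $k$. Making this sign bookkeeping rigorous for general $K$ is the step I expect to be the hardest.
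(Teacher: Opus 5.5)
There is a genuine gap, and it is where you suspect: your Step 3 is not an argument, and the mechanism it proposes fails inside $(\bar{\mathcal F},\bar{\mathcal G})$.

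A shift along $\bar\phi$ is not free in the complete system, because $\bar\phi^i_k$ (a block of $\bar\eta^i_k$) is an induced primal at level $k-1$. So $\bar{\mathcal F}^i_{k-1}$ imposes $\nabla_{\bar\phi^i_k}\bar\Lagrangian^i_{k-1}=\bm{0}$. Moreover, $\bar\phi^i_k$ appears in the rows $\nabla_{\bar\gamma^i_{m,1}}\bar\Lagrangian^i_k$ through $-\bar\phi^i_k\odot\inequality^i$, and the $\var^i$-derivative of these rows enters the level-$(k-1)$ stationarity weighted by $\bar\psi^i_{k-1}$. Each sign repair at level $k$ therefore injects new $G^{i\top}_i(\cdot)$ terms and new conditions at level $k-1$. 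These must be absorbed by active multipliers whose signs then need repair again. Nothing in your proposal controls this cascade. Calling the active rows ``effectively equalities'' does not help either, since $\bar{\mathcal G}^i_k$ still imposes $\bar\gamma^i_{k:\numlevels^i}\ge\bm{0}$ explicitly.

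Your justification in Step 2 is also backwards. Where $\bar\gamma^i_{m,1,j}>0$ (e.g.\ $m=\numlevels^i$, $j\in\mathcal A^i$), the block $\bar{\mathcal G}^i_{k+1}\odot\bar\gamma^i_k=\bm{0}$ forces the matching component of $\bar\gamma^i_{k,2}$ to vanish rather than freeing it. The $\bar\phi$-term also has zero derivative there, because $\inequality^i_j=0$. So those induced-primal rows cannot be absorbed at all. They hold only because you identify $\bar\gamma^i_{m,1,\mathcal A}$ with the multipliers of $G^i_{\mathcal A^i}\var=g^i_{\mathcal A^i}$, whose stationarity the equality-case complete system already enforces. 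It is at inactive indices, where $\inequality^i_j>0$, that $\bar\phi$ does the absorbing.

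The paper never lifts complete $\mathcal E$-multipliers to complete $\mathcal I$-multipliers. Its proof has three steps:
\begin{enumerate}
\item[(1)] It freezes $\mathcal A^i(\var^*)$ and argues at the level of the optimization problems that $\mathcal P_{\mathcal I}$ and $\mathcal P_{\mathcal E}$ have the same local solutions at $\var^*$.
\item[(2)] It applies \cref{thm:ng-og-primal-equivalence} to $\mathcal P_{\mathcal E}$.
\item[(3)] It relates the reduced systems of $\mathcal E$ and $\mathcal I$; your Step 1 is its Step 3.1. Its only explicit sign construction, $(\gamma^i_k)_r=\max\{(\lambda^{i,\mathcal E}_{k,2})_r,0\}$ and $(\phi^i_{k,1})_r=\min\{(\lambda^{i,\mathcal E}_{k,2})_r,0\}/(\gamma^i_{\numlevels^i})_r$, is done in the reduced system. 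There $\phi^i_{k,1}$ carries no stationarity condition, so no cascade arises.
\end{enumerate}
For the direction you need, the substitute for your Step 3 is that problem-level bridge. Your own remark that level $\numlevels^i-1$ ``locally sees the active set as equalities'' already points to it. It can be made precise as follows, with $\var^{\neg i*}$ fixed:
\begin{enumerate}
\item[(i)] By $Q^{i,i,i}_k\succeq\bm{0}$, every level of $\mathcal P_{\mathcal E}$ is a convex quadratic program over an affine set (in the lifted variables, with linear constraints). The complete $\mathcal E$-conditions at $\var^*$ are therefore sufficient, so $\var^*$ solves $\mathcal P_{\mathcal E}$.
\item[(ii)] Strict complementarity at level $\numlevels^i$, together with partial convexity, makes the innermost solution sets of $\mathcal P_{\mathcal E}$ and $\mathcal P_{\mathcal I}$ coincide near $\var^*$. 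Hence $\var^*$ solves $\mathcal P_{\mathcal I}$.
\item[(iii)] Under \cref{assum:cost-feasible-set}, \cref{thm:complete-kkt-system-necessary-conditions} then supplies complete-system multipliers with the correct signs, with no sign bookkeeping at all.
\end{enumerate}
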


\begin{proof}
    The proof can be found in the Appendix.
\end{proof}

The strict complementarity assumption in \cref{thm:ng-og-primal-equivalence-inequality} provides the nondegeneracy needed at the innermost level to reconstruct the relevant reduced-system dual variables and initiate the recursive argument via \cref{thm:ng-og-primal-equivalence} across higher levels.
\begin{remark}
    The assumption of strict complementarity at the innermost level guarantees the local reduction to an equality-constrained subproblem. Characterizing primal equivalence under weaker conditions (even in quadratic GOOPs) remains a challenge as the active set may no longer be uniquely identified by the strictly positive multipliers in these degenerate cases.
\end{remark}

\subsection{From Quadratic GOOPs to Nonquadratic GOOPs}
\label{sec:nonquadratic-goop}

The preceding analysis in
\Cref{sec:equality-constained-quadratic-goop,sec:inequality-constained-quadratic-goop} raises the question of whether the reduced and complete KKT systems continue to share the same primal solution set for general smooth GOOPs with nonquadratic objectives and nonlinear constraints.
Empirically, we observe such primal equivalence in \Cref{sec:experiment}, but a general proof is nontrivial.

The difficulty is structural: when players' objectives are nonquadratic, higher-order derivatives of inner-level objectives introduce additional terms that destroy the block-wise correspondence between the two KKT systems.
Consequently, lower-level dual variables---which generally take different values in the complete and reduced systems---generate \emph{different coefficients} in the upper-level nonlinear stationarity conditions.
This misalignment makes it challenging to reconstruct dual variables for the complete KKT system from a reduced KKT-system primal solution.

The coefficient mismatch \emph{first} appears at level $k = \numlevels^i-2$ when player $i$ is subject only to linear equality constraints \cref{eq:linear-equality-constraints}. 
The multipliers for the complete system are
$\bar\eta^i_{\numlevels^i-2}\coloneqq[(\bar\psi^i_{\numlevels^i-2})^\top,(\bar\lambda^i_{\numlevels^i-2})^\top]^\top$.
The complete-system Lagrangian stationarity is
\begin{align}
\label{eq:nonquadratic-goop-stationarity-condition}
\nabla_{\bar\var^i}\bar\Lagrangian^i_{\numlevels^i-2}(\bar\var,\bar\eta^i_{\numlevels^i-1:\numlevels^i},\bar\eta^i_{\numlevels^i-2})
=&\;
\nabla_{\bar\var^i}\cost{i}_{\numlevels^i-2}(\bar\var)
-\Bigl(\nabla_{\bar\var^i}^2\bar\Lagrangian^i_{\numlevels^i-1}(\bar\var,\bar\eta^i_{\numlevels^i-1:\numlevels^i})\Bigr)^\top \bar\psi^i_{\numlevels^i-2,1}
\notag\\
&\;
-\Bigl(\nabla_{\bar\var^i}^2\bar\Lagrangian^i_{\numlevels^i}(\bar\var,\bar\eta^i_{\numlevels^i})\Bigr)^\top \bar\psi^i_{\numlevels^i-2,2}
- H_i^{i\top}\bar\lambda^i_{\numlevels^i-2}.
\end{align}
The stationarity equation for the reduced system is obtained analogously.
Next, expand the Hessian of the lower level ($\numlevels^i-1$) Lagrangian appearing in \cref{eq:nonquadratic-goop-stationarity-condition}:
\begin{equation}
\label{eq:nonquadratic-goop-stationarity-condition-2}
\nabla_{\bar\var^i}^2\bar\Lagrangian^i_{\numlevels^i-1}(\bar\var,\bar\eta^i_{\numlevels^i-1:\numlevels^i})
=
\nabla_{\bar\var^i}^2\cost{i}_{\numlevels^i-1}(\bar\var)
-
\Bigl(\nabla_{\bar\var^i}^3\cost{i}_{\numlevels^i}(\bar\var)\Bigr)^\top \bar\psi^i_{\numlevels^i-1}.
\end{equation}
The final term in \cref{eq:nonquadratic-goop-stationarity-condition-2} arises from the third derivative of the innermost objective and is weighted by the lower-level dual variable $\bar\psi^i_{\numlevels^i-1}$.
Notice that $\bar\psi^i_{\numlevels^i-1}$ need not coincide with its reduced \acs{kkt} system counterpart $\psi^i_{\numlevels^i-1}$.
As a result, the misaligned dual variables generate different coefficients in the upper-level stationarity condition \cref{eq:nonquadratic-goop-stationarity-condition}, i.e., $\nabla_{\bar\var^i}^2\bar\Lagrangian^i_{\numlevels^i-1}(\bar\var,\bar\eta^i_{\numlevels^i-1:\numlevels^i}) \neq \nabla_{\var^i}^2\Lagrangian^i_{\numlevels^i-1}(\var,\eta^i_{\numlevels^i-1:\numlevels^i})$.  
In this way, the complete and reduced systems progressively accumulate mismatched coefficients in their stationarity conditions. The proof of 
\cref{thm:ng-og-primal-equivalence} relied upon consistent coefficients to align the dual variables between the two KKT systems. 
Consequently, establishing primal solution equivalence in this more general regime requires a different mechanism to align---or otherwise relate---the relevant dual variables across the two systems.

Even in the absence of a primal-equivalence guarantee in general settings, we can still certify the local optimality of candidate solutions for general GOOPs. 
Accordingly, we shift our focus to establishing second-order sufficient conditions (SOSC) that characterize when a candidate solution corresponds to a local equilibrium. 
\section{Second-order Sufficient Conditions}

\begin{figure}[t]
\centering
\resizebox{1.0\linewidth}{!}{%
\begin{tikzpicture}[font=\Large]

\tikzset{
  setrect/.style={
    draw=#1, very thick,
    rounded corners=3mm,
    fill=#1!12, fill opacity=0.32
  },
  setell/.style={
    draw=#1, very thick,
    fill=#1!12, fill opacity=0.32
  },
  setlabel/.style={align=left, anchor=west}
}

% =========================================================
% Left panel: Quadratic GOOP (equalities)
% =========================================================
\begin{scope}[shift={(-7.0,0)}]
\coordinate (Oq) at (0,0);

% Slightly larger outer sets
\node[setrect=blue!70!black, minimum width=12.8cm, minimum height=6.5cm] (Rq) at (Oq) {};
\node[setrect=teal!70!black, minimum width=12.8cm, minimum height=6.5cm] (Fq) at (Oq) {};

% Larger coincident ellipses
\path[setell=orange!85!black]
  (Oq) ellipse [x radius=2.95cm, y radius=1.95cm];
\path[setell=purple!70!black]
  (Oq) ellipse [x radius=2.95cm, y radius=1.95cm];

% Labels
\node[setlabel] at ($ (Rq.north west) + (0.60,-0.78) $) {%
\emph{Reduced KKT primal solutions}\\[-0.1em]
\emph{(= Complete KKT primal solutions)}
};

\node[setlabel] at ($ (Oq) + (-1.95,0.22) $) {%
\emph{GOOP equilibria}\\[-0.1em]
\emph{(= SOSC-qualified)}
};

% Panel title
\node[align=center] at ($ (Rq.north) + (0,0.60) $)
{\textbf{\shortstack{GOOPs with partially convex quadratic objectives\\and linear constraints}}};
\end{scope}

% =========================================================
% Right panel: General GOOP (inclusions)
% =========================================================
\begin{scope}[shift={(7.0,0)}]
\coordinate (Og) at (0,0);

% Outer rectangles
\node[setrect=blue!70!black, minimum width=12.8cm, minimum height=6.5cm] (Rg) at (Og) {};
\node[setrect=teal!70!black, minimum width=10.3cm, minimum height=5.0cm] (Fg) at (Og) {};

% Larger inner ellipses
\path[setell=orange!85!black]
  (Og) ellipse [x radius=2.95cm, y radius=1.95cm];
\path[setell=purple!70!black]
  (Og) ellipse [x radius=1.95cm, y radius=1.18cm];

% Labels
\node[setlabel] at ($ (Rg.north west) + (0.60,-0.40) $) {%
\emph{Reduced KKT primal solutions}
};

\node[setlabel] at ($ (Fg.north west) + (0.55,-0.38) $) {%
\emph{Complete KKT primal solutions}
};

\node[setlabel] at ($ (Og) + (-1.95,1.38) $) {%
\emph{GOOP equilibria}
};

\node[setlabel] at ($ (Og) + (-1.85,0.02) $) {%
\emph{SOSC-qualified}\\[-0.1em]
\emph{GOOP equilibria}
};

% Panel title
\node[align=center] at ($ (Rg.north) + (0,0.60) $)
{\textbf{\shortstack{GOOPs with general objectives \\and nonlinear constraints}}};
\end{scope}

% Separator
\draw[black!30, line width=0.5pt] (0,-3.30) -- (0,3.30);

\end{tikzpicture}%
}\vspace{-0.4em}
\caption{
(\textbf{Left}) For quadratic objectives and linear constraints, under the assumptions of \Cref{subsec:equivalence-quadratic-goop}, the reduced and complete KKT systems share the same primal solution set, and every GOOP equilibrium is SOSC-qualified (hence the corresponding sets coincide).
(\textbf{Right}) For general smooth GOOPs, the reduced KKT primal solution set is a relaxation (superset) of the complete KKT primal solution set, which contains the GOOP equilibria; \cref{thm:sosc-goop} certifies a subset of such equilibria. 
\label{fig:set-inclusions-rect-ell}}
\end{figure}

In this section, we describe sufficient conditions for local optimality in a general GOOP problem. 
As illustrated in \cref{fig:set-inclusions-rect-ell}, a solution to the reduced KKT system may be \emph{spurious}: it satisfies the reduced system, yet may fail to satisfy the complete KKT system and therefore cannot correspond to a GOOP equilibrium. 
We present \cref{thm:sosc-goop}, which provides a set of conditions that each subproblem (for every player and level) must satisfy in order for a candidate solution to the reduced KKT system to be a true local GOOP equilibrium.

We proceed along the lines established by \cite{fiacco1968}, in the context of identifying weak local minimizers.
Using the reduced KKT system
\cref{eqn:k-level-reduced-kkt-F,eqn:k-level-reduced-kkt-G}, we define the linearized feasible cone and the critical cone of the
$\klevel{k}$ subproblem.
First, we define player $i$'s $\klevel{k}$ active set at $(\var^*, \eta^{i*}_{k+1:\numlevels^i})$
\begin{equation}
\mathcal{A}^i_k(\var^*,\eta^{i*}_{k+1:\numlevels^i}) \coloneqq \{\, j \;:\; \mathcal{G}^i_{k,j}(\var^*, \eta^{i*}_{k+1:\numlevels^i}) = 0 \,\},
\end{equation}
where $\mathcal{G}^i_{k}$ refers to player $i$'s reduced KKT system (inequalities) in \cref{eqn:k-level-reduced-kkt-G}.
% \david{the notation in the subcript of $\mathcal{G}_{k, j}$ is not adequately introduced}
The set of linearized feasible directions is
\begin{equation}
\label{eq:lin-feas-cone}
\mathcal{L}_F\bigl(\var^*,\eta^{i*}_{k+1:\numlevels^i}\bigr)
=
\left\{
d \ \middle|\ 
\begin{aligned}
&\nabla_{(\var^i,\eta^i_{k+1:\numlevels^i})}\mathcal{F}_k^i\bigl(\var^*,\eta^{i*}_{k+1:\numlevels^i}\bigr)^\top d = \bm{0},\\
&\nabla_{(\var^i,\eta^i_{k+1:\numlevels^i})}\mathcal{G}^{i}_{k,j}\bigl(\var^*,\eta^{i*}_{k+1:\numlevels^i}\bigr)^\top d \ge \bm{0},
\; \forall j \in \mathcal{A}^i_k %(\var^*,\eta^{i*}_{k+1:\numlevels^i})
\end{aligned}
\right\}.
\end{equation}
The critical cone is then
\begin{equation}
\label{eq:critical-cone}
\mathcal{C}_k\bigl(\var^*,\eta^{i*}_{k+1:\numlevels^i}\bigr)
=
\left\{
d \in \mathcal{L}_F
\ \middle|\
\nabla_{(\var^i,\eta^i_{k+1:\numlevels^i})}\mathcal{G}^i_{k,j}\bigl(\var^*, \eta^{i*}_{k+1:\numlevels^i}\bigr)^\top d = \bm{0},
\ \forall j \in \mathcal{A}^{i+}_k
\right\},
\end{equation}
where $\mathcal{A}^{i+}_k \coloneqq \{\, j \in \mathcal{A}^i_k(\var^*,\eta^{i*}_{k+1:\numlevels^i}) : \gamma^{i*}_{k,j} > 0 \,\}$,
and the feasible neighborhood set is 
\begin{equation}
\label{eqn:goop-sosc-Pk}\mathcal{P}_k(\epsilon, \delta) := \left\{ p \;\middle|\; 
    \begin{aligned}
        & \| p - d \| \leq \epsilon \text{ for some critical direction } d \in \mathcal{C}_k(\var^*, \eta^{i*}_{k+1:\numlevels^i}), \\
        &(\var^{i*}, \eta^{i*}_{k+1:\numlevels^i}) + \tilde\delta p \text{ satisfies } \cref{eqn:k-level-reduced-kkt-F} \text{ and } \cref{eqn:k-level-reduced-kkt-G} \text{ for some }  \tilde\delta \text{ such that } \\
        & \qquad 0 < \tilde\delta < \delta, \text{ and } \| p \| = 1.
    \end{aligned}
\right\}.
\end{equation}

We now present a second-order sufficient condition for the $\klevel{k}$ subproblem of player $i$ using the reduced KKT system. 
\begin{theorem}[Second-order sufficient condition for local GOOP equilibrium]
\label{thm:sosc-goop}
Let $(\var^{i*}, \var^{\neg i*}, \eta_{1:\numlevels^i}^{i*})$ be a solution to the reduced KKT system in \cref{eqn:k-level-reduced-kkt-F,eqn:k-level-reduced-kkt-G} under \cref{assum:cost-feasible-set}. 
Suppose that, for each level $k \in [\numlevels^i-1]$, the $\klevel{k}$ Lagrangian function $\Lagrangian^i_k$ is stationary with respect to $\eta_{k+1:\numlevels^i}^i$, i.e., 
$\nabla_{\eta_{k+1:\numlevels^i}^i}\Lagrangian^i_k(\var^{i*}, \var^{\neg i*}, \eta_{k+1:\numlevels^i}^{i*}) = 0$. 
Suppose that there exists $\epsilon' > 0$, $\delta' > 0$ such that for every $p \in \mathcal{P}_k(\epsilon', \delta')$, we have 
\begin{equation}
\label{eq:local-psd-primal}
    p^\top \nabla^2_{\left(\var^i, \eta_{k+1:\numlevels^i}^i\right)} \Lagrangian_k^i \bigl( (\var^{i*}, \eta_{k+1:\numlevels^i}^{i*}) + \alpha\tilde{\delta} p, \eta_{k}^{i*} \bigr) p \geq 0
\end{equation}
for all $\alpha \in (0, 1)$.
Then, $\var^{i*}$ is a weak (possibly non-isolated) local minimizer for player $i$'s $\klevel{{k}}$ subproblem.
Furthermore, if \cref{eq:local-psd-primal} holds at every level \(k\in[\numlevels^i-1]\) for every player \(i\in[\numplayers]\), then the solution of the reduced KKT system $\left(\var^*,\eta^{1*}_{1:\numlevels^1},\dots, \eta^{\numplayers*}_{1:\numlevels^\numplayers}\right)$ is a weak local solution of the GOOP problem and satisfies \cref{def:goop-gnep}.

\end{theorem}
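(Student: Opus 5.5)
The argument follows the classical Fiacco--McCormick second-order sufficiency proof for weak local minimizers, applied level by level. It treats the $k^{\mathrm{th}}$-level subproblem of player $i$ as a single nonlinear program. The decision variables are $y \coloneqq (\var^i,\eta^i_{k+1:\numlevels^i})$, the parameters $\var^{\neg i*}$ are held fixed, the objective is $\cost{i}_k$, and the constraints are the reduced lower-level system $\mathcal{F}^i_{k+1}=\bm{0}$, $\mathcal{G}^i_{k+1}\ge \bm{0}$ together with $h^i=\bm{0}$, $g^i\ge\bm{0}$. Its Lagrangian is exactly $\Lagrangian^i_k$ in \cref{eqn:k-level-Lagrangian}, with multipliers $\eta^{i*}_k$. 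The reduced system only imposes $\nabla_{\var^i}\Lagrangian^i_k=\bm{0}$. The extra hypothesis $\nabla_{\eta^i_{k+1:\numlevels^i}}\Lagrangian^i_k=\bm{0}$ completes this to full stationarity in $y$, so $(y^*,\eta^{i*}_k)$ is a genuine KKT point of this NLP, with complementarity and sign conditions inherited from \cref{eqn:k-level-reduced-kkt-F,eqn:k-level-reduced-kkt-G}.

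The proof is by contradiction. Suppose $y^*$ is not a weak local minimizer. Then there is a sequence of feasible $y_m\to y^*$ with $\cost{i}_k(y_m)<\cost{i}_k(y^*)$. Write $y_m=y^*+\delta_m p_m$ with $\|p_m\|=1$ and $\delta_m\downarrow 0$, and pass to a subsequence so that $p_m\to \bar p$. First I would show $\bar p\in\mathcal{C}_k$. A first-order Taylor expansion of the equality and active inequality constraints gives $\bar p\in\mathcal{L}_F$ in \cref{eq:lin-feas-cone}. Expanding $\cost{i}_k$ gives $\nabla\cost{i}_k^\top\bar p\le 0$. Stationarity then writes $\nabla\cost{i}_k^\top \bar p$ as a combination of constraint gradients applied to $\bar p$. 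The equality terms vanish, and the terms for active inequalities with positive multipliers are nonnegative. Hence every such term must vanish, which places $\bar p$ in the critical cone \cref{eq:critical-cone}. Consequently, for $m$ large, $p_m\in\mathcal{P}_k(\epsilon',\delta')$: it lies within $\epsilon'$ of $\bar p$, and $y^*+\delta_m p_m$ is feasible with $\delta_m<\delta'$.

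Next, feasibility of $y_m$ and $y^*$ gives $\Lagrangian^i_k(y_m)\le \cost{i}_k(y_m)$. This uses $\eta^{i*}_k$ with nonnegative $\gamma$-components and inequality constraints $\ge 0$; equality constraints contribute zero. Also $\Lagrangian^i_k(y^*)=\cost{i}_k(y^*)$ by complementarity. Therefore $\Lagrangian^i_k(y_m)-\Lagrangian^i_k(y^*)<0$. Apply Taylor's theorem with mean-value remainder along the segment, using $\nabla_y\Lagrangian^i_k(y^*)=\bm{0}$:
\[
\Lagrangian^i_k(y_m)-\Lagrangian^i_k(y^*)=\tfrac{1}{2}\delta_m^2\, p_m^\top\nabla^2_y\Lagrangian^i_k(y^*+\alpha_m\delta_m p_m,\eta^{i*}_k)\,p_m
\]
for some $\alpha_m\in(0,1)$. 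By \cref{eq:local-psd-primal} the right side is $\ge 0$, a contradiction. The main obstacle I expect is the matching between the sequence and the definition of $\mathcal{P}_k$. That definition requires feasibility at some $\tilde\delta$ and the Hessian condition at $\alpha\tilde\delta p$; I would take $\tilde\delta=\delta_m$ so the two agree. One also has to be careful about the multiplier sign conventions for the complementarity-type constraints, such as $g^i\odot\gamma$ and the $\phi$ terms, which are equalities in $\mathcal{F}$. Since these are equalities, their multipliers have no sign requirement and the inequality $\Lagrangian\le\cost{i}_k$ is unaffected.

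Finally, I would lift the result to the GOOP. Under \cref{assum:cost-feasible-set}, the top-level feasible set $\X_1^i(\var^{\neg i*})$ of player $i$ is described, near $\var^{i*}$, by the lower-level conditions. Levels $k+1,\dots,\numlevels^i$ are certified by the same argument applied inductively from the innermost level, where the problem is an ordinary NLP and the claim is the classical SOSC. Applying the level-$1$ conclusion shows that $\cost{i}_1(\var^{i*},\var^{\neg i*})\le\cost{i}_1(\var^i,\var^{\neg i*})$ for all feasible $\var^i$ in a neighborhood. This holds for every player with $\var^{\neg i*}$ fixed, which is precisely \cref{def:goop-gnep}.
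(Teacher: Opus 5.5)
Up to the lifting step you follow the paper's route exactly. Your contradiction argument is the proof behind the Fiacco--McCormick theorem that the paper simply cites. It is correct under your (sensible) reading of $\mathcal{P}_k$ as ``feasible for the level-$k$ subproblem,'' and it uses the extra hypothesis $\nabla_{\eta^i_{k+1:K^i}}\mathcal{L}^i_k=\bm{0}$ exactly where it is needed.

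\textbf{The innermost level.} The gap is in the lifting paragraph. You start the recursion by invoking ``the classical SOSC'' at the innermost level, but the theorem assumes \cref{eq:local-psd-primal} only for $k\in[K^i-1]$. Nothing in the hypotheses makes $z^{i*}$ a local minimizer of the level-$K^i$ problem. So you cannot conclude that $z^{i*}$ lies in the level-$(K^i-1)$ feasible set, and the induction never starts. This is not cosmetic. Take one player with $K^i=2$, $n^i=1$, $J_2(z)=-z^2$, $J_1(z)=z^2$, and $g(z)=(z+1,\,1-z)\ge\bm{0}$:
\begin{itemize}
\item The point $z^*=0$ with all multipliers zero solves the reduced (indeed the complete) KKT system.
\item There $\nabla_{\gamma_2}\mathcal{L}_1=\bm{0}$ and $\nabla^2_{(z,\gamma_2)}\mathcal{L}_1=\mathrm{diag}(2,0,0)\succeq 0$, and MPCC-LICQ holds.
\item Yet the local minimizers of the level-2 problem are $z=\pm1$, so $\mathcal{Z}_1=\{-1,1\}$ and $z^*$ is not even feasible for the nested problem.
\end{itemize}
The paper's proof recurses only over $k=K^i-1,\dots,1$ and has the same hole. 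The ``Furthermore'' conclusion needs \cref{eq:local-psd-primal} also at $k=K^i$, where it is the classical condition for the innermost NLP. You should state this as an added hypothesis rather than use it silently.

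\textbf{Joint space versus $z^i$-space.} Even with that hypothesis added, you cannot rely on ``$\mathcal{Z}_1^i(z^{\neg i*})$ is described, near $z^{i*}$, by the lower-level conditions.'' As an equality this is false, since the reduced system is only a relaxation. What the recursion actually needs is two things:
\begin{itemize}
\item[(i)] Every $z^i$ that is truly feasible at level $k$ admits reduced multipliers making $(z^i,\eta^i_{k+1:K^i})$ feasible for your level-$k$ NLP. This follows from \cref{thm:complete-kkt-system-necessary-conditions,thm:reduced-is-relaxation-of-complete}.
\item[(ii)] For $z^i$ near $z^{i*}$, these multipliers can be chosen near $\eta^{i*}_{k+1:K^i}$. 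This is required because Fiacco--McCormick gives minimality only in the joint $(z^i,\eta^i_{k+1:K^i})$ space, not in $z^i$ alone.
\end{itemize}
Point (ii) is an inner-semicontinuity property of the multiplier map at $(z^{i*},\eta^{i*})$, and it is not automatic. Reduced multipliers are typically non-unique: for example, $\psi^i_{k+1}$ has $(K^i-k-1)n^i$ components but enters only the $n^i$ rows of $\nabla_{z^i}\mathcal{L}^i_{k+1}$. This is precisely the known discrepancy between local solutions of an MPCC reformulation and those of the underlying bilevel problem. You must either assume (ii) or derive it from a multiplier-stability condition; the paper's one-line recursion glosses over it as well.
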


\begin{proof}
% By \cite[Theorem~2.1]{fiacco1968}, $\var^{i*}$ is a weak (non-isolated) local minimizer of player $i$'s $\klevel{k}$ subproblem.
% Apply the same argument recursively for every player $i\in[\numplayers]$ and each level
% $k\in[\numlevels^i-1]$. If \cref{eq:local-psd-primal} holds at all such subproblems, then each player’s
% lower-level responses are locally minimizing at every level, and therefore $\left(\var^*,\eta^{1*}_{1:\numlevels^1},\dots, \eta^{\numplayers*}_{1:\numlevels^\numplayers}\right)$ constitutes a weak local minimizer of the GOOP problem.
By \cite[Theorem~2.1]{fiacco1968}, condition \cref{eq:local-psd-primal} implies that $(\var^{i*}, \eta^{i*}_{k+1:\numlevels^i})$ is a weak local minimizer of player $i$'s \(\klevel{k}\) subproblem. 
Applying this argument recursively for $k = \numlevels^i-1, ..., 1$ shows that player $i$'s candidate solution is weakly locally minimizing at every level of its hierarchy. 
Repeating this argument for all players \(i\in[\numplayers]\) yields that \(\left(\var^*,\eta^{1*}_{1:\numlevels^1},\dots,\eta^{\numplayers*}_{1:\numlevels^\numplayers}\right)\) is a weak local solution of the GOOP problem.
\end{proof}

% \textbf{Underlying constrained problem at level $k$}
% \begin{align}
% \label{eq:reduced-system-prob-level-k}
%     \min_{z_k^i} &\quad\cost{i}_{k}(\var^i_k, \var^{\neg i}_k) \\
%     \quad \st &\mathcal{F}_{k+1}(\var^i_k, \var^{\neg i}_k, \eta_{k+1}^i) = 0, \quad g^i(\var^i_k, \var^{\neg i}_k) \geq 0.
% \end{align}

% \textbf{Critical cone/directions for the problem \cref{eq:reduced-system-prob-level-k}}
% \begin{itemize}
%     \item Linearized feasible cone $L_F(z) \coloneqq \{d: \nabla_{\var}\mathcal{F}_{k+1}^\top d = 0, \nabla_{\var}g_j^{i\top} d = 0, j\in\mathcal{A}(\var)\}$
%     \item Critical cone $\mathcal{C}_k(\var, \lambda^i_k, \gamma^i_k) \coloneqq \{d\in L_F(z):  \nabla_{\var}g_j^{i\top} d = 0, j\in\mathcal{A}(\var), \gamma^i_{k,j} > 0\}$
% \end{itemize}

\begin{remark}[Implications of a strict local minimizer]
\label{rem:strict-minimizer-propagates}
If the inequality in \cref{eq:local-psd-primal} holds strictly at level $k$, the candidate point corresponds to a \emph{strict} local minimizer at that level.
% This implies that the lower-level solution is locally isolated.
Thus, in a neighborhood of the candidate point, the feasible set for the upper levels $k-1, \dots, 1$ collapses to the unique solution of the lower level.
As a result, the verification of sufficient conditions at these upper levels becomes unnecessary.
\end{remark} 

\begin{remark}[Limitations of \cref{thm:sosc-goop}]
    Per \Cref{rem:strict-minimizer-propagates}, the second-order condition established in \cref{thm:sosc-goop} checks whether a candidate solution is a weak local minimizer at each level for each player. Unfortunately, the condition is not easily checked, in general. This is true even in the case of single-level optimization as discussed in \cite{fiacco1968}. Establishing easily verified conditions for this case is an important direction for future work. 
\end{remark}
In the next section, we develop a primal-dual interior-point method and analyze its convergence properties. 
This common solver framework enables a controlled comparison in \Cref{sec:experiment}, where we quantify the computational benefits of solving the reduced system relative to the complete formulation.

\section{Primal-Dual Interior Point Method for Reduced KKT System and Convergence Analysis}
% \begin{itemize}
%     \item Regularized Primal-Dual KKT System
%     \item Algorithmic Implementation
%     \begin{itemize}
%         \item algorithmic updates (Newton direction, relaxation, etc.)
%     \end{itemize}
%     \item Convergence Analysis
%     \begin{itemize}
%         \item Reference: Bertsekas, Nocedal 
%         \item Could be split into ``global" and ``local" convergence subsectinos, i.e., global convergence under convexity(?), local superlinear rate under standard assumptions
%         \item S\&B pg 536
%     \end{itemize}
% \end{itemize}

In this section, we present a numerical approach for solving the reduced KKT conditions. 
We base our approach on the primal-dual interior point (PDIP) method \cite{nocedal2006numerical}, which we view as a homotopy method \cite{liao2004homotopy}.
To this end, we (i) introduce nonnegative slack variables $\{ s_k^i\}_{k=1,i=1}^{\numlevels^i,\numplayers}$, which transform the inequality constraints into equality constraints for all players $i \in [\numplayers]$ and levels $k \in [\numlevels^i]$, 
and (ii) introduce a scalar homotopy parameter $\rho > 0$ to perturb the complementarity slackness condition for each player $i$ at each level $k$:
\begin{equation}
\label{pdip-slack-variables}
   \inequality^i(\var^{i}, \var^{\neg i}) - s_k^i = \bm{0}, \;\;
 s^i_k \odot \gamma^i_k = \rho \bm{1}, \;\; s^i_k \geq \bm{0}, \;\; \gamma^i_k \geq \bm{0},
\end{equation}
where $\bm{1}$ denotes a vector of all ones.
Player $i$'s perturbed reduced KKT system is
\begin{align}
\label{eqn:reduced-kkt-top-level-with-slack}
\mathcal{K}_\rho^i\left(\var^{i}, \var^{\neg i}, \eta_{1:\numlevels^i}^i, s^i_{1:\numlevels^i}\right) 
&= 
\begin{bNiceMatrix}[margin,cell-space-limits=1.5pt]
    \nabla_{\var^i} \Lagrangian_{1}^i(\var^{i}, \var^{\neg i}, \eta_{1:\numlevels^i}^i) \\[-1.5ex]
    \vdots \\
    \nabla_{\var^{i}} \Lagrangian_{\numlevels^i}^i(\var^{i}, \var^{\neg i}, \eta_{\numlevels^i}^i) \\
    \equality^i(\var^{i}, \var^{\neg i})\\
    \inequality^i(\var^{i}, \var^{\neg i}) - s_1^i \\[-1.5ex]
    \vdots \\
    \inequality^i(\var^{i}, \var^{\neg i}) - s_{\numlevels^i}^i \\
    s_{1:\numlevels^i}^i \odot \gamma_{1:\numlevels^i}^i - \rho \bm{1}
\end{bNiceMatrix} = \bm{0}.
% \mathcal{G}^i\left( \var^{i}, \var^{\neg i}, \gamma_{1:\numlevels^i}^i\right)
% &= 
% \begin{bNiceMatrix}[margin]
% \inequality^i(\var^{i}, \var^{\neg i}) \\
%     \gamma_{1:\numlevels^i}^i
% \end{bNiceMatrix} \geq 0.
\end{align}

Aggregating \cref{eqn:reduced-kkt-top-level-with-slack} over all players and defining 
% \david{why bold here but not up to this point?} 
% \david{also, i feel like you can just put (6.4) inline} 
% ${\var} = \left[(\var^1)^\top;\dots;(\var^\numplayers)^\top\right]^\top$, 
${\eta} \coloneqq [(\eta^1_{1:\numlevels^1})^\top,\dots,(\eta^\numplayers_{1:\numlevels^\numplayers})^\top]^\top$, 
${s} \coloneqq [(s^1_{1:\numlevels^1})^\top,\dots,(s^\numplayers_{1:\numlevels^\numplayers})^\top]^\top$, and  
${y} \coloneqq [\var^\top, \eta^\top, s^\top]^\top$, we obtain the $\rho$-perturbed reduced KKT system: 
\(\mathcal{K}_\rho\left({y}\right) = \Bigl[
    \mathcal{K}_\rho^i\left(\var^{i}, \var^{\neg i}, \eta_{1:\numlevels^1}^i, s^i_{1:\numlevels^1}\right)
    % \vdots \\
    % \mathcal{K}_\rho^\numplayers\left(\var^{\numplayers}, \var^{\neg \numplayers}, \eta_{1:\numlevels^\numplayers}^\numplayers, s^\numplayers_{1:\numlevels^\numplayers}\right)
\Bigr]_{i=1}^{\numplayers} = \bm{0}.\)
We present our method in \cref{alg:goop-pdip}. 
The algorithm gradually decreases homotopy parameter \(\rho\) to zero. 
For each \(\rho\), at iteration \(\ell\), we compute \({y}_\rho^{(\ell)}\) that drives the homotopy residual to zero, i.e.,  $\|\mathcal{K}_\rho({y}_\rho^{(\ell)})\|_2 \to 0$.
Since $\nabla \mathcal{K}_\rho({y}_\rho^{(\ell)})$ is not a square matrix in general, we compute the Newton update direction $\Delta {y}_\rho$ using the pseudoinverse $(\nabla \mathcal{K}_\rho({y}_\rho^{(\ell)}))^+$,
\begin{equation}
\Delta {y}_\rho
:= (\nabla \mathcal{K}_\rho({y}_\rho^{(\ell)}))^+(-\mathcal{K}_\rho({y}_\rho^{(\ell)})).
\end{equation}
We then choose a step size \(\alpha^{(\ell)} \in (0,1]\) via backtracking line search and update
\begin{equation}
{y}_\rho^{(\ell+1)}
= {y}_\rho^{(\ell)} + \alpha^{(\ell)} \Delta {y}_\rho.
\end{equation}
During line search, we ensure that the slack variables \(s^{(\ell)}_\rho\) and dual variables \(\gamma^{(\ell)}_\rho\) remain nonnegative after the update.
Using the KKT residual \(\|\mathcal{K}_\rho({y}_\rho^{(\ell)})\|_2\) as the merit function, we repeat this procedure until convergence, after which $\rho$ is reduced and the process continues. 
Next, we characterize regularity conditions under which our method is provably convergent in \cref{thm:quadratic-convergence}.
% \begin{algorithm}[t]
% % \caption{PDIP-LQ Games}
% \caption{Local GOOP Equilibrium via PDIP}
% \begin{algorithmic}[1]\label{alg:goop-pdip}
% \REQUIRE %$\{A_t,B_t,c_t\}_{t=0}^T$
% Initial homotopy parameter $\rho$, contraction rate $\sigma\in(0,1)$, 
% tolerance $\epsilon$, initial solution ${y}_\rho^{(0)}:=[{\var}_\rho^{(0)},{\eta}_\rho^{(0)}, {s}_\rho^{(0)}]$ with $\svec_\rho^{(0)}>0$ and $\gammavec_\rho^{(0)}>0$
% \FOR{$\ell = 1,2,\dots,\ell_{\mathrm{max}}$}
% \WHILE{$\|\mathcal{K}_\rho({y}_\rho^{(\ell)})\|_2 > \epsilon $}
% \STATE Initialize the step size for line search, $\alpha \gets 1$
% \STATE Compute Newton update direction $\Delta {y}_\rho$ from $ 0 = \nabla \mathcal{K}_\rho({y}_\rho^{(\ell)}) \Delta {y}_\rho + \mathcal{K}_\rho({y}_\rho^{(\ell)})$
% \STATE Compute step size $\alpha$ via fraction-to-the-boundary linesearch
% \IF{$\alpha < \epsilon $}
% \STATE Claim line search \textbf{failure}
% \ENDIF
% \STATE ${y}_\rho^{(\ell+1)}\gets {y}_\rho^{(\ell)}+\alpha \Delta {y}_\rho$
% \ENDWHILE
% \STATE  $\rho \gets \sigma \cdot \rho$
% \STATE
% \ENDFOR
% \STATE Record ${y}_\rho \gets {y}_\rho^{(\ell)}$. 
% \RETURN ${y}_\rho$
% \end{algorithmic}
% \end{algorithm}

% \david{ suggest matching alg formatting closer to what i have in my class notes - i like that aesthetic a bit better. i just got it by looking up ``latex algorithms overleaf'' and scrolling down through docs}
{\footnotesize
\begin{algorithm}[t]
\caption{Primal-Dual Interior Point Method for the Reduced KKT System}
\begin{algorithmic}[1]\label{alg:goop-pdip}
\REQUIRE Initial homotopy parameter \(\rho\), tolerance \(\epsilon\), maximum outer iterations \(\ell_{\max}\), line search parameter \(\beta \in (0,1)\), contraction rate \(\sigma \in (0,1)\),
initial solution \({y}_\rho^{(0)} := [{\var}_\rho^{(0)},{\eta}_\rho^{(0)}, {s}_\rho^{(0)}]\) with \({s}_\rho^{(0)} > \bm{0}\) and \({\gamma}_\rho^{(0)} > \bm{0}\).

\FOR{\(\ell = 0,1,2,\dots,\ell_{\max}-1\)}
    \WHILE{\(\|\mathcal{K}_\rho({y}_\rho^{(\ell)})\|_2 > \epsilon\) 
    % \jingqi{A simple algorithm logic problem: what if the while loop never ends? I think replacing the while with a for loop can resolve this for the inner update, i.e., fixing rho, solve the solution satisfying $K_{\rho}$ }
    }
        \STATE Compute the Newton update direction $\Delta {y}_\rho = - (\nabla \mathcal{K}_\rho({y}_\rho^{(\ell)}))^+ \mathcal{K}_\rho({y^{(\ell)}_\rho})$
        % $\nabla \mathcal{K}_\rho({y}_\rho^{(\ell)})\,\Delta {y}_\rho = -\mathcal{K}_\rho({y}_\rho^{(\ell)})$
        % \jingqi{Notice that there are multiple solutions to this equation, and we should specify $\Delta y = - \nabla \mathcal{K}_\rho ({y})^+ K_\rho(y)$. If we don't use consistent min-norm solution, then we don't have convergence guarantee in experiment}
        \STATE Initialize step size \(\alpha \gets 1\)
        % \STATE Compute \(\alpha\) via backtracking line search while ensuring ${s}_\rho^{(\ell)} > 0$, ${\gamma}_\rho^{(\ell)} > 0$
        \WHILE{ \(\|\mathcal{K}_\rho({y}_\rho^{(\ell)} + \alpha \Delta {y}_\rho)\|_2 > \|\mathcal{K}_\rho({y}_\rho^{(\ell)})\|_2\)
        or \(\hat{{y}} \coloneqq {y}_\rho^{(\ell)} + \alpha \Delta {y}_\rho \) has a nonpositive element in its subvector \([\hat{{s}}_\rho,\hat{{\gamma}}_\rho]\)}
        \STATE \(\alpha \gets \beta \cdot \alpha\)
        \ENDWHILE
        \STATE \textbf{If} \(\alpha < \epsilon\), \textbf{then declare ``line-search failure" and break.}  
        % \IF{\(\alpha < \epsilon\)}
        %     \STATE \textbf{break} \hfill \COMMENT{line search failure}
        % \ENDIF
        \STATE \({y}_\rho^{(\ell)} \gets {y}_\rho^{(\ell)} + \alpha \Delta {y}_\rho\).
    \ENDWHILE
    \STATE \({y}_\rho^{(\ell+1)} \gets {y}_\rho^{(\ell)}\) and $\rho \gets \sigma\cdot\rho$
    % \jingqi{Two things: 1. the current algorithm updates rho every iteration, and this will destabilize the update because the KKT residual is not consistenly defined with a fixed rho, and it is different from the convergence analysis, also diverging from most prior PDIP works, which adopt two-loop algorithm: in the outer loop, we periodically decaying rho and in the inner loop we do the Newton's step with rho fixed; 2. Warmstart the initial solution for the next homotopy parameter $\rho$ value $ {y}_{\sigma \rho}^{(0)} \gets {y}_\rho^{(\ell)} $, and update $\rho \gets \sigma\cdot\rho$}
    % \IF{\(\|\mathcal{K}_\rho({y}_\rho^{(\ell+1)})\|_2 \le \epsilon\)} 
    %     \STATE \(\rho \gets \rho\bigl(1 - e^{(-\kappa_{\mathrm{tight}}\, m)\bigr)}\) \hfill \COMMENT{tighten}
    % \ELSE
    %     \STATE \(\rho \gets \rho\bigl(1 + e^{(-\kappa_{\mathrm{loose}}\, m)\bigr)}\) \hfill \COMMENT{loosen}
    % \ENDIF
    % \STATE \(\rho \gets \min(\rho, 1)\) \label{alg:min-rho}
\ENDFOR
\RETURN \({y}_\rho^{(\ell)}\)
\end{algorithmic}
\end{algorithm}
}

\begin{theorem}[Local quadratic convergence]
\label{thm:quadratic-convergence}
    Let ${y}^{(0)}_\rho$ be an initial point with ${\gamma}^{(0)}_\rho \odot {s}^{(0)}_\rho \ge \epsilon \bm{1}$, for some $\epsilon \in (0,\rho)$. Define $\mathcal{S}_{y_\rho} := \{{y_\rho}: \| \mathcal{K}_\rho({y_\rho})\|_2 \le \|\mathcal{K}_\rho({y}^{(0)}_\rho)\|_2, \gamma_\rho\odot {s}_\rho\ge \epsilon \bm{1} \}$ and $L := \|\mathcal{K}_\rho({y_\rho}^{(0)})\|_2$. Suppose that $ \mathcal{K}_\rho(y_\rho)$ is in the column space of $\nabla \mathcal{K}_\rho({y_\rho})$ for all ${y_\rho}\in \mathcal{S}_{y_\rho} $ and that there exists an optimal solution $ {y^*_\rho} \in \mathcal{S}_{y_\rho}$ such that $\mathcal{K}_\rho({y}^*_\rho) = 0$. Furthermore, suppose that there exist constants $C,D>0$ such that the pseudoinverse of $\nabla \mathcal{K}_\rho({y}_\rho)$ is upper bounded by $D$, $\|\big(\nabla \mathcal{K}_\rho({y}_\rho)\big)^{+}\|_2\le D,  \forall {y_\rho}\in \mathcal{S}_{y_\rho}$, and the Jacobian $\nabla \mathcal{K}_\rho({y}_\rho)$ is $C$-Lipschitz continuous:
    \begin{equation}
        \|\nabla \mathcal{K}_\rho({y_\rho}) - \nabla \mathcal{K}_\rho(\tilde{y}_\rho)\|_2 \le C\|{y_\rho} - \tilde{{y}}_\rho\|_2, \forall {y}_\rho, \tilde{y}_\rho \in \mathcal{S}_{y_\rho}.
    \end{equation}
    Define $\hat{\alpha}:=\min\big\{1,\frac{1}{CD^2 L}, \frac{\rho-\epsilon}{D^2 L^2}\big\}$, and $\Delta {y_\rho}= -(\nabla \mathcal{K}_\rho({y_\rho}))^+ \mathcal{K}_\rho({y_\rho})$. Then, for all ${y_\rho}\in\mathcal{S}_{y_\rho}$, there exists a stepsize $\alpha \in(0,1]$ such that ${y_\rho} + \alpha\Delta {y_\rho}\in \mathcal{S}_{y_\rho}$ and 
    \begin{equation}\label{eq: linear convergence}
        \|\mathcal{K}_\rho({y_\rho}+ \alpha \Delta {y_\rho})\|_2\le \left(1-\frac{\hat{\alpha}}{2}\right)\|\mathcal{K}_\rho({y_\rho})\|_2.
    \end{equation}
    Moreover, when $\|\mathcal{K}_\rho({y_\rho})\|_2\le \min\big\{\frac{2}{CD^2}, \frac{\sqrt{\rho-\epsilon}}{D}\big\}$, we have ${y_\rho}+\Delta {y_\rho}\in \mathcal{S}_{y_\rho}$ and quadratic convergence holds: 
    % \david{maybe cleaner to divide out by the common fraction on the left and right}
    \begin{equation}
        \frac{D^2C}{2}\|\mathcal{K}_\rho({y_\rho}+\Delta {y_\rho})\|_2\le \left(\frac{D^2C}{2}\|\mathcal{K}_\rho({y_\rho})\|_2\right)^2.
    \end{equation}
\end{theorem}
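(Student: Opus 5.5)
We follow the standard Gauss--Newton/pseudoinverse Newton analysis, with one extra step to keep the iterates in $\mathcal{S}_{y_\rho}$ (the complementarity products must stay above $\epsilon\bm{1}$). Fix $y_\rho\in\mathcal{S}_{y_\rho}$ and write $J:=\nabla\mathcal{K}_\rho(y_\rho)$ and $\Delta y_\rho=-J^+\mathcal{K}_\rho(y_\rho)$.

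\emph{Step 1.} Because $\mathcal{K}_\rho(y_\rho)\in\mathrm{Col}(J)$, we have $JJ^+\mathcal{K}_\rho(y_\rho)=\mathcal{K}_\rho(y_\rho)$, hence $J\Delta y_\rho=-\mathcal{K}_\rho(y_\rho)$. The pseudoinverse bound gives $\|\Delta y_\rho\|_2\le D\|\mathcal{K}_\rho(y_\rho)\|_2\le DL$.

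\emph{Step 2.} Write the integral form of the Taylor expansion,
\[
\mathcal{K}_\rho(y_\rho+\alpha\Delta y_\rho)=\mathcal{K}_\rho(y_\rho)+\alpha J\Delta y_\rho+\int_0^1\bigl(\nabla\mathcal{K}_\rho(y_\rho+t\alpha\Delta y_\rho)-J\bigr)\alpha\Delta y_\rho\,dt .
\]
Using the $C$-Lipschitz property, this yields
\[
\|\mathcal{K}_\rho(y_\rho+\alpha\Delta y_\rho)\|_2\le(1-\alpha)\|\mathcal{K}_\rho(y_\rho)\|_2+\tfrac{C}{2}\alpha^2D^2\|\mathcal{K}_\rho(y_\rho)\|_2^2 .
\]
The Lipschitz bound is only assumed on $\mathcal{S}_{y_\rho}$, so we need the segment to lie in $\mathcal{S}_{y_\rho}$. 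We handle this by a continuity argument: let $\bar t$ be the first exit time along the segment and show the bounds hold up to $\bar t$, giving a contradiction. Alternatively, we treat the Lipschitz condition as holding on a neighborhood.

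With $\alpha=\hat\alpha\le 1/(CD^2L)$, the quadratic term is at most $\tfrac{\alpha}{2}\|\mathcal{K}_\rho(y_\rho)\|_2$. This gives \cref{eq: linear convergence} and also keeps the residual below $L$.

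\emph{Step 3 (main obstacle): the complementarity constraint $\gamma\odot s\ge\epsilon\bm{1}$.} Componentwise, the complementarity row of $\mathcal{K}_\rho$ gives $\gamma_js_j-\rho=(\mathcal{K}_\rho)_j$. The Newton equation for these rows is
\[
\gamma_j\Delta s_j+s_j\Delta\gamma_j=-(\gamma_js_j-\rho).
\]
Therefore
\[
(\gamma_j+\alpha\Delta\gamma_j)(s_j+\alpha\Delta s_j)=(1-\alpha)\gamma_js_j+\alpha\rho+\alpha^2\Delta\gamma_j\Delta s_j .
\]
The first two terms form a convex combination of numbers that are $\ge\epsilon$ and $\ge\rho$ respectively, so they are at least $(1-\alpha)\epsilon+\alpha\rho$. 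The last term is bounded below by $-\alpha^2\|\Delta y_\rho\|_2^2\ge-\alpha^2D^2L^2$. So the product is at least $\epsilon+\alpha(\rho-\epsilon)-\alpha^2D^2L^2\ge\epsilon$ whenever $\alpha\le(\rho-\epsilon)/(D^2L^2)$, which is exactly the third term in $\hat\alpha$. Positivity of $s$ and $\gamma$ themselves is then preserved by continuity: the products never hit zero along the path, so no component can change sign. Together with Step 2, this gives $y_\rho+\hat\alpha\Delta y_\rho\in\mathcal{S}_{y_\rho}$.

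\emph{Step 4 (quadratic phase).} Take $\alpha=1$ and let $\kappa:=\|\mathcal{K}_\rho(y_\rho)\|_2$. The linear term now cancels exactly, and Step 2 gives
\[
\|\mathcal{K}_\rho(y_\rho+\Delta y_\rho)\|_2\le\tfrac{CD^2}{2}\kappa^2 .
\]
Multiplying both sides by $D^2C/2$ gives the stated inequality. Membership in $\mathcal{S}_{y_\rho}$ follows from two checks:
\begin{itemize}
\item The residual does not increase, since $\tfrac{CD^2}{2}\kappa\le1$ when $\kappa\le 2/(CD^2)$.
\item The complementarity bound holds, since the Step 3 estimate with $\alpha=1$ requires $\|\Delta y_\rho\|_2^2\le D^2\kappa^2\le\rho-\epsilon$, i.e.\ $\kappa\le\sqrt{\rho-\epsilon}/D$.
\end{itemize}
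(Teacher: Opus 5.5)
Your proposal is correct and follows the paper's proof essentially step for step: the exact Newton equation from the column-space hypothesis, the integral-form bound $(1-\alpha)\|\mathcal{K}_\rho(y_\rho)\|_2+\tfrac12\alpha^2CD^2\|\mathcal{K}_\rho(y_\rho)\|_2^2$, the bilinear identity for the complementarity products with the $\alpha^2D^2L^2$ bound, and the $\alpha=1$ specialization for the quadratic phase. Your first-exit-time remark about keeping the segment inside $\mathcal{S}_{y_\rho}$ (where the Lipschitz bound is assumed) addresses a point the paper passes over silently; only the residual constraint actually needs it, since the product bound is an exact identity along the whole segment, and the case $\bar t=0$ is excluded because at $t=0$ the squared residual has derivative $-2\alpha\|\mathcal{K}_\rho(y_\rho)\|_2^2<0$.
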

\begin{proof}
    The proof is provided in the Appendix. 
\end{proof}

Given a fixed $\rho>0$, $\mathcal{K}_\rho$ and $\mathcal{K}_0$ differ only in the complementarity conditions, and consequently the proximity between ${y}_\rho^*$ and ${y}_0^*$ is directly controlled by $\rho$. We characterize the solution error between the converged solution ${y}_\rho^*$ and a ground truth solution ${y}_0^*$ to the reduced KKT system in the following result. 
\begin{theorem}[Central path]\label{thm:central path}
    Let ${y}_0^*$ satisfy $\mathcal{K}_0({y}_0^*)=0$ under Assumption \ref{assum:cost-feasible-set}, with $\| \nabla \mathcal{K}_0({y})^+\|_2\le D_0 < \infty$ and $\nabla \mathcal{K}_0({y})$ having constant row rank in a neighborhood of ${y}_0^*$. Let $N_c = \sum_{i=1}^N  m_{\mathcal{I}}^i$ be the total number of complementarity pairs. 
    % \david{not surue what you mean ``pairs?''}. 
    Then, the solution ${y}_\rho^*$ to $\mathcal{K}_\rho({y}_\rho^*)=0$ satisfies
    \begin{equation}
        \|{y}_\rho^* - {y}_0^*\|_2\le D_0 \sqrt{N_c} \rho.
    \end{equation}
\end{theorem}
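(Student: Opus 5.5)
The key observation is that $\mathcal{K}_\rho$ and $\mathcal{K}_0$ differ only in the constant shift of the complementarity block. Every row of $\mathcal{K}_\rho(y)$ coincides with the corresponding row of $\mathcal{K}_0(y)$, except the rows $s\odot\gamma-\rho\bm{1}$. Hence
\begin{equation*}
\mathcal{K}_\rho(y)=\mathcal{K}_0(y)-\rho\,e,
\end{equation*}
where $e$ is the indicator vector equal to one on the $N_c$ complementarity rows and zero elsewhere. In particular $\|e\|_2=\sqrt{N_c}$. Note also that $\nabla\mathcal{K}_\rho=\nabla\mathcal{K}_0$, since the shift is constant in $y$. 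The plan is to view $y_\rho^*$ as a point on a solution curve of the parametrized equation $\mathcal{K}_0(y)=\rho e$ emanating from $y_0^*$, and to bound its length using the pseudoinverse bound $D_0$.

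First, I would construct the path. Take the ODE
\begin{equation*}
\dot y(t)=\nabla\mathcal{K}_0(y(t))^{+}e,\qquad y(0)=y_0^*,
\end{equation*}
for $t\in[0,\rho]$. On the neighborhood where $\nabla\mathcal{K}_0$ has constant row rank, the map $y\mapsto\nabla\mathcal{K}_0(y)^+$ is continuous, and indeed locally Lipschitz given the smoothness in Assumption~\ref{assum:cost-feasible-set}. This is the standard fact that the pseudoinverse is continuous exactly under constant rank. Hence a local solution exists.

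Next, to see that the path solves the perturbed system, differentiate along it:
\begin{equation*}
\tfrac{d}{dt}\mathcal{K}_0(y(t))=\nabla\mathcal{K}_0\,\nabla\mathcal{K}_0^{+}e .
\end{equation*}
This equals $e$ provided $e$ lies in the column space of $\nabla\mathcal{K}_0$, because $AA^+$ is the orthogonal projector onto $\mathrm{Col}(A)$. I expect to justify this from constant row rank together with the fact that $\mathcal{K}_\rho(y_\rho^*)=0$ is solvable, or by interpreting the constant-row-rank hypothesis as full row rank near $y_0^*$, in which case $AA^+=I$. Granting it, $\mathcal{K}_0(y(t))=te$, so $\mathcal{K}_t(y(t))=0$, and $y(\rho)$ is the solution $y_\rho^*$ (identifying $y_\rho^*$ with the solution continued along the central path).

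Finally, the estimate follows from $\|\dot y\|_2\le\|\nabla\mathcal{K}_0^+\|_2\,\|e\|_2\le D_0\sqrt{N_c}$, which gives
\begin{equation*}
\|y_\rho^*-y_0^*\|_2\le\int_0^\rho\|\dot y\|_2\,dt\le D_0\sqrt{N_c}\,\rho .
\end{equation*}

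The main obstacle is making sure the path stays inside the neighborhood where the bound $D_0$ and constant rank hold up to $t=\rho$. I would handle this with a continuation argument. The speed bound confines $y(t)$ to a ball of radius $D_0\sqrt{N_c}\,t$, so for $\rho$ small enough relative to the neighborhood radius the solution cannot leave the neighborhood, and therefore extends to all of $[0,\rho]$. A secondary subtlety is the range condition $e\in\mathrm{Col}(\nabla\mathcal{K}_0)$. If it fails, I would fall back to a mean-value argument: write $0=\mathcal{K}_\rho(y_\rho^*)-\mathcal{K}_0(y_0^*)$ and apply the pseudoinverse of an averaged Jacobian to the difference $y_\rho^*-y_0^*$. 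This gives the same bound up to the component of that difference lying in the null space, which the minimum-norm/central-path selection removes.
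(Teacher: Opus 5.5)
Under a full-row-rank reading of the hypotheses, your proposal is correct, and it takes a genuinely different route from the paper.

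\textbf{The paper's argument.} It works in one step with the averaged Jacobian $\bar J=\int_0^1\nabla\mathcal{K}_0(y_0^*+t(y_\rho^*-y_0^*))\,dt$. From $\rho\mathbf{1}_\rho=\mathcal{K}_0(y_\rho^*)-\mathcal{K}_0(y_0^*)=\bar J(y_\rho^*-y_0^*)$ it multiplies by $\bar J^+$ and asserts $\|\bar J^+\|_2\le D_0$ ``by continuity.'' This is shorter, and it is phrased for an arbitrary solution $y_\rho^*$ that is assumed to lie in the neighborhood. However, it has two weak points:
\begin{itemize}
\item It tacitly uses $\bar J^+\bar J=I$. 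Without this, only the component of $y_\rho^*-y_0^*$ in $\mathrm{Row}(\bar J)$ is bounded.
\item It needs a pseudoinverse bound for an \emph{averaged} matrix. The pointwise hypothesis does not give this, since averaging can raise the rank and create small singular values.
\end{itemize}
Full column rank is not available here. As the paper itself notes, $\nabla\mathcal{K}_\rho$ is not square, and for the reduced system it has more columns than rows (e.g.\ whenever some $K^i\ge 3$). So solutions of $\mathcal{K}_\rho(y)=0$ near $y_0^*$ are non-isolated, and the bound cannot hold for all of them.

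\textbf{Your argument.} Your continuation argument uses the $D_0$ bound only pointwise along the trajectory. It needs no a priori localization of $y_\rho^*$, and it actually produces a Lipschitz branch $t\mapsto y(t)$ of solutions. What it controls is a particular solution: the endpoint of the path, and hence also the solution nearest $y_0^*$. That is exactly the reading under which the statement is true.

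\textbf{Points to tighten.}
\begin{itemize}
\item You give two justifications for $e\in\mathrm{Col}(\nabla\mathcal{K}_0(y(t)))$, but only full row rank works; in that case $\nabla\mathcal{K}_0\nabla\mathcal{K}_0^+=I$. Solvability of $\mathcal{K}_\rho(y)=0$ at the single value $\rho$ says nothing about the column space at $y(t)$ for $0<t<\rho$. You should state full row rank as the operative hypothesis.
\item Your fallback is essentially the paper's argument and inherits its defects. In particular, a minimum-distance selection only yields $y_\rho^*-y_0^*\in\mathrm{Row}(\nabla\mathcal{K}_0(y_\rho^*))$, not $\mathrm{Row}(\bar J)$. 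You do not need the fallback: since $y(\rho)$ is a solution within $D_0\|e\|_2\rho$ of $y_0^*$, so is the nearest solution.
\item $\mathcal{K}_\rho$ has $\sum_i K^i m_{\mathcal{I}}^i$ complementarity rows, namely $s^i_{1:K^i}\odot\gamma^i_{1:K^i}-\rho\bm{1}$. Hence $\|e\|_2=(\sum_i K^i m_{\mathcal{I}}^i)^{1/2}$, which exceeds $\sqrt{N_c}$ as defined once some $K^i>1$. The paper's proof makes the same slip.
\end{itemize}
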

\begin{proof}
    The proof is provided in the Appendix. 
\end{proof}

Specifically, Theorem~\ref{thm:central path} shows that $\| {y}_\rho^* - {y}_0^* \|_2 = \mathcal{O}(\rho)$, meaning that the error between the converged PDIP iterate and the solution of the reduced KKT system at $\rho=0$ decomposes into two independent contributions: 
\begin{equation}
    \| {y}_\rho^{(\ell)} - {y}_0^* \|_2 \le \| {y}_\rho^{(\ell)} - {y}_\rho^* \|_2 + D_0 \sqrt{N_c}\rho
\end{equation}
where the first term is the algorithmic error, controlled by the PDIP convergence tolerance ${\epsilon}$ and decreasing quadratically by Theorem~\ref{thm:quadratic-convergence}, and the second term is the perturbation error, arising from the $\rho$-relaxation of the complementarity condition ${s} \odot {\gamma} = \rho \bm{1}$ and decreasing linearly with $\rho$. Therefore, driving both $\epsilon$ and $\rho$ to zero is sufficient to recover a ground truth solution to the reduced KKT conditions.

\section{Numerical Study}
\label{sec:experiment}
% \begin{itemize}
%     \item Two-player intersection scenario with four preference levels
%     \item Solver Performance Comparison: computation times, solution quality 
% \end{itemize}

% {\color{blue}
% Other potential theoretical contributions:
% \begin{enumerate}
%     \item How fast should we decrease the annealing parameter $\mu$ in PDIP such that we have consistent convergence across each parameter annealing step without diverging? 
%     \item Theoretically, would there be a benefit to using a degrading/increasing step size across different layers? 
%     \item What conditions of the problem objectives / constraints can guarantee the exponential convergence? 
% \end{enumerate}
%}

In this section, we report a numerical study\footnote{Code is available at \url{https://github.com/CLeARoboticsLab/Reduced-GOOP}.} to quantify the advantages of the reduced KKT system formulation compared with the complete formulation.
% computational behavior of our proposed reduced KKT system for GOOP and to quantify its practical advantages over the complete KKT construction. 
% We consider two representative regimes: (i) quadratic GOOPs, as in \Cref{subsec:equivalence-quadratic-goop}, and (ii) nonquadratic GOOPs, obtained by adding a nonquadratic objective term at the innermost preference level.

\subsection{Complexity Study}
We compare reduced and complete KKT systems on quadratic and nonquadratic GOOPs with $N=4$ players. Each player $i$ has $n^i = 10$ primal variables, $\equaldim^i = 3$ linear equality constraints, and $\inequaldim = 2$ linear inequality constraints. We vary the number of preference levels $K \in \{2,\dots,6\}$ and, for each $K$, generate 100 random instances with initial values (for primal variables) perturbed as $\mathcal{N}(0,1)$ around a feasible primal solution $z_\rho^{(0)}$. In the quadratic setting, each $Q_k^i$ in \cref{eq:quadratic-cost} is sampled as rank-2 positive semidefinite from Gaussian matrices, and $q_k^i$ is chosen from the column space of $Q_k^i$ to ensure boundedness; constraints satisfy \cref{eq:inequality-licq}. 
To generate nonquadratic problem instances, we replace the outermost objective with $(\mathbf{1}^\top z)^4$. We adapt Algorithm~\ref{alg:goop-pdip} to the complete KKT system by approximating complementarity as in \eqref{pdip-slack-variables}, and run it on both the reduced and complete formulations using a geometric schedule $\rho \in\{1,10^{-1},\dots,10^{-10}\}$. 

\begin{table}[ht]
\caption{Comparison of reduced and complete KKT systems across preference levels. \textnormal{R} denotes the reduced KKT system, and \textnormal{C} denotes the complete KKT system. }
\label{tab:kkt-compare-levels}
\centering
\setlength{\tabcolsep}{4pt}
\renewcommand{\arraystretch}{1.05}
\resizebox{\linewidth}{!}{%
\begin{tabular}{c c c c c c c c c}
\toprule
& \multicolumn{2}{c}{\shortstack{System\\size}}
& \multicolumn{2}{c}{\shortstack{Variable\\size}}
& \multicolumn{2}{c}{\shortstack{Solve time (s)\\for Quad.\ GOOP}}
& \multicolumn{2}{c}{\shortstack{Solve time (s)\\for Nonquad.\ GOOP}} \\
\cmidrule(lr){2-3}\cmidrule(lr){4-5}\cmidrule(lr){6-7}\cmidrule(lr){8-9}
Level ($K$) & R & C & R & C & R & C & R & C \\
\midrule
2 & 132  & 168  & 128  & 136  & $0.05 \pm 0.06$ & $0.07 \pm 0.06$ & $0.08 \pm 0.12$ & $0.19 \pm 0.21$ \\
3 & 188 & 368  & 244 & 304  & $0.17 \pm 0.16$ & $0.34 \pm 0.12$ & $0.53 \pm 0.44$ & $1.11 \pm 0.35$ \\
4 & 244 & 800  & 408 & 672  & $0.32 \pm 0.32$ & $2.32 \pm 2.20$ & $0.54 \pm 0.56$ & $5.09 \pm 2.82$ \\
5 & 300 & 1728 & 620 & 1472 & $0.60 \pm 0.75$ & Failed          & $0.96 \pm 0.88$ & Failed \\
6 & 356 & 3712 & 880 & 3200 & $1.19 \pm 0.80$ & Failed          & $1.79 \pm 3.01$ & Failed \\
% 2 & 132  & 168  & 128  & 136  & $0.02 \pm 0.02$ & $0.04 \pm 0.02$ & $0.05 \pm 0.01$ & $0.09 \pm 0.12$ \\
% 3 & 188 & 368  & 244 & 304  & $0.12 \pm 0.06$ & $0.39 \pm 0.16$ & $0.23 \pm 0.20$ & $0.46 \pm 0.13$ \\
% 4 & 244 & 800  & 408 & 672  & $0.19 \pm 0.05$ & $1.72 \pm 0.25$ & $0.39 \pm 0.17$ & $3.30 \pm 3.76$ \\
% 5 & 300 & 1728 & 620 & 1472 & $0.41 \pm 0.74$ & Failed          & $0.72 \pm 0.61$ & Failed \\
% 6 & 356 & 3712 & 880 & 3200 & $1.10 \pm 1.83$ & Failed          & $1.02 \pm 37.85$ & Failed \\
% 2 & 132  & 168  & 128  & 136  & $0.04 \pm 0.02$ & $0.05 \pm 0.02$ & $0.05 \pm 0.01$ & $0.13 \pm 0.12$ \\
% 3 & 188 & 368  & 244 & 304  & $0.22 \pm 0.06$ & $0.43 \pm 0.16$ & $0.29 \pm 0.20$ & $0.50 \pm 0.13$ \\
% 4 & 244 & 800  & 408 & 672  & $0.21 \pm 0.05$ & $1.73 \pm 0.25$ & $0.43 \pm 0.17$ & $4.48 \pm 3.76$ \\
% 5 & 300 & 1728 & 620 & 1472 & $0.69 \pm 0.74$ & Failed          & $0.89 \pm 0.61$ & Failed \\
% 6 & 356 & 3712 & 880 & 3200 & $1.58 \pm 1.83$ & Failed          & $12.98 \pm 37.85$ & Failed \\
\bottomrule
\end{tabular}%
}
\end{table}
\cref{tab:kkt-compare-levels} reports: (i) total wall-clock solve time of Algorithm \ref{alg:goop-pdip} summed over all values of $\rho$ per instance, reported as mean ± standard deviation over 100 random instances after trimming the top and bottom $2.5\%$ of runtimes; (ii) the number of variables in each KKT system, $N_{\mathrm{\var^i,\eta^i}}$ (reduced) and $N_{\mathrm{\bar\var^i,\bar\eta^i}}$ (complete); and (iii) the number of equality and inequality constraints appearing in each KKT system, $({\mathcal{F}}^i,{\mathcal{G}}^i)$ (reduced) and $(\bar{\mathcal{F}}^i,\bar{\mathcal{G}}^i)$ (complete).
The complete system exhibits exponential growth in system size with respect to $K$, reflected by the rapid increase in the number of variables and constraints. 
This exponential growth translates directly into greater computational burden: across all tested preference levels, the complete formulation requires substantially longer solve times, and the performance gap widens as $\numlevels$ increases.
The same qualitative behavior is observed in both the quadratic and nonquadratic settings. 
We do not report nonquadratic GOOP results for the complete KKT system formulation when considering larger preference hierarchies ($K>4$), since the symbolic compilation became intractable due to the large size of complete KKT systems. 
% at this scale.
% complete KKT systems become computationally prohibitive at these sizes.

\begin{figure}[t]
    \centering
    \includegraphics[width=\linewidth]{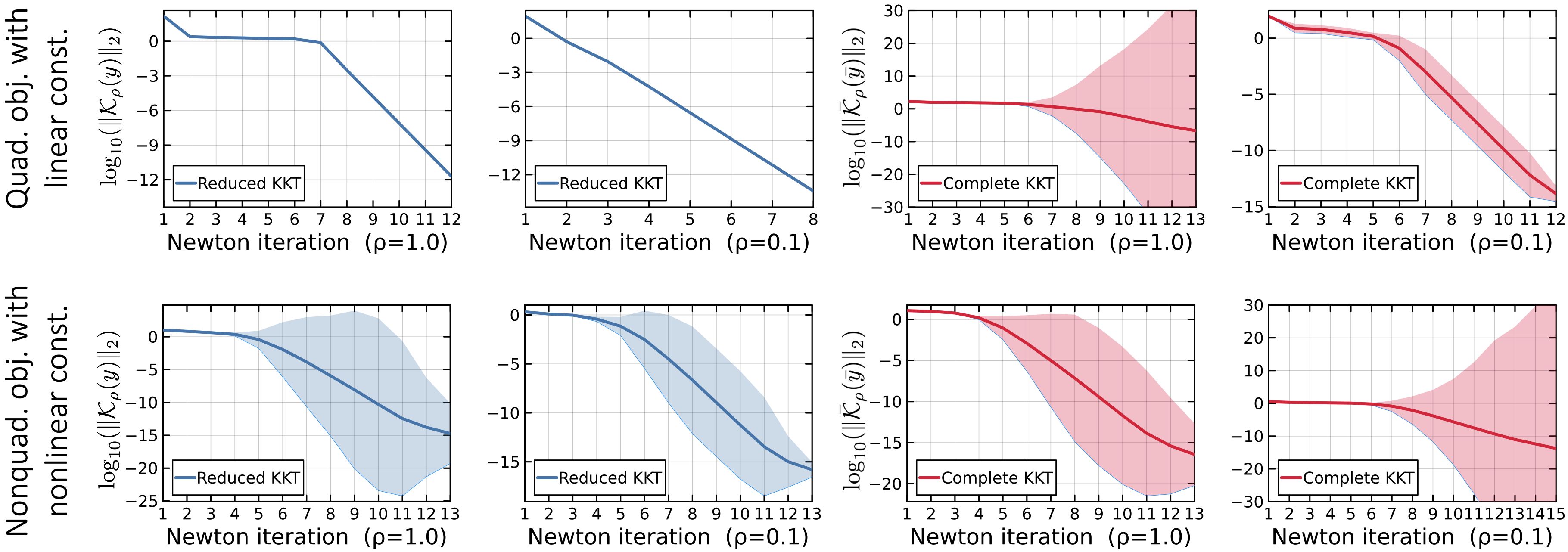}\vspace{-0.5em}
    \caption{Convergence of Algorithm~\ref{alg:goop-pdip} for reduced and complete KKT systems under varying $\rho$. The curve and the shaded area denote the mean and the variance of $\log_{10}(\|\mathcal{K}_\rho(y)\|_2)$, respectively. 
}
    \label{fig:trace}
\end{figure}
\begin{figure}[t]
\centering
\begin{minipage}{0.49\textwidth}
    \centering
    \includegraphics[width=\linewidth]{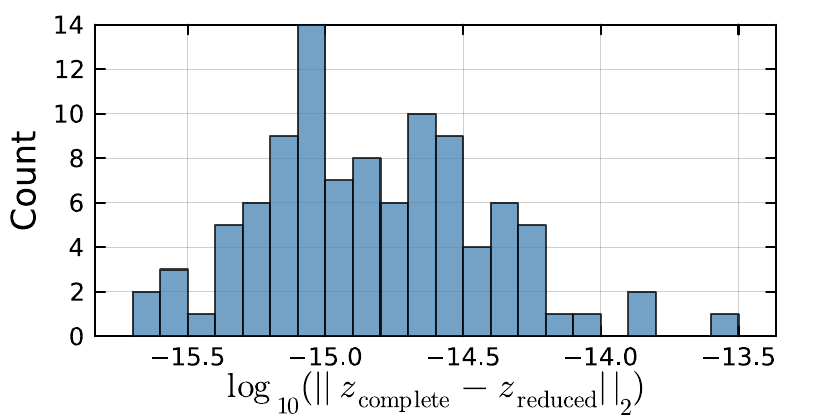}
\end{minipage}
\hfill
\begin{minipage}{0.5\textwidth}
    \caption{Monte Carlo comparison over 100 randomly generated GOOP instances with nonquadratic objectives and nonlinear constraints, showing close agreement between primal solutions $z_{\mathrm{complete}}$ and $z_{\mathrm{reduced}}$, obtained from the complete and reduced KKT systems, respectively.
    % Monte Carlo comparison of primal solutions from complete and reduced KKT conditions over 100 randomly generated GOOP problems with nonquadratic objectives and nonlinear constraints, where the primal solutions $z_{\textrm{complete}}$ under the complete KKT systems are close to the primal solutions $z_{\textrm{reduced}}$ under the reduced KKT systems.
 %Empirically identical, though theoretical guarantees are hindered by the complexity of the nonlinear KKT system's solution space.
}\label{fig:monte}
\end{minipage}
\end{figure}
% Finally, we observe primal exactness across all reported instances. 
% Specifically, for each primal solution of the reduced KKT system, we recover dual variables so that the complete KKT conditions are satisfied at the \emph{same} primal point.
% Thus, in both the quadratic and nonquadratic regimes, the reduced KKT system provides a computationally efficient route to finding identical primal solutions.
% This shows that we can effectively reduce system size growth from exponential to polynomial in the formulation. 

% \begin{remark}
% For quadratic \ac{goop}, this empirical primal equivalence is supported by our analysis in \cref{thm:ng-og-primal-equivalence,thm:ng-og-primal-equivalence-inequality}. 
% In the nonquadratic regime, we likewise observe primal solution equivalence across all tested instances, even though the quadratic proof strategy does not directly extend (as discussed earlier, dual variable misalignment can enter the upper-level stationarity conditions through higher-order terms). 
% Indeed, our only theoretical result in this case (\cref{thm:reduced-is-relaxation-of-complete}) ensures that primal solutions to the complete KKT system are also primal solutions to the reduced KKT system (but not necessarily vice versa).
% % \david{$\leftarrow$ add something like this, unless I am horribly mistaken about the result}
% Establishing conditions under which the reduced and complete KKT systems are guaranteed to share the same primal solutions in general settings therefore remains an important direction for future work.
% \end{remark}

\subsection{Convergence Study}
We empirically demonstrate linear and local quadratic convergence of Algorithm~\ref{alg:goop-pdip} on two problem classes. Class~(i):~$N=3$ players with $n^i=10$ variables and $K=4$ levels, using rank-2 quadratic objectives $J_k^i(z)=z^\top Q_{k}^i z + q_{k}^{i\top} z$ and linear constraints whose normals lie in the column space of $Q_{k}^i$. Class~(ii):~$N=2$ players with $n^i=4$ variables and $K=3$ levels, with a top-level $L_2$ objective $J_1^i(z)=\|z\|_2^2$ ensuring a unique equilibrium, inner objectives $J_k^i(z)=e^{v_{i,k}^\top z}+e^{-v_{i,k}^\top z}$ for random unit vectors $v_{i,k}$ (obtained by normalizing Gaussian samples), and nonlinear coupling constraints. In both settings, the convergence curves in Figure~\ref{fig:trace}, showing the mean and variance of $\log_{10}(\|\mathcal{K}_\rho(y)\|_2)$ over 10 runs with initializations perturbed as $\mathcal{N}(0,0.5)$ around a feasible solution $z_0$, empirically support Theorem~\ref{thm:quadratic-convergence}. Figure~\ref{fig:monte} further presents a Monte Carlo study over 100 instances of class~(ii), indicating that the reduced KKT system recovers the same primal solution as the complete system, although theoretical guarantees remain challenging due to the complexity of the nonlinear KKT solution space.

\subsection{Application to a practical scenario}
Finally, \cref{fig:2-player-intersection} demonstrates the applicability of the reduced GOOP formulation and Algorithm~\ref{alg:goop-pdip} on a practical multi-vehicle intersection planning problem, originally formulated in \cite{RAL-GOOP}, illustrating the potential of games of ordered preference in practical settings.

\begin{figure}[t]
    \centering
    \begin{subfigure}[t]{0.28\linewidth}
        \centering
        \includegraphics[width=\linewidth]{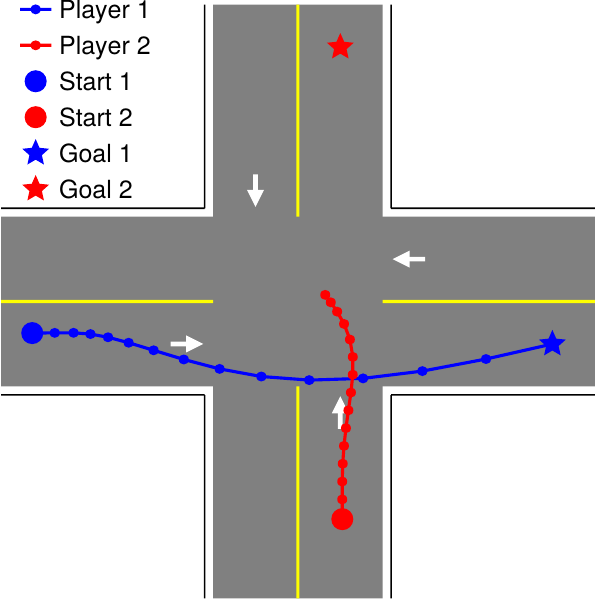}
        \caption{Computed trajectories}
        \label{fig:fig1}
    \end{subfigure}
    \hfill
    \begin{subfigure}[t]{0.34\linewidth}
        \centering
        \includegraphics[width=\linewidth]{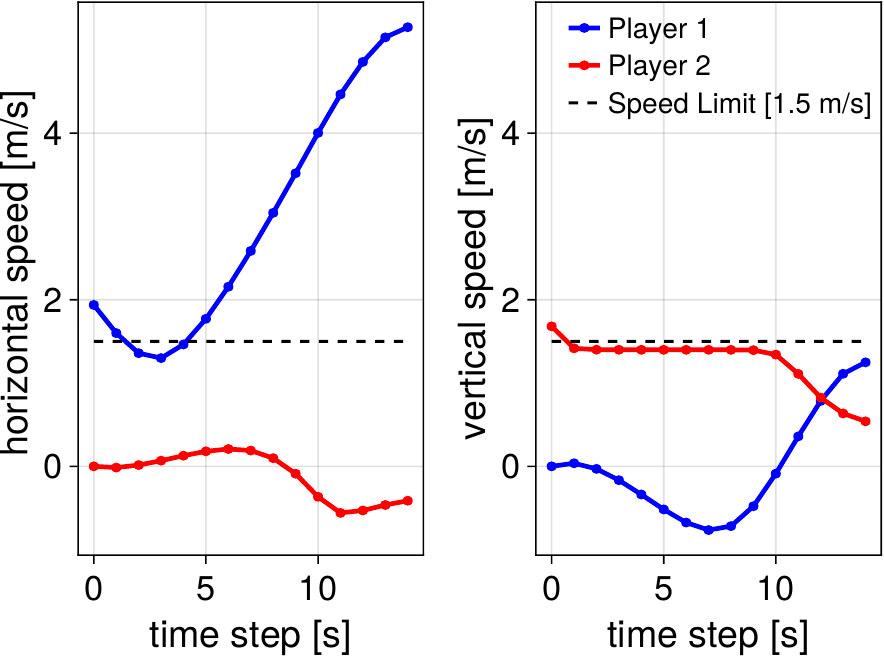}
        \caption{Velocity profile}
        \label{fig:fig2}
    \end{subfigure}
    \hfill
    \begin{subfigure}[t]{0.34\linewidth}
        \centering
        \includegraphics[width=\linewidth]{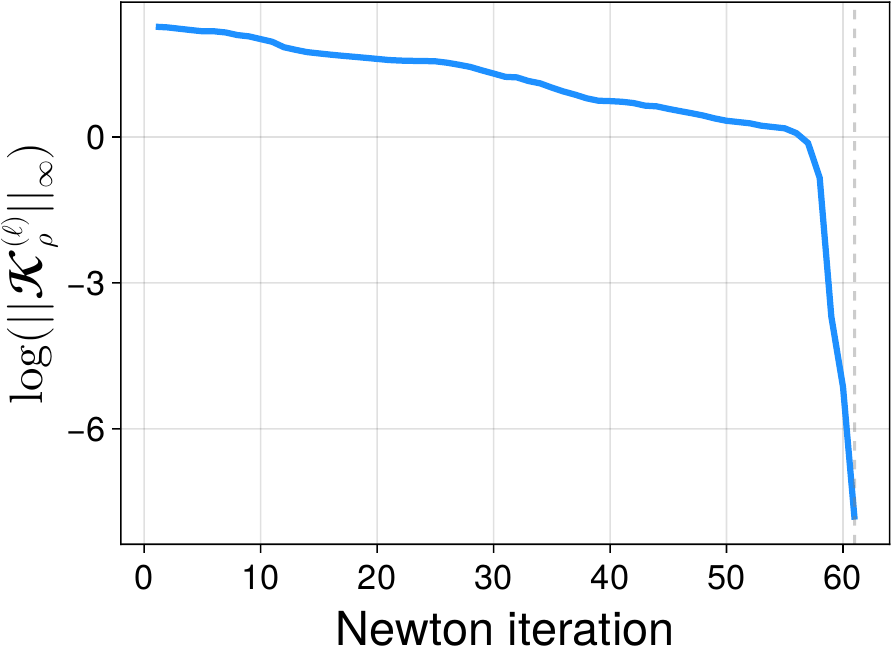}
        \caption{Convergence plot }
        \label{fig:fig3}
    \end{subfigure}
    \caption{Two-player intersection scenario solved using the reduced KKT system. We use the four-preference setup as in \cite{RAL-GOOP}. Player 1 (blue) prioritizes reaching its goal and accelerates beyond the speed limit. Player 2 (red) prioritizes obeying the speed limit and proceeds toward its goal while remaining within the limit. The convergence plot shows local quadratic convergence of the KKT residual to $10^{-8}$.
    % The homotopy parameter $\rho$ is initialized at $1.0$ and reduced by a factor of $\sigma = 0.5$ at each outer iteration according to \cref{alg:goop-pdip}; the reported results correspond to $\rho = 2^{-10}$, which corresponds to the minimum of the $\rho$ schedule.
    The homotopy parameter $\rho$ is reduced following a geometric schedule $\{1,2^{-1},\dots,2^{-10}\}$; the reported results correspond to $\rho = 2^{-10}$.}
    \label{fig:2-player-intersection}
\end{figure}
\section{Conclusion}
\label{sec:conclusion}

In this paper, we studied games of ordered preference and introduced a compact reduced KKT system that avoids the exponential complexity of standard single-level reformulations while preserving the essential primal stationarity structure across preference levels.
For quadratic GOOPs with linear constraints, we established primal solution equivalence between the reduced and complete KKT systems. For general smooth nonlinear GOOPs, we showed that the reduced system is a relaxation of the complete system and supplemented this result with second-order sufficient conditions for local optimality.
We developed a primal-dual interior-point method and proved its local quadratic convergence. Numerical study shows that the reduced KKT system formulation yields faster solve times than the complete KKT formulation, with benefits that grow in the number of preference levels.
Future work will address primal solution equivalence beyond the quadratic setting, and broaden the framework to richer classes of hierarchical games and multi-agent decision problems.
\appendix
\section{Supplementary results} 
\label{sec:appendix}

\begin{proof}[Proof of \cref{thm:exponential-growth-complete-kkt}]
For player $i$ at level $k$, we define
$
\bar d_k^i \coloneqq \dim(\bar\eta_k^i),
\bar p_k^i \coloneqq \dim(\bar\eta_{k+1:K^i}^i),
$
$\bar v_k^i \coloneqq \dim(\bar\var^i,\bar\eta_{k:K^i}^i),
\bar f_k^i \coloneqq \dim(\bar{\mathcal F}_k^i),
\bar g_k^i \coloneqq \dim(\bar{\mathcal G}_k^i)
$,
respectively denoting the numbers of dual variables, induced primals, total variables, equations, and inequality constraints.
For the innermost level, $\bar d_{\numlevels^i}^i = \equaldim^i + \inequaldim^i, \bar p_{\numlevels^i}^i = 0, \bar v_{\numlevels^i}^i = n^i + \equaldim^i + \inequaldim^i, \bar f_{\numlevels^i}^i = n^i + \equaldim^i + \inequaldim^i$, and $\bar g_{\numlevels^i}^i = 2\inequaldim^i$. 
At any $k\in[\numlevels^i-1]$, $\bar d_k^i = \bar f_{k+1}^i + \bar g_{k+1}^i$, $\;\bar p_{k}^i = \sum_{j=k+1}^{\numlevels^i}\bar d_j^i$, $\bar v_{k}^i = n^i + \bar p_k^i + \bar d_k^i$, $\bar f_{k}^i = n^i + \bar p_k^i + \bar d_k^i$, and $\bar g_{k}^i = 2^{\numlevels^i-k+1}\inequaldim^i$. 
Observe that $\bar d^i_k = \bar f_k^i - n^i - \bar p^i_k$ and $\bar p_{k-1}^i= \bar d_k^i + \bar p_k^i$ implies $\bar p_{k-1}^i = \bar f_k^i - n^i$.
Substituting this into $\bar f^i_{k-1} = n^i + \bar p^i_{k-1} + \bar f_{k}^i + \bar g_{k}^i$ yields $\bar f^i_k = 2\bar f_{k}^i + \bar g_{k}^i$. Then, applying the previous identity repeatedly and substituting $\bar f_{\numlevels^i}^i$ and $\bar g_{k+j}^i$,
\begin{align}
    \bar f_k^i &= 2\bar f_{k+1}^i + \bar g_{k+1}^i = 2\left(2\bar f_{k+2}^i + \bar g_{k+2}^i\right) + \bar g_{k+1}^i = \dots = 2^{\numlevels^i-k}\bar f_{\numlevels^i}^i + \sum_{j=1}^{\numlevels^i-k} 2^{j-1} \bar g_{k+j}^i \notag \\
          &= 2^{\numlevels^i-k}\left(n^i + \equaldim^i + (\numlevels^i-k+1)\inequaldim^i\right). \notag
\end{align}
\end{proof}

\begin{proof}[Proof of \cref{thm:reduced-is-relaxation-of-complete}]
We show that, for each player \(i\), every solution of the complete KKT system induces a solution of the reduced KKT system at the same primal point. The argument proceeds by backward induction on the preference level.

Fix a player \(i\), and let
$(\bar \var^i,\bar \eta^i_{1:\numlevels^i})$
satisfy the complete KKT system. Throughout the proof, we identify the primal variables in the two systems by setting $\var^i \coloneqq \bar \var^i.$
We then construct reduced-system dual variables \(\eta^i_{1:\numlevels^i}\) from the complete-system dual variables \(\bar\eta^i_{1:\numlevels^i}\) so that the reduced KKT conditions hold at the same primal point.

% At each level, the reduced system is obtained from the complete system by discarding those conditions involving stationarity with respect to induced lower-level dual variables and the associated complementarity relations. The corresponding (auxiliary) multiplier components in the reduced system are immaterial for feasibility and may therefore be chosen arbitrarily; we set them to zero
% \david{aren't they just not in the reduced system at all? i am confused why we are setting things to zero that are not even variables in the reduced system}. 

\paragraph{\textbf{Base case} \(k=\numlevels^i-1\)}
Consider the complete KKT system at level \(\numlevels^i-1\).
Let $\bar\gamma^i_{\numlevels^i-1,1}$ be the dual variable for the inequality constraint $g^i(\bar \var^i, \bar \var^{\neg i}) \geq \bm{0}$ and $\bar\gamma^i_{\numlevels^i-1,2}$ the dual variable for $\bar\gamma^i_{\numlevels^i} \geq \bm{0}$ constraint. 
Relative to the complete system $\bar{\mathcal{F}}^i_{\numlevels^i-1}$, the reduced system $\mathcal{F}^i_{\numlevels^i-1}$ omits (i)
$\nabla_{\bar\eta^i_{\numlevels^i}}\bar\Lagrangian^i_{\numlevels^i}
$
from the stationarity constraints and (ii) the complementarity conditions
$
\bar\gamma^i_{\numlevels^i}\odot \bar\gamma^i_{\numlevels^i-1,2}.
$ 
% \david{what is this last variable? i don't remember the comma 2 subscript}.
Accordingly, we set the reduced multipliers as
$
\psi^i_{\numlevels^i-1} = \bar\psi^i_{\numlevels^i-1}, \;
\phi^i_{\numlevels^i-1} = \bar\phi^i_{\numlevels^i-1}, \;
\lambda^i_{\numlevels^i-1} =  \bar\lambda^i_{\numlevels^i-1}, \;
\gamma^i_{\numlevels^i-1} = \bar\gamma^i_{\numlevels^i-1,1},
$
and discard the remaining dual variable \(\bar\gamma^i_{\numlevels^i-1,2}\) (associated with the omitted complementarity conditions).
Next, we obtain the Lagrangian of the reduced KKT system at level ${\numlevels^i-1}$ from the complete-system Lagrangian:
\begin{align}
\label{eqn:appendix-illustration-K-1}
    \bar \Lagrangian_{\numlevels^i-1}^i(\bar\var, \bar \eta^i_{\numlevels^i-1:\numlevels^i}) &= \cost{i}_{\numlevels^i-1}(\bar\var) - \bar\lambda_{\numlevels^i-1}^{i\top}h^i(\bar\var) - \bar\gamma_{\numlevels^i-1,1}^{i\top}g^i(\bar\var) \\ 
    &{\;\quad}- \bar\psi_{\numlevels^i-1}^{i\top}   \nabla_{\bar\var^i} \bar\Lagrangian^i_{\numlevels^i}
    - \bar\phi_{\numlevels^i-1}^{i\top} 
        \left(g^i(\bar\var) \odot \bar \gamma^i_{\numlevels^i}\right)
     - \cancel{\bar\gamma_{\numlevels^i-1,2}^{i\top}}  \bar \gamma^i_{\numlevels^i} \notag  \\ 
     &= \Lagrangian_{\numlevels^i-1}^i(\bar\var, \eta^i_{\numlevels^i-1:\numlevels^i}). \notag
\end{align}
Hence the reduced system \(\mathcal F^i_{\numlevels^i-1}\) is satisfied by the complete-system solution. Likewise, the reduced inequality system \(\mathcal G^i_{\numlevels^i-1}\) is obtained from \(\bar{\mathcal G}^i_{\numlevels^i-1}\) by removing the nonnegativity conditions associated with the discarded multipliers, i.e, \(\bar\gamma^i_{\numlevels^i-1,2}\geq\bm{0}\). 

\paragraph{\textbf{Inductive step}}
Let \(k\in\{2,\dots,\numlevels^i-1\}\), and suppose that reduced-system dual variables have already been constructed at levels \(k,\dots,\numlevels^i\) so that the reduced KKT conditions hold on those levels. We now construct the reduced dual variables at level \(k-1\).
Consider the complete KKT system at level \(k-1\). 
Let $\bar\gamma^i_{k-1,1}$ be the dual variable for the inequality constraint $g^i(\bar \var^i, \bar \var^{\neg i}) \geq \bm{0}$ and $\bar\gamma^i_{k-1,2}$ the dual variable for $\bar\gamma^i_{k:\numlevels^i} \geq \bm{0}$ constraint. In passing from the complete system $\bar{\mathcal{F}}^i_{k-1}$ to the reduced system ${\mathcal{F}}^i_{k-1}$, we omit: (i) the stationarity conditions with respect to lower-level dual variables, $\nabla_{\bar\eta^i_{k:\numlevels^i}} \bar\Lagrangian^i_{k-1}$, 
(ii) the complementarity conditions involving lower-level inequality multipliers, $\bar\gamma^i_{k:\numlevels^i}\odot \bar\gamma^i_{k-1,2}$.
% \david{same comment about maybe not introducing this variable before. also the cdot is hard to see}
% , together with the corresponding nonnegativity conditions on the associated multiplier components.
Next, we set the reduced dual variables at level \(k-1\) by retaining only a subset of components of \(\bar\eta^i_{k-1}\): (i) the components of \(\psi^i_{k-1}\) are taken from \(\bar\psi^i_{k-1}\) for the terms associated with $
    \nabla_{\var^i}\bar\Lagrangian^i_{k:\numlevels^i},
    % \nabla_{\var^i}\bar\Lagrangian^i_{k+1},\;
    % \dots,\;
    % \nabla_{\var^i}\bar\Lagrangian^i_{\numlevels^i},
    $
    (ii) the components of \(\phi^i_{k-1}\) are taken from \(\bar\phi^i_{k-1}\) for the terms associated with
    $
    g^i(\var^i,\var^{\neg i})\odot \bar\gamma^i_{k,1}, \dots, g^i(\var^i,\var^{\neg i})\odot \bar\gamma^i_{\numlevels^i}
    % g^i(\var^i,\var^{\neg i})\odot \bar\gamma^i_{k+1,1},\;
    % \dots,\;
    $
    and (iii) the current-level feasibility multipliers are preserved:
    $
    \lambda^i_{k-1} = \bar\lambda^i_{k-1}, \;
    \gamma^i_{k-1} =  \bar\gamma^i_{k-1,1}.
    $
The remaining complete-system dual variables at level \(k-1\), including $\bar\gamma^i_{k-1,2}$, are discarded. 
% \david{same comment as above. i think you are maybe trying to prove that a solution to the complete system can be constructed from the reduced system solution, which is the reverse of the theorem summary statement at the top of the proof}
We illustrate this construction below:
% \david{find a way to get this on the previous page} \david{I still don't follow the comma in the subscript terms } \david{why do this at level $k-1$ instead of $k$?}
\begin{align}
\label{eqn:appendix-illustration}
    &\bar \Lagrangian_{k-1}^i(\bar\var^i, \bar\var^{\neg i}, \bar \eta^i_{k-1:\numlevels^i}) = \cost{i}_{k-1}(\bar\var^i, \bar\var^{\neg i}) - \bar\lambda_{k-1}^{i\top}h^i(\bar\var^i, \bar\var^{\neg i}) - \bar\gamma_{k-1,1}^{i\top}g^i(\bar\var^i, \bar\var^{\neg i}) \\ 
    &- \bar\psi_{k-1}^{i\top} \begin{bmatrix}
        \nabla_{\bar\var^i} \bar\Lagrangian^i_k \\[0.5ex]
        \cancel{\nabla_{\bar\eta^i_{k+1:\numlevels^i}}\bar\Lagrangian^i_k} \\[0.5ex]
        \nabla_{\bar\var^i} \bar\Lagrangian^i_{k+1} \\[0.5ex]
        \cancel{\nabla_{\bar\eta^i_{k+2:\numlevels^i}}\bar\Lagrangian^i_{k+1}} \\
        \vdots \\
        \nabla_{\bar\var^i} \bar\Lagrangian^i_{\numlevels^i}
    \end{bmatrix}
    - \bar\phi_{k-1}^{i\top} \begin{bmatrix}
        g^i(\bar\var^i, \bar\var^{\neg i}) \odot \bar\gamma^i_{k,1} \\[0.5ex]
        \cancel{\bar\gamma^i_{k+1:\numlevels^i} \odot \bar\gamma^i_{k,2}} \\[0.5ex]
        g^i(\bar\var^i, \bar\var^{\neg i}) \odot \bar\gamma^i_{k+1,1} \\[0.5ex]
        \cancel{\bar\gamma^i_{k+2:\numlevels^i} \odot \bar\gamma^i_{k+1,2}} \\
        \vdots \\
        g^i(\bar\var^i, \bar\var^{\neg i}) \odot \bar \gamma^i_{\numlevels^i}
    \end{bmatrix} - \cancel{\bar\gamma_{k-1,2}^{i\top}} \begin{bmatrix}
        \bar \gamma^i_{k} \\
        \vdots \\
        \bar \gamma^i_{\numlevels^i} \notag
    \end{bmatrix} \\
     &= \Lagrangian_{k-1}^i(\var^i, \var^{\neg i}, \eta^i_{k-1:\numlevels^i}). \notag
\end{align}
% With this construction, every term appearing in the reduced Lagrangian \(\Lagrangian^i_{k-1}\) agrees with the corresponding term in the complete Lagrangian \(\bar\Lagrangian^i_{k-1}\), after deleting those terms associated with omitted lower-level dual stationarity conditions and omitted complementarity conditions. Equivalently, the reduced stationarity system \(\mathcal F^i_{k-1}\) is obtained by restricting \(\bar{\mathcal F}^i_{k-1}\) to the subset of equations retained in the reduced formulation. Since the complete-system solution satisfies all equations of \(\bar{\mathcal F}^i_{k-1}\), it in particular satisfies all equations of \(\mathcal F^i_{k-1}\).
With this contruction, the reduced system ${\mathcal{F}}^i_{k-1}$ is satisfied by the complete-system solution $(\bar\var, \bar\eta^i_{k:\numlevels^i})$.
The reduced system \(\mathcal G^i_{k-1}\) retains only the nonnegativity conditions of $\bar\gamma^i_{k-1,1}$ and inherits the conditions
$
\bar\gamma^i_{k,1}\ge \bm{0}, \dots, \bar\gamma^i_{\numlevels^i,1}\ge \bm{0}
$
from level \(k\), which hold by the induction hypothesis. Hence \(\mathcal G^i_{k-1}\) is also satisfied.
Since the construction is valid for every player \(i\in[\numplayers]\), we conclude that every solution of the complete KKT system induces a solution of the reduced KKT system at the same primal solution. 
% Consequently, the reduced KKT system is a relaxation of the complete KKT system, and every primal solution of the complete KKT system is also a primal solution of the reduced KKT system 
% \david{don't need the last half of the sentence - it's totally redundant with the previous sentence}.
\end{proof}

\begin{proof}[Proof of \cref{thm:quadratic-growth-reduced-kkt}]
    Fix a player $i\in[\numplayers]$. The total number of variables in the reduced KKT system $(\mathcal{F}^i, \mathcal{G}^i)$ consists of (i) primal variables $\var^i$ and (ii) dual variables $\eta^i_{1:\numlevels^i}$. 
    For any $k \leq \numlevels^i$, the dimension of the dual vector $\eta^i_k$ is $(\numlevels^i-k)n^i + (\numlevels^i-k+1)\inequaldim^i+\equaldim^i$. 
    The sum of all variables' dimensions is then 
    \begin{align}
        N_{\var^i,\eta^i} & = n^i + \sum_{k=1}^{\numlevels^i} (\numlevels^i-k)n^i + (\numlevels^i-k+1)\inequaldim^i+\equaldim^i \\
        & = \Bigl(1 + \frac{\numlevels^i(\numlevels^i-1)}{2}\Bigr)n^i
        + \numlevels^i\,\equaldim^i + \frac{\numlevels^i(\numlevels^i+1)}{2}\,\inequaldim^i. \notag
    \end{align}
\end{proof}

\begin{proof}[Proof of \cref{thm:ng-og-primal-equivalence}]
The proof has three parts: 
(a) we establish the column-space inclusion relation between the reduced and complete linear systems,
(b) we propagate this relation across levels using the recursive block structure of the two KKT matrices, and 
(c) we use the resulting column-space inclusions to show that the primal solution sets of the two systems coincide. 
% \david{give them explicit abc to match below}

(a) We first prove the equivalent statements: 
        $\text{Col}(R_{k}) \subseteq \text{Col}(\bar{R}_{k}) \Longleftrightarrow \mathcal N(\bar R_{k}^\top) \subseteq \mathcal N(R_{k}^\top).
            \label{eq:kp1-level-induction-statement}
        $
        Let $y = [u^\top, v^\top, w^\top, z^\top]^\top$
    %     $\begin{pmatrix}
    %     u \\ 
    %     v \\
    %     w \\ 
    %     z
    % \end{pmatrix} 
$\in \mathcal N(\bar{R}_k^\top)$ where $u \in \mathbb{R}^n$.
Then, 
\begin{align}
    \bar{R}_k^\top y = \begin{bNiceArray}[margin]{cc|cc}
\hat Q_k & \bm{0} & \Block{2-2}{\bar R_{k+1}}  \\
       \bm{0} & \bm{0} &                   \\
\hline
\Block{2-2}{\bar R_{k+1}} & & \bm{0} & \bm{0}   \\
& & \bm{0} & \bm{0}
\end{bNiceArray} \begin{bNiceArray}{c}[margin]
        u \\ 
        v \\
        \hline
        w \\ 
        x
    \end{bNiceArray} = \begin{bmatrix}
        \begin{bmatrix}
            \hat{Q}_{k} u \\ 
            \bm{0}
        \end{bmatrix} + \bar{R}_{k+1} \begin{bmatrix}
            w \\
            x
        \end{bmatrix} \\
        \bar{R}_{k+1} \begin{bmatrix}
            u \\ v
        \end{bmatrix}
    \end{bmatrix} = \begin{bmatrix}
        \bm{0} \\
        \bm{0} \\
        \bm{0} \\
        \bm{0}
    \end{bmatrix}.
    \label{eq:preceding-line-1}
\end{align}
Premultiplying the first block row in \cref{eq:preceding-line-1} by $[u^\top, v^\top]$ and using the symmetry of $\bar R_{k+1}$: 
\begin{align}
    u^\top \hat{Q}_{k} u + \begin{bmatrix}
        u \\ v
    \end{bmatrix}^\top \bar{R}_{k+1} \begin{bmatrix}
        w \\ x
    \end{bmatrix} = u^\top \hat{Q}_{k} u + \cancel{\left(\bar{R}_{k+1} \begin{bmatrix}
        u \\ v
    \end{bmatrix}\right)}^\top \begin{bmatrix}
        w \\ x
    \end{bmatrix} = u^\top \hat{Q}_{k} u = \bm{0}.
\end{align}
Since $\hat{Q}_{k}$ is positive semi-definite, it follows that $\hat{Q}_{k} u = \bm{0}$.
Now, consider 
    \begin{align}
    R_{k}^\top y
    = \begin{bNiceArray}{c|ccc}[margin]
           \hat{Q}_{k}  & \Block{2-1}{\bm{0}} & \Block{2-1}{\bm{0}} & \Block{2-1}{\bm{0}} \\
           \bar R_{k+1,[:,1:n]}^\top & &
           \CodeAfter
            \UnderBrace[yshift=3pt]{2-1}{2-1}{\bar R_{k,[:,1:n]}^\top}
       \end{bNiceArray}
       \begin{bNiceArray}{c}[margin]
        u \\
        \hline
        v \\
        w \\ 
        z
    \end{bNiceArray} 
    = \begin{bmatrix} \hat{Q}_{k} u \\ \bar R_{k+1,[:,1:n]}^\top u \end{bmatrix} 
     = \begin{bmatrix} \bm{0} \\ \bm{0} \end{bmatrix}, \notag
     \end{align}
    where the second row follows from $\bar R_{k+1} \begin{bmatrix}
        u \\ v
    \end{bmatrix} = \bm{0}$ in \cref{eq:preceding-line-1} and the symmetry of $\bar R_{k+1}$.
    Hence, $y\in\mathcal N(R_{k}^\top)$, which implies $\mathcal N(\bar R_{k}^\top)\subseteq \mathcal N(R_{k}^\top)$. 
\smallskip

(b) We prove this result by induction on $k$.
\paragraph{\textbf{Base case} $k=\numlevels-1$} We partition the matrix $M_{\numlevels-1}$ as follows.
\begin{align}
M_{\numlevels-1} = \begin{bNiceArray}{c|ccc}[margin]
    Q_{\numlevels-1} & \bm{0} & \hat{Q}_{\numlevels} & \hat H^\top \\
    \bm{0}       & \bm{0} & \bm{0}   & \bm{0}      \\
    Q_{\numlevels}     & \hat H^\top & \bm{0} & \bm{0}   \\
    H       & \bm{0} & \bm{0}  & \bm{0} 
\CodeAfter
    \UnderBrace[yshift=3pt]{4-1}{4-1}{C_{\numlevels-1}}
    \UnderBrace[yshift=3pt]{4-2}{4-4}{M_{\numlevels-1}^c}
\end{bNiceArray} \Rightarrow M_{\numlevels-1}^c =  \begin{bNiceArray}{c|cc}[margin]
\bm{0} & \Block{2-2}{R_\numlevels} & \\
\bm{0} & & \\
\hline\\[-2.0ex]
\hat H^\top & \bm{0} & \bm{0}\\
\bm{0} &  \bm{0} & \bm{0}
\end{bNiceArray}.
\end{align}
\bigskip

Similarly, we partition the matrix $\bar M_{\numlevels-1} $ as
\begin{align}
\bar M_{\numlevels-1} = \begin{bNiceArray}{c|ccc}[margin]
    Q_{\numlevels-1} & \bm{0} & \hat Q_{\numlevels} & \hat H^\top  \\
    \bm{0}       & \bm{0} & \hat H   & \bm{0}      \\
    Q_{\numlevels}     & \hat H^\top & \bm{0} & \bm{0}   \\
    H       & \bm{0} & \bm{0}  & \bm{0} 
\CodeAfter
    \UnderBrace[yshift=3pt]{4-1}{4-1}{C_{\numlevels-1}}
    \UnderBrace[yshift=3pt]{4-2}{4-4}{\bar M_{\numlevels-1}^c}
\end{bNiceArray} \Rightarrow \bar M_{\numlevels-1}^c =  \begin{bNiceArray}{c|cc}[margin]
\bm{0} & \Block{2-2}{\bar R_\numlevels} & \\
\bm{0} & & \\
\hline\\[-2.0ex]
\hat H^\top & \bm{0} & \bm{0}\\
\bm{0} & \bm{0} & \bm{0}
\end{bNiceArray}.
\end{align}
\bigskip

By (a), $\mathrm{Col}\big(R_{\numlevels}\big) \subseteq \mathrm{Col}\big(\bar R_{\numlevels}\big)$. This implies that $\mathrm{Col}\big(M_{\numlevels-1}^c\big) \subseteq \mathrm{Col}\big(\bar M_{\numlevels-1}^c\big)$.
\paragraph{\textbf{Induction step}} Assume $\mathrm{Col}\big(M_{k}^c\big) \subseteq \mathrm{Col}\big(\bar M_{k}^c\big)$ for any $k \in [\numlevels-1]$.
Consider the level $k-1$. 
Partitioning the matrices $M_{k-1}$ and $\bar
M_{k-1}$ are:
% \david{does this require proof? where does it come from?}
\begin{align}
        M_{k-1}^c = \begin{bmatrix}
            \bm{0} & R_{k} \\
            M_k^c & \bm{0}
        \end{bmatrix}, \quad \bar M_{k-1}^c = \begin{bmatrix}
            \bm{0} & \bar R_{k} \\
            \bar M_k^c & \bm{0}
        \end{bmatrix}. 
\end{align}
Define the subspaces:
$
    A = \operatorname{Col}\!\left(\begin{bmatrix} \bm{0} \\ M_k^c\end{bmatrix}\right) 
    ,
    \bar A = \operatorname{Col}\!\left(\begin{bmatrix} \bm{0} \\ \bar M_k^c\end{bmatrix}\right) 
    ,
    $
$
    B = \operatorname{Col}\!\left(\begin{bmatrix} R_{k} \\ \bm{0}\end{bmatrix}\right),
    $ and 
$
    \bar B = \operatorname{Col}\!\left(\begin{bmatrix} \bar R_{k} \\ \bm{0}\end{bmatrix}\right) 
    . 
$
By the induction hypothesis, $\mathrm{Col}\big(M_{k}^c\big) \subseteq \mathrm{Col}\big(\bar M_{k}^c\big) \Rightarrow A \subseteq \bar A$.
By (a), $B \subseteq \bar B$. 
% \jingqi{Notation and no footnote}
% Since $\operatorname{Col}\left( M_{k-1}^c\right) = A \oplus^\perp B$ \footnote{$A$ and $B$ are orthogonal subspaces, i.e., $A \perp B$ and they intersect only at zero, i.e., $A \cap B = \{\bm{0}\}$.} and $\operatorname{Col}\left(\bar M_{k-1}^c\right) = \bar A \oplus^\perp \bar B$, we conclude that $\mathrm{Col}\big(M_{k-1}^c\big) \subseteq \mathrm{Col}\big(\bar M_{k-1}^c\big)$.
Since $A$ and $B$ are orthogonal subspaces that span $\operatorname{Col}\left( M_{k-1}^c\right)$ and, similarly, $\bar A$ and $\bar B$ are orthogonal subspaces that span $\operatorname{Col}\left( \bar M_{k-1}^c\right)$, we conclude $\mathrm{Col}\big(M_{k-1}^c\big) \subseteq \mathrm{Col}\big(\bar M_{k-1}^c\big)$.

(c) We now use the results from parts (a) and (b) to show primal solution equivalence.  
Write $\bar v_k = \begin{bmatrix}\bar \var\\ \bar\eta_{k:\numlevels}\end{bmatrix}$ and $v_k = \begin{bmatrix}\var\\ \eta_{k:\numlevels}\end{bmatrix}$ 
and define the \emph{primal} solution sets
\begin{align}
\bar{\mathcal{X}}{(\bar p_k)}
=
\Bigl\{\,\bar \var \;\Big|\; \exists\,\bar\eta_k:\;
\bar M_k \bar v_k = \bar p_k \Bigr\}, \;\;
\mathcal{X}{(\bar p_k)}
=
\Bigl\{\,\var \;\Big|\; \exists\,\eta_k:\;
M_k v_k = \bar p_k \Bigr\}.
\end{align}

(i) ${\mathcal{X}}{(\bar p_k)} \subseteq \bar{\mathcal{X}}{(\bar p_k)}.$ 
% \david{did you mean p not q?}
Partition the matrices $\bar M_k, M_k$
and let $\var \in {\mathcal{X}}{(\bar p_k)}$ so that 
\begin{align}
    M_k \begin{bmatrix}
    \var \\ \eta_k
\end{bmatrix} = \big[ c_k ~|~ M_k^c\big] \begin{bNiceArray}{c}[margin]
    \var \\ \hline \eta_k
\end{bNiceArray} = c_k \var + M_k^c \eta_k.
\end{align}
Set $\bar \var = \var$. 
Choose a vector $\bar \eta_k$ such that $M_k^c \eta_k = \bar M_k^c \bar \eta_k$, which exists by (b).
Then,
\begin{align}
    \bar M_k \begin{bmatrix}
    \bar \var \\ \bar \eta_k
\end{bmatrix} = \big[ c_k ~|~ \bar M_k^c\big] \begin{bNiceArray}{c}[margin]
    \bar \var \\ \hline \bar \eta_k
\end{bNiceArray} = \underbrace{c_k\bar \var}_{c_k\var} + \underbrace{\bar M_k^c \bar \eta_k}_{M_k^c \eta_k}.
\end{align}
This implies that $\var \in \bar {\mathcal{X}}{(\bar p_k)}$ and thus ${\mathcal{X}}{(\bar p_k)}\ \subseteq \bar{\mathcal{X}}{(\bar p_k)}$.

(ii) $\bar {\mathcal{X}}{(\bar p_k)} \subseteq {\mathcal{X}}{(\bar p_k)}.$
We proceed via induction on $k$. 

\paragraph{\textbf{Base case} $k=\numlevels-1$} Let $\bar \var \in \bar {\mathcal{X}}{(\bar p_{\numlevels-1})}$. Recalling the right-hand-side vector $\bar p_{K-1}$ in \cref{eq:complete-kkt-quadratic-goop}, we write
\begin{align}
    \bar M_{\numlevels-1} \bar v_{\numlevels-1} = \begin{bmatrix}
    Q_{\numlevels-1} & \bm{0} & \hat Q_{\numlevels} & \hat H^\top \\
    \bm{0}       & \bm{0} & \hat H   & \bm{0}      \\
    Q_{\numlevels}     & \hat H^\top & \bm{0} & \bm{0}   \\
    H       & \bm{0} & \bm{0}  & \bm{0} 
\end{bmatrix} \begin{bmatrix}
        \bar \var \\
        \bar \lambda_K \\
        \bar \psi_{K-1} \\
        \bar \lambda_{K-1}
    \end{bmatrix} = \begin{bmatrix}
        -q_{K-1} \\ 
        \bm{0} \\
        -q_K \\
        b
    \end{bmatrix} = \bar p_{\numlevels-1}.
\end{align}
% \david{not sure i would be using $:=$ since this is not a definition}
Set $v_{\numlevels-1} = \bar v_{\numlevels-1}$ (so $\var = \bar \var$). 
Then, 
\begin{align}
    M_{\numlevels-1} \bar v_{\numlevels-1} = \begin{bmatrix}
    Q_{\numlevels-1} & \bm{0} & \hat Q_{\numlevels} & \hat H^\top \\
    \bm{0}       & \bm{0} & \bm{0}   & \bm{0}      \\
    Q_{\numlevels}     & \hat H^\top & \bm{0} & \bm{0}   \\
    H       & \bm{0} & \bm{0}  & \bm{0} 
\end{bmatrix} \begin{bmatrix}
        \bar \var \\
        \bar \lambda_K \\
        \bar \psi_{K-1} \\
        \bar \lambda_{K-1}
    \end{bmatrix} = \begin{bmatrix}
        -q_{K-1} \\ 
        \bm{0} \\
        -q_K \\
        b
    \end{bmatrix} = \bar p_{\numlevels-1}
\end{align}
so $\bar \var \in \mathcal{X}(\bar p_{\numlevels-1})$.

\paragraph{\textbf{Induction step}} Assume the following holds for $k \in [\numlevels-1]$
\begin{align}
    \bar M_k \bar v_k = \bar p_k \Rightarrow M_k \bar v_k = \bar p_k.
\end{align}
From \cref{eq:complete-kkt-quadratic-goop}, the first $n$ rows are
\begin{align}
    \label{eq:first-n-rows-theorem}
    Q_{k-1} \bar \var + \bar R_{k,[1:n,:]} \bar \eta_{k-1} = q_{k-1}. 
\end{align}
Set $v_{k-1} = \bar v_{k-1}$ (so $\var = \bar \var$) 
% \david{same comment as above}.
Now consider $M_{k-1} \bar v_{k-1}$. 
From \cref{eq:reduced-kkt-quadratic-goop} using \cref{eqn:R_{k+1},eq:first-n-rows-theorem},
\begin{equation}
    \begin{bNiceArray}[margin]{cc|c}
    Q_{k-1} & \bm{0} & \Block{2-1}{R_{k}}  \\
           \bm{0} & \bm{0} &                   \\
    \hline
    \Block{2-2}{ M_{k}} &  & \Block{2-1}{\bm{0}}   \\
    &  &    
    \CodeAfter
    \UnderBrace[ , yshift=4pt]{4-1}{4-3}{M_{k-1}}
    \end{bNiceArray} 
        \begin{bNiceArray}[margin]{c}
        \bar \var \\
        \bar \eta_{\numlevels:k} \\[1ex]
        \hline\\[-2ex]
        \bar \eta_{k-1}
    \CodeAfter
    \UnderBrace[ , yshift=10pt]{3-1}{3-1}{\bar v_{k-1}}
    \end{bNiceArray} =         
    \begin{bNiceArray}[margin]{c}
        q_{k-1} \\
        \bm{0} \\
        \hline \\[-2ex]
         q_{k}  
    \CodeAfter
    \UnderBrace[ , yshift=4pt]{4-1}{4-1}{\bar p_{k-1}}
    \end{bNiceArray},
\bigskip
\medskip
\end{equation}
where the last row follows from the induction hypothesis and $v_k = [\var^\top, (\eta_{K:k})^\top]^\top$. 
Thus, $\bar \var \in {\mathcal{X}}{(\bar p_k)}$.
Combining (i) and (ii) gives $ \bar{\mathcal{X}}{(\bar p_k)} = {\mathcal{X}}{(\bar p_k)}$ for all $ k\in[\numlevels-1]$.
\end{proof}

\begin{proof}[Proof of \cref{thm:ng-og-primal-equivalence-inequality}]

The proof proceeds in three steps. %(\cref{fig:equivalence_chain}).
We first fix the active inequality constraints at the candidate primal solution \(z^*\) and show that, at \(z^*\), the corresponding inequality-constrained problem can be reduced to an equality-constrained problem on the active set.
We then apply the equality-constrained primal equivalence result from \cref{thm:ng-og-primal-equivalence}.
Finally, we reconstruct complete-system inequality multipliers from the equality-constrained multipliers using strict complementarity. 

\paragraph{\textbf{Step 1}}
Local active-set reduction. Fix a candidate local solution \(z^*\). 
% At \(z^*\), for each player $i$, we compare the original inequality-constrained problem with the equality-constrained problem obtained by replacing the active inequalities with equalities $G^i_{\mathcal{A}^i}\var = g_{\mathcal{A}^i}^i$.
% The feasible set \cref{eqn:Z^i-feasible-set} of the corresponding equality-constrained problem is then defined by $H^i\var = h^i$ and $G^i_{\mathcal{A}^i}\var = g^i_{\mathcal{A}^i}$. 
Because the active set is fixed at \(z^*\), the solution set of the inequality-constrained problem \cref{eqn:goop-K-level} for player $i$ is identical to the solution set of the equality-constrained problem obtained by replacing $G^i\var \geq g^i$ with $G^i_{\mathcal{A}^i}\var = g_{\mathcal{A}^i}^i$ and discarding the constraints that are inactive (slack) at the optimum. 
Let $\mathcal{P}_{\mathcal{I}}$ denote inequality-constrained optimization problem \cref{eqn:goop-K-level} and $\mathcal{P}_{\mathcal{E}}$ denote the corresponding equality-constrained problem.

\paragraph{\textbf{Step 2}} The equality-constrained problem $\mathcal{P}_{\mathcal{E}}$ shares the same primal solution set as its reduced KKT system.
This follows directly from \cref{thm:ng-og-primal-equivalence} applied to $\mathcal{P}_{\mathcal{E}}$, given that $\tilde H$ satisfies the required full row rank assumption.

\paragraph{\textbf{Step 3}} The reduced KKT system for the equality-constrained problem ($\mathcal{E}$) and the reduced KKT system for the inequality-constrained problem ($\mathcal{I}$) share the same set of primal solutions. Consider the stationarity conditions for level $k$ for both systems.    

\textbf{Reduced KKT System for $\mathcal{E}$:} 
Let $\mu^{\mathcal{E}} \coloneqq (\var^{\mathcal{E}},  \lambda^{\mathcal{E}}, \psi^{\mathcal{E}})$. 
% Variables are $\mu^{i,\mathcal{E}} \coloneqq (\var^{i,\mathcal{E}},  \lambda^{i,\mathcal{E}}, \psi^{i,\mathcal{E}})$. 
For $i \in [\numplayers]$ and $k \in[\numlevels^i]$:
\begin{equation}
\label{eq:proof-step3-E}
\sum_{j=1}^{\numplayers}Q^{i,i,j}_{k}\var^{j,\mathcal{E}} - \sum_{\ell=1}^{\numlevels^i-k} \Big[Q^{i,i,i}_{\numlevels^i-\ell+1}\psi_{k,\ell}^{i,\mathcal{E}} \Big] - H_i^{i\top}\lambda_{k,1}^{i,\mathcal{E}} - G^{i\top}_{i,\mathcal{A}^i}\lambda_{k,2}^{i,\mathcal{E}} = -q^i_{k},
\end{equation}
subject to $H^i\var^\mathcal{E} =   h^i$ and $G^i_{\mathcal{A}^i}\var^\mathcal{E} =   g^i_{\mathcal{A}^i}$.

\textbf{Reduced KKT System for $\mathcal{I}$:} Let $\mu^{\mathcal{I}} \coloneqq (z^{\mathcal{I}}, \lambda^{\mathcal{I}},\psi^{\mathcal{I}}, \phi^{\mathcal{I}}, \gamma^{\mathcal{I}})$. For $i \in [\numplayers]$ and $k \in[\numlevels^i]$:
\begin{align}
\label{eq:proof-step3-I}
\sum_{j=1}^{\numplayers}Q^{i,i,j}_{k}\var^{j,\mathcal{I}} - \sum_{\ell=1}^{\numlevels^i-k} \Big[Q^{i,i,i}_{\numlevels^i-\ell+1}\psi_{k,\ell}^{i,\mathcal{I}} &+ G_{i,\mathcal{A}^i}^{i\top} \left( \phi_{k,\ell}^{i,\mathcal{I}} \odot \gamma_{\numlevels^i-\ell+1}^{i,\mathcal{I}}\right) \Big] \\ &- H_i^{i\top}\lambda_{k}^{i,\mathcal{I}} - G^{i\top}_{i,\mathcal{A}^i}\gamma_{k}^{i,\mathcal{I}} = -q^i_{k}, \notag
\end{align}
subject to $H^i\var^\mathcal{I} =   h^i$, $G^i_{\mathcal{A}^i} \var^\mathcal{I} \geq  g^i_{\mathcal{A}^i}$, $\gamma^{i,\mathcal{I}}_{1:\numlevels^i} \ge \bm{0}$, and complementarity conditions.

\textbf{3.1} If $\mu^{\mathcal{I}}$ satisfies ($\mathcal{I}$), then there exists $\mu^{\mathcal{E}}$ satisfying ($\mathcal{E}$) such that $\var^{\mathcal{E}} = \var^{\mathcal{I}}$.
Assume $\mu^{\mathcal{I}} \coloneqq (z^{\mathcal{I}},\lambda^{\mathcal{I}},\psi^{\mathcal{I}}, \phi^{\mathcal{I}}, \gamma^{\mathcal{I}})$ satisfies ($\mathcal{I}$).
Define $(\mathcal{E})$ variables for $i \in[\numplayers]$:
$
    \var^{i,\mathcal{E}} \coloneqq \var^{i,\mathcal{I}}, \;
    \lambda_{k,1}^{i,\mathcal{E}} \coloneqq \lambda_k^{i,\mathcal{I}},
    \;
    \psi^{i,\mathcal{E}} \coloneqq \psi^{i,\mathcal{I}}, \;
    \lambda_{k,2}^{i,\mathcal{E}} \coloneqq \gamma_k^{i,\mathcal{I}} + \sum_{\ell=1}^{\numlevels^i-k} \phi_{k,\ell}^{i,\mathcal{I}} \odot \gamma_{\numlevels^i-\ell+1}^{i,\mathcal{I}}.
$
Substituting these definitions into \cref{eq:proof-step3-I} yields stationarity in \cref{eq:proof-step3-E}.
Next, verify feasibility. We have $H^i\var^\mathcal{I} =   h^i.$
By the construction of the active set \(\mathcal A^i\), any solution to ($\mathcal{I}$) must satisfy $G^i_{\mathcal{A}^i}\var^\mathcal{I} =   g^i_{\mathcal{A}^i}$. Thus $\var^\mathcal{E}$ satisfies the equality constraints of ($\mathcal{E}$).

\textbf{3.2} If $\mu^\mathcal{E}$ satisfies ($\mathcal{E}$), then there exists $\mu^\mathcal{I}$ satisfying ($\mathcal{I}$) such that $\var^\mathcal{I} = \var^\mathcal{E}$.
Assume $\mu^\mathcal{E} \coloneqq (\var^\mathcal{E},\psi^\mathcal{E},\lambda^\mathcal{E})$ satisfies ($\mathcal{E}$). We construct a solution for ($\mathcal{I}$) by induction on the level index $k$, proceeding backwards from $\numlevels^i$ to $1$.
First, set shared variables for $i\in[\numplayers]$:
    $
        \var^{i,\mathcal{I}} \coloneqq \var^{i,\mathcal{E}}, 
        \;
        \lambda_k^{i,\mathcal{I}} \coloneqq \lambda_{k,1}^{i,\mathcal{E}},
        \;
        \psi^{i,\mathcal{I}} \coloneqq \psi^{i,\mathcal{E}}.
    $
    Since $H^i\var^\mathcal{E}=  h^i$ and $G^i_{\mathcal{A}^i}\var^\mathcal{E}=  g^i_{\mathcal{A}^i}$, $\var^\mathcal{I}$ satisfies primal feasibility with respect to the innermost constraints.

\textbf{3.2.1 \emph{Base Case}} ($k=\numlevels^i$).
Fix a player $i$. Define $\gamma_{\numlevels^i}^{i,\mathcal{I}} = \lambda_{\numlevels^i}^{i,\mathcal{E}}$, where \(\lambda_{\numlevels^i}^{i,\mathcal E}\) denotes the multipliers associated with the equality constraints obtained from the active inequalities at level \(\numlevels^i\). By strict complementarity at the candidate solution, these multipliers are strictly positive. Hence $\gamma_{\numlevels^i}^{i,\mathcal I} > \bm{0}$.
% so the nonnegativity and complementarity conditions at level \(K\) are satisfied.

\textbf{3.2.2 \emph{Induction Step}}
Assume that, for all levels ${m^i} > k$, we have determined $\gamma_{m^i}^{i,\mathcal{I}} \ge \bm{0}$ and $\phi_{m^i,\cdot}^{i,\mathcal{I}}$. We now determine $\gamma_k^{i,\mathcal{I}} \ge \bm{0}$ and $\phi_{k,\cdot}^{i,\mathcal{I}}$ for \cref{eq:proof-step3-I}.
From \cref{eq:proof-step3-I} and \cref{eq:proof-step3-E}, matching the terms requires
$
    \lambda_{k,2}^{i,\mathcal{E}} = \gamma_k^{i,\mathcal{I}} + \sum_{\ell=1}^{\numlevels^i-k} \phi_{k,\ell}^{i,\mathcal{I}} \odot \gamma_{\numlevels^i-\ell+1}^{i,\mathcal{I}}.
$
To satisfy this, it suffices to choose one lower-level term and set the remaining \(\phi_{k,\ell}^{i,\mathcal I}\) equal to zero. We use the term corresponding to \(\ell=1\), which involves \(\gamma_{\numlevels^i}^{i,\mathcal I}\), already known to be strictly positive from the base case.
Set $\phi_{k,\ell}^{i,\mathcal{I}} = \bm{0}$ for all $\ell > 1$.
Then, we have
$
    \lambda_{k,2}^{i,\mathcal{E}} = \gamma_k^{i,\mathcal{I}} + \phi_{k,1}^{i,\mathcal{I}} \odot \gamma_{\numlevels^i}^{i,\mathcal{I}}.
$
For each component $r \in \{1, \dots, \inequaldim^i\}$:
\begin{itemize}
    \item If $(\lambda_{k,2}^{i,\mathcal{E}})_r \geq 0$: 
        Set $(\gamma_k^{i,\mathcal{I}})_r \coloneqq (\lambda_{k,2}^{i,\mathcal{E}})_r$ and $(\phi_{k,1}^{i,\mathcal{I}})_r \coloneqq 0$.
    \item If $(\lambda_{k,2}^{i,\mathcal{E}})_r < 0$: 
        Set $(\gamma_k^{i,\mathcal{I}})_r \coloneqq 0$. We must solve $(\lambda_{k,2}^{i,\mathcal{E}})_r = (\phi_{k,1}^{i,\mathcal{I}})_r \cdot (\gamma_{\numlevels^i}^{i,\mathcal{I}})_r$. 
        Since $(\gamma_{\numlevels^i}^{i,\mathcal{I}})_r > 0$ (from Base Case 3.2.1), we can uniquely define:
        $
        (\phi_{k,1}^{i,\mathcal{I}})_r \coloneqq \frac{(\lambda_{k,2}^{i,\mathcal{E}})_r}{(\gamma_{\numlevels^i}^{i,\mathcal{I}})_r}.
        $
\end{itemize} This construction ensures $\gamma_k^{i,\mathcal{I}} \geq \bm{0}$, satisfying nonnegativity.
The complementarity condition $(G^i_{\mathcal{A}^i}\var^{i,\mathcal{E}} - g_{\mathcal{A}^i}) \odot \gamma_k^{i,\mathcal{I}} = \bm{0}$ holds naturally, because primal feasibility gives $G^i_{\mathcal{A}^i}\var^{i,\mathcal{E}} = g_{\mathcal{A}^i}$. 
Repeat steps 3.2.1 and 3.2.2 for all players $i \in [\numplayers]$.
Thus, we obtain $\mu^\mathcal{I}$ satisfying ($\mathcal{I}$) such that $\var^\mathcal{I} = \var^\mathcal{E}$.
Combining Steps 1, 2, and 3 proves the theorem.
\end{proof}

\begin{proof}[Proof of Theorem~\ref{thm:quadratic-convergence}]
    Let $y_\rho\in\mathcal{S}_{y_\rho}$. We begin the proof by characterizing how the pseudoinverse $(\nabla \mathcal{K}_\rho(y_\rho))^+$ affects the Newton update. 
    Since $\mathcal{K}_\rho(y_\rho)$ is in the column space of $\nabla \mathcal{K}_\rho(y_\rho)$ for all $y_\rho\in\mathcal{S}_{y_\rho}$, 
    % we have for all vectors $u$ in the left null space of $\nabla \mathcal{K}_\rho(y_\rho)$, and by the fundamental theorem of calculus, 
    % % \david{should mention what $y^*$ is}
    % \begin{equation}
    % \begin{aligned}
    %     u^\top \mathcal{K}_\rho(y_\rho) & = u^\top \underbrace{\mathcal{K}_\rho(y^*_\rho)}_{=0} +  u^\top \int_{0}^1 \nabla \mathcal{K}_\rho(y^*_\rho + t(y_\rho - y^*_\rho))^\top (y_\rho - y^*_\rho) dt = 0.
    %     % & = 0 + \int_0^1 u^\top \nabla \mathcal{K}_\rho(y^*_\rho + t(y_\rho - y^*_\rho)) dt \cdot (y_\rho - y^*_\rho) = 0
    % \end{aligned}
    % \end{equation}
    % Thus, the vector $\mathcal{K}_\rho(y_\rho)$ is orthogonal to $\textrm{ker}(\nabla \mathcal{K}_\rho (y_\rho)^\top)$ and, thus, lies in the column space of $\nabla \mathcal{K}_\rho(y_\rho)$, and 
    the variable $\Delta y_\rho= (\nabla \mathcal{K}_\rho(y_\rho))^+ ($ $-\mathcal{K}_\rho(y_\rho))$ satisfies the equation $        \nabla \mathcal{K}_\rho (y_\rho) \Delta y_\rho = -\mathcal{K}_\rho(y_\rho)$, 
    % \begin{equation}\label{eq: newton update under pseudo inverse}
    %     \nabla \mathcal{K}_\rho (y) \Delta y = -\mathcal{K}_\rho(y). 
    % \end{equation}
    % Since $\nabla \mathcal{K}_\rho(y)$ has constant row rank for all $ y\in \mathcal{S}_y$, the projector onto the column space of $\nabla \mathcal{K}_\rho(y)$, 
    % \begin{equation}
    %     P(y):= \nabla \mathcal{K}_\rho(y) (\nabla \mathcal{K}_\rho(y))^+
    % \end{equation}
    % varies continuously and has constant rank throughout $\mathcal{S}_y$. Since a solution $y^* \in \mathcal{S}_y$ with $\mathcal{K}_\rho(y^*)=0$ exists, $P(y^*)\mathcal{K}_\rho(\mathbf{y^*})=\mathcal{K}_\rho(y^*)$ holds trivially. The constant rank condition ensures $P(y)$ is a smooth projector over $\mathcal{S}_y$. By continuity, for all $y\in\mathcal{S}_y$, we have 
    % \begin{equation}
    %     P(y) \mathcal{K}_\rho(y) = \mathcal{K}_\rho(y)
    % \end{equation}
    % which implies that $\mathcal{K}_\rho(y)$ lies in the column space of $\nabla \mathcal{K}_\rho(y)$ throughout $\mathcal{S}_y$. This also implies that 
    % \begin{equation}\label{eq: newton update under pseudo inverse}
    %     \nabla \mathcal{K}_\rho(y) \Delta y = -P(y) \mathcal{K}_\rho(y) = -\mathcal{K}_\rho(y)
    % \end{equation}
    where the complementarity rows of this equation give for each $i^\mathrm{th}$ element of the vectors of $\{s_\rho,\gamma_\rho, \Delta s_\rho, \Delta \gamma_\rho\}$:
    \begin{equation}\label{eq:convergence proof complementarity conditions}
        s_{\rho,i} \Delta \gamma_{\rho,i} + \gamma_{\rho,i} \Delta s_{\rho,i} = \rho - \gamma_{\rho,i} s_{\rho,i}.
    \end{equation}
    
    We then proceed to prove the linear convergence property in \eqref{eq: linear convergence}. It suffices to show the existence of a stepsize $\alpha\in(0,1]$ satisfying \eqref{eq: linear convergence} by showing that $\hat{\alpha}$ is a stepsize rendering $ \|\mathcal{K}_\rho(y_\rho+\hat{\alpha}\Delta y_
\rho)\|_2\le (1-\frac{\hat{\alpha}}{2}) \|\mathcal{K}_\rho(y_
\rho)\|_2 $. Following the fundamental theorem of calculus again, we have, for all stepsize $\alpha\in(0,1]$,
    \begin{equation}
    \begin{aligned}
        &\|\mathcal{K}_\rho(y_
\rho+ \alpha \Delta y_\rho)\|_2=\left\| \mathcal{K}_\rho(y_\rho) + \int_0^1 \nabla \mathcal{K}_\rho(y_\rho + \tau \alpha \Delta y_\rho )\alpha \Delta y_\rho d\tau  \right\|_2\\
        &\le \| \mathcal{K}_\rho(y_\rho) + \alpha \nabla K_\rho( y_\rho) \Delta y_\rho\|_2 + \left\| \int_0^1 (\nabla \mathcal{K}_\rho(y_\rho + \tau \alpha \Delta y_\rho) - \nabla \mathcal{K}_\rho(y_\rho))\alpha \Delta y_\rho d\tau \right\|_2\\
        & \le \| \mathcal{K}_\rho(y_\rho) - \alpha \mathcal{K}_\rho(y_\rho) \|_2 + \|\alpha \Delta y_\rho\|_2  \cdot \int_0^1 C \|\alpha \tau \Delta y_\rho \|_2 d\tau \\
        % & \le (1-\alpha) \|\mathcal{K}_\rho(y_\rho)\|_2  + \|\alpha \Delta y_\rho\|_2  \cdot \int_0^1 C \|\alpha \tau \Delta y_\rho \|_2 d\tau\\
        & \le (1-\alpha) \|\mathcal{K}_\rho(y_\rho)\|_2 + \frac{1}{2} \alpha^2 CD^2 \|\mathcal{K}_\rho(y_\rho)\|_2^2.
    \end{aligned}\label{eq:basic inequality}
    \end{equation}
    Let $y_\rho\in\mathcal{S}_{y_\rho}$. Since $\hat{\alpha}\le \frac{1}{CD^2L}$ and $\|\mathcal{K}_\rho(y_\rho)\|_2\le L$, we have  $\frac{C\hat{\alpha}D^2}{2}\|\mathcal{K}_\rho(y_\rho)\|_2\le \frac{1}{2}$, and 
    \begin{equation}
    \begin{aligned}
        \|\mathcal{K}_\rho(y_\rho+\hat{\alpha} \Delta y_\rho)\|_2  \le (1-\hat{\alpha})\| \mathcal{K}_\rho(y_\rho) \|_2 + \frac{1}{2} \hat{\alpha} \|\mathcal{K}_\rho(y_\rho)\|_2\le \left(1-\frac{1}{2}\hat{\alpha}\right) \|\mathcal{K}_\rho(y_\rho)\|_2 \le L.
    \end{aligned}
    \end{equation}
    which establishes \eqref{eq: linear convergence}. To complete the proof of $y_\rho+\hat{\alpha}\Delta y_\rho\in\mathcal{S}_{y_\rho}$, we need to show $(s_\rho+\hat{\alpha}\Delta s_\rho)\odot(\gamma_\rho +\hat{\alpha}\Delta \gamma_\rho)\ge \epsilon \mathbf{1}$. For every $i^\mathrm{th}$ element of the vectors $\{s_\rho, \gamma_\rho, \Delta s_\rho, \Delta \gamma_\rho\}$, using the complementarity rows in \eqref{eq:convergence proof complementarity conditions}
    and substituting the Newton's update,
    \begin{equation}
        (\gamma_{\rho,i} + \hat{\alpha}\Delta \gamma_{\rho,i}) (s_{\rho,i} + \hat{\alpha}\Delta s_{\rho,i}) = (1-\hat{\alpha})\gamma_{\rho,i} s_{\rho,i} + \hat{\alpha} \rho + \hat{\alpha}^2\Delta \gamma_{\rho,i} \Delta s_{\rho,i}.
    \end{equation}
    From $\Delta y_\rho = -(\nabla \mathcal{K}_\rho(y_\rho))^+\mathcal{K}_\rho(y_\rho)$, we have $\|\Delta y_\rho\|_2 \le D \|\mathcal{K}_\rho(y_\rho)\|_2\le D L$, and $\|\Delta \gamma_{\rho,i} \Delta s_{\rho,i}\|_2$ $ \le \|\Delta y_\rho\|_2^2 \le D^2 L^2$. Since $\gamma_{\rho,i} s_{\rho,i} \ge \epsilon$ and $\hat{\alpha}\le \frac{\rho-\epsilon}{D^2L^2}$, we have
    \begin{equation}
        (\gamma_{\rho,i} + \hat{\alpha} \Delta \gamma_{\rho,i})(s_{\rho,i} + \hat{\alpha}\Delta s_{\rho,i}) \ge \epsilon + \hat{\alpha}\left( (\rho-\epsilon) - \hat{\alpha} D^2 L^2 \right) \ge \epsilon
    \end{equation}
    which completes the proof of $y_\rho+\hat{\alpha}\Delta y_\rho \in \mathcal{S}_{y_\rho}$. 
    
    In what follows, we show the quadratic convergence of our PDIP method when $\|\mathcal{K}_\rho(y_\rho)\|_2\le \min\{\frac{2}{CD^2}, \frac{\sqrt{\rho-\epsilon}}{D}\}$. From \eqref{eq:basic inequality}, we have
    \begin{equation}
        \mathcal{K}_\rho(y_\rho+\Delta y_\rho) \le \frac{1}{2}CD^2 \|\mathcal{K}_\rho(y_\rho)\|_2^2 \le \frac{1}{2}CD^2 \frac{2}{CD^2} \|\mathcal{K}_\rho(y_\rho)\|_2 = \|\mathcal{K}_\rho(y_\rho)\|_2 \le L.
    \end{equation}
    Moreover, since $\|\Delta s_{\rho,i} \Delta \gamma_{\rho,i}\|_2 \le D^2 \|\mathcal{K}_\rho(y_\rho)\|_2^2$, we have
    \begin{equation}
        (\gamma_{\rho,i} + \Delta \gamma_{\rho,i})(s_{\rho,i} + \Delta s_{\rho,i}) \ge \epsilon + (\rho - \epsilon) - D^2 \|\mathcal{K}_\rho(y_\rho)\|_2^2 \ge \epsilon
    \end{equation}
    and therefore $y_\rho+\Delta y_\rho\in\mathcal{S}_{y_\rho}$. 
    
    Observing $ \frac{CD^2}{2} \|\mathcal{K}_\rho(y_\rho+\Delta y_\rho)\|_2 \le \left( \frac{CD^2}{2} \|\mathcal{K}_\rho(y_\rho)\|_2 \right)^2 \le 1 $, we have $\|\mathcal{K}_\rho(y_\rho+\Delta y_\rho)\|_2\le \min\{\frac{2}{CD^2}, \frac{\sqrt{\rho-\epsilon}}{D}\}$. By induction, this condition holds at all iterates. 
    % This completes the proof. 
\end{proof}

\begin{proof}[Proof of Theorem~\ref{thm:central path}]
    Let $\mathbf{1}_\rho$ be defined as a binary vector, where we have ones for the entries corresponding to the complementarity slackness condition. 
    
    Since $\mathcal{K}_\rho(y_\rho) = \mathcal{K}_0(y_\rho) - \rho \cdot \mathbf{1}_\rho$, evaluating at $y_\rho^*$, we have 
    $\mathcal{K}_0(y_\rho^*) = \rho \mathbf{1}_\rho$. 
    By the fundamental theorem of calculus, we have
    \begin{equation}\label{eq:fundamental theorem of calculas of central path}
        \mathcal{K}_0(y_\rho^*) - \mathcal{K}_0(y_0^*) = \underbrace{\int_0^1 \nabla_y \mathcal{K}_0(y_0^* + t(y_\rho^* - y_0^*))dt}_{\bar{J}} \cdot (y_\rho^* - y_0^*).
    \end{equation}
    Since $\nabla_y \mathcal{K}_0$ has constant row rank in a neighborhood of $y_0^*$, for $\rho$ sufficiently small $y_\rho^* $ lies in this neighborhood and the right hand side of \eqref{eq:fundamental theorem of calculas of central path} has the same constant row rank, with $\| \bar{J}^+ \|_2 \le D_0$ by continuity of the pseudoinverse under constant rank. 
    %Combining \eqref{eq: residual central path} and \eqref{eq:fundamental theorem of calculas of central path}, 
    Multiplying both sides of \eqref{eq:fundamental theorem of calculas of central path} by $\bar{J}^+$, we have
    \begin{equation}
        \| y_\rho^* - y_0^* \|_2 \le \|\bar{J}^+\|_2\|\rho \mathbf{1}_\rho\|_2 \le D_0 \sqrt{N_c} \rho
    \end{equation}
\end{proof}

\bibliographystyle{siamplain}
\bibliography{references.bib}

\end{document}